\documentclass[11pt,leqno]{article}
\usepackage[margin=1.6in]{geometry}

\usepackage{amsmath,amssymb,amsthm}

\newtheorem{theorem}{Theorem}
\newtheorem{lemma}{Lemma}
\newtheorem{proposition}{Proposition}
\newtheorem{corollary}{Corollary}
\theoremstyle{definition}

\theoremstyle{remark}
\newtheorem{claim}{Claim}

\numberwithin{equation}{section}

\DeclareMathOperator{\sol}{\mathsf{sol}}
\DeclareMathOperator{\UCT}{UCT}
\let\uct\UCT

\DeclareMathOperator{\loss}{loss}
\newcommand{\CSP}[1]{\ensuremath{\operatorname{CSP}(#1)}}
\newcommand{\tuple}[1]{ \overline{#1} }

\renewcommand{\vec}[1]{ {\mathbf{#1}} }
\newcommand{\inst}[1]{ {\mathcal{#1}} }

\newcommand{\IPQ}[1]{\text{$\text{(IPQ)}_{#1}$}}

\newcommand{\tuct}{\mathcal{T}}

\DeclareMathOperator{\csp}{CSP}
\DeclareMathOperator{\pr}{pr}
\let\eps\varepsilon \let\epsilon\eps
\newcommand{\calI}{\inst I}
\newcommand{\calC}{{\mathcal C}}
\newcommand{\zd}{,\ldots,}
\newcommand{\bx}{\mathbf{x}}
\newcommand{\by}{\mathbf{y}}

\newcommand{\bu}{\mathbf{u}}
\newcommand{\bv}{\mathbf{v}}

\newcommand{\vprod}[2]{#1  #2}
\newcommand{\vproddot}[2]{#1 \cdot #2}
\newcommand{\Exp}{\mathbb{E}}
\newcommand{\Prob}[1]{\Pr \brc{#1 }}

\newcommand{\VV}[1]{{\cal V}_{#1}}

\newcommand{\brc}[1]{\left(#1\right)}

%%
% TOPMATTER
%

\usepackage[
  pdftitle={Robust algorithms with polynomial loss for near-unanimity CSPs},
  pdfauthor={Victor Dalmau; Marcin Kozik; Andrei Krokhin; Konstantin Makarychev; Yury Makarychev; Jakub Oprsal},
  hidelinks]{hyperref}

\title{Robust algorithms with polynomial loss for near-unanimity CSPs%
\thanks{Marcin Kozik and Jakub Opr\v{s}al were partially supported by the
  National Science Centre Poland under grant no.\ UMO-2014/13/B/ST6/01812;
  Jakub Opr\v{s}al has also received funding from the European Research Council
  (Grant Agreement no.\ 681988, CSP-Infinity).
  Yury Makarychev was partially supported by NSF awards CAREER CCF-1150062 and
  IIS-1302662.
  V\'{\i}ctor Dalmau was partially supported by MINECO under grant
  TIN2016-76573-C2-1-P and Maria de Maeztu Units of Excellence programme
  MDM-2015-0502.
  A~preliminary version of this paper appeared in SODA 2017.}}

\author{
  V\'{\i}ctor Dalmau\\
    University Pompeu Fabra\and
  Marcin Kozik\\
    Jagiellonian University\and
  Andrei Krokhin\\
    Durham University\and
  Konstantin Makarychev\\
    Northwestern University\and
  Yury Makarychev\\
    TTIC\and
  Jakub Opr\v{s}al\\
    TU Dresden}

\date{}

%%
% DOCUMENT
%

\begin{document}

  \maketitle

  \begin{abstract} 
    An instance of the Constraint Satisfaction Problem (CSP) is given by a
    family of constraints on overlapping sets of variables, and the goal is to
    assign values from a fixed domain to the variables so that all constraints
    are satisfied. In the optimization version, the goal is to maximize the
    number of satisfied constraints. An approximation algorithm for CSP is
    called robust if it outputs an assignment satisfying a
    $(1-g(\eps))$-fraction of constraints on any $(1-\eps)$-satisfiable
    instance, where the loss function $g$ is such that $g(\eps)\rightarrow 0$ as
    $\eps\rightarrow 0$.

    We study how the robust approximability of CSPs depends on the set of
    constraint relations allowed in instances, the so-called constraint
    language.  All constraint languages admitting a robust polynomial-time
    algorithm (with some $g$) have been characterised by Barto and Kozik, with
    the general bound on the loss $g$ being doubly exponential, specifically
    $g(\eps)=O((\log\log(1/\eps))/\log(1/\eps))$. It is natural to ask when a
    better loss can be achieved: in particular, polynomial loss
    $g(\eps)=O(\eps^{1/k})$ for some constant $k$. In this paper, we consider
    CSPs with a constraint language having a near-unanimity polymorphism. This 
    general condition almost matches a known necessary condition for 
    having a robust algorithm with polynomial loss. We
    give two randomized robust algorithms with polynomial loss for such CSPs:
    one works for any near-unanimity polymorphism and the parameter $k$ in the
    loss depends on the size of the domain and the arity of the relations in
    $\Gamma$, while the other works for a special ternary near-unanimity
    operation called dual discriminator with $k=2$ for any domain size.  In the
    latter case, the CSP is a common generalisation of {\sc Unique Games} with a
    fixed domain and {\sc 2-Sat}.  In the former case, we use the algebraic
    approach to the CSP.  Both cases use the standard semidefinite programming
    relaxation for CSP.
  \end{abstract}

  \section{Introduction}\label{sec:intro}

    The constraint satisfaction problem (CSP) provides a framework in which it
    is possible to express, in a natural way, many combinatorial problems
    encountered in computer science and
    AI~\cite{Cohen06:handbook,Creignou01:book,Feder98:monotone}. An instance of
    the CSP consists of a set of variables, a domain of values, and a set of
    constraints on combinations of values that can  be taken by certain subsets
    of variables. The basic aim is then to find an assignment of values to the
    variables that satisfies the constraints (decision version) or that
    satisfies the maximum number of constraints (optimization version).  

    Since CSP-related algorithmic tasks are usually hard in full generality, a
    major line of research in CSP studies how possible algorithmic solutions
    depend on the set of relations allowed to specify constraints, the so-called
    {\em constraint language}, (see,
    e.g.~\cite{Bulatov??:classifying,Cohen06:handbook,Creignou01:book,Feder98:monotone,Krokhin17:book}).
    The constraint language is denoted by $\Gamma$ and the corresponding CSP by
    $\CSP\Gamma$.  For example, when one is interested in polynomial-time
    solvability (to optimality, for the optimization case), the ultimate sort of
    results are dichotomy
    results~\cite{Bulatov17:dichotomy,Bulatov??:classifying,Feder98:monotone,Kolmogorov17:complexity,Thapper16:finite,Zhuk17:dichotomy},
    pioneered by~\cite{Schaefer78:complexity}, which characterise the tractable
    restrictions and show that the rest are NP-hard. Classifications with
    respect to other complexity classes or specific algorithms are also of
    interest
    (e.g.~\cite{Barto14:jacm,Barto12:NU,Kolmogorov15:power,Larose09:universal}).
    When approximating (optimization) CSPs, the goal is to improve, as much as
    possible, the quality of approximation that can be achieved in polynomial
    time, see e.g.\ surveys~\cite{Khot10:UGCsurvey,MM17:cspsurvey}.
    Throughout the paper we assume that P$\ne$NP.

    The study of {\em almost satisfiable} CSP instances features prominently in
    the approximability literature.  On the hardness side, the notion of
    approximation resistance (which, intuitively, means that a problem cannot be
    approximated better than by just picking a random assignment, even on almost
    satisfiable instances) was much studied recently,
    e.g.~\cite{Austrin13:usefulness,Chan13:resist,Hastad14:maxnot2,Khot14:strong}.
    Many exciting developments in approximability in the last decade were driven
    by the {\em Unique Games Conjecture} (UGC) of Khot, see
    survey~\cite{Khot10:UGCsurvey}.  The UGC states that it is NP-hard to tell
    almost satisfiable instances of $\CSP\Gamma$ from those where only a small
    fraction of constraints can be satisfied, where $\Gamma$ is the constraint
    language consisting of all graphs of permutations over a large enough
    domain.  This conjecture (if true) is known to imply optimal
    inapproximability results for many classical optimization
    problems~\cite{Khot10:UGCsurvey}. Moreover, if the UGC is true then a simple
    algorithm based on semidefinite programming (SDP) provides the best possible
    approximation for all optimization problems
    $\CSP\Gamma$~\cite{Prasad08:optimal}, though the exact quality of this
    approximation is unknown.

    On the positive side, Zwick~\cite{Zwick98:finding} initiated the systematic
    study of approximation algorithms which, given an almost satisfiable
    instance, find an almost satisfying assignment. Formally, call a
    polynomial-time algorithm for CSP {\em robust} if, for every $\eps>0$ and
    every $(1-\eps)$-satisfiable instance (i.e., at most a $\eps$-fraction of
    constraints can be removed to make the instance satisfiable), it outputs a
    $(1-g(\eps))$-satisfying assignment (i.e., that fails to satisfy at most a
    $g(\eps)$-fraction of constraints). Here, the {\em loss} function $g$ must
    be such that $g(\eps)\rightarrow 0$ as $\eps\rightarrow 0$. Note that one
    can without loss of generality assume that $g(0)=0$, that is, a robust
    algorithm must return a satisfying assignment for any satisfiable instance.
    The running time of the algorithm should not depend on $\eps$ (which is
    unknown when the algorithm is run). Which problems $\CSP\Gamma$ admit robust
    algorithms? When such algorithms exist, how does the best possible loss $g$
    depend on $\Gamma$?

    \subsection*{Related Work}
    In~\cite{Zwick98:finding}, Zwick gave an SDP-based robust algorithm with
    $g(\eps)=O(\eps^{1/3})$ for {\sc 2-Sat} and an LP-based robust algorithm with
    $g(\eps)=O(1/\log(1/\eps))$ for {\sc Horn $k$-Sat}.  Robust algorithms with
    $g(\eps)=O(\sqrt{\eps})$ were given in~\cite{Charikar09:near} for {\sc
    2-Sat}, and in~\cite{Charikar06:near} for {\sc Unique Games($q$)} where $q$
    denotes the size of the domain.  For {\sc Horn-2-Sat}, a robust algorithm
    with $g(\eps)=2\eps$ was given in~\cite{Guruswami12:tight}.  These bounds
    for {\sc Horn $k$-Sat} ($k\ge 3$), {\sc Horn $2$-Sat}, {\sc 2-Sat}, and {\sc
    Unique Games($q$)} are known to be
    optimal~\cite{Guruswami12:tight,Khot02:power,Khot07:optimal}, assuming the
    UGC.

    The algebraic approach to
    CSP~\cite{Bulatov??:classifying,Cohen06:handbook,Jeavons97:closure} has
    played a significant role in the recent massive progress in understanding
    the landscape of complexity of CSPs. The key to this approach is the notion
    of a {\em polymorphism}, which is an $n$-ary operation (on the domain) that
    preserves the constraint relations. Intuitively, a polymorphism provides a
    uniform way to combine $n$ solutions to a system of constraints (say, part
    of an instance) into a new solution by applying the operation
    component-wise.  The intention is that the new solution improves on the
    initial solutions in some problem-specific way.  Many classifications of
    CSPs with respect to some algorithmic property of interest begin by proving
    an algebraic classification stating that every constraint language either
    can simulate (in a specific way, via gadgets, -- see
    e.g.~\cite{Barto16:robust,Dalmau13:robust,Larose09:universal} for details)
    one of a few specific basic CSPs failing the property of interest or else
    has polymorphisms having certain nice properties (say, satisfying nice
    equations). Such polymorphisms are then used to obtain positive results,
    e.g.\ to design and analyze algorithms.  Getting such a positive result in
    full generality in one step is usually hard, so (typically) progress is made
    through a series of intermediate steps where the result is obtained for
    increasingly weaker algebraic conditions.  The algebraic approach was
    originally developed for the decision
    CSP~\cite{Bulatov??:classifying,Jeavons97:closure}, and it was adapted for
    robust satisfiability in~\cite{Dalmau13:robust}.

    One such algebraic classification result~\cite{Larose09:ability} gives an
    algebraic condition (referred to as $\mathrm{SD}(\wedge)$ or ``omitting
    types \textbf{1} and \textbf{2}'' --
    see~\cite{Barto14:jacm,Kozik15:maltsev,Larose09:ability} for details)
    equivalent to the \underline{in}ability to simulate {\sc 3-Lin-$p$} -- systems
    of linear equations over $Z_p$, $p$ prime, with 3 variable per equation.
    H\aa stad's celebrated result~\cite{Hastad01:optimal} implies that {\sc
    3-Lin-$p$} does not admit a robust algorithm (for any $g$). This result
    carries over to all constraint languages that can simulate (some) {\sc
    3-Lin-$p$}~\cite{Dalmau13:robust}. The remaining languages are precisely those
    that have the logico-combinatorial property of CSPs called ``{\em bounded
    width}'' or ``{\em bounded treewidth
    duality}''~\cite{Barto14:jacm,Bulatov09:bounded,Larose06:bounded}. This
    property says, roughly, that all unsatisfiable instances can be refuted via
    local propagation -- see~\cite{Bulatov08:duality} for a survey on dualities
    for CSP.  Barto and Kozik used $\mathrm{SD}(\wedge)$ in~\cite{Barto14:jacm},
    and then in~\cite{Barto16:robust} they used their techniques
    from~\cite{Barto14:jacm} to prove the Guruswami-Zhou
    conjecture~\cite{Guruswami12:tight} that each bounded width CSP admits a
    robust algorithm.

    The general bound on the loss in~\cite{Barto16:robust} is
    $g(\eps)=O((\log\log(1/\eps))/\log(1/\eps))$. It is natural to ask when a
    better loss can be achieved. In particular, the problems of characterizing
    CSPs where linear loss $g(\eps)=O(\eps)$ or polynomial loss
    $g(\eps)=O(\eps^{1/k})$ (for constant $k$) can be achieved have been posed
    in~\cite{Dalmau13:robust}. Partial results on these problems appeared
    in~\cite{Dalmau13:robust,Dalmau15:towards,Kun12:robust}. For the Boolean
    case, i.e., when the domain is $\{0,1\}$, the dependence of loss on $\Gamma$
    is fully classified in~\cite{Dalmau13:robust}.

    \subsection*{Our Contribution}
    We study CSPs that admit a robust algorithm with polynomial loss. As
    explained above, the bounded width property is necessary for admitting any
    robust algorithm. {\sc Horn 3-Sat} has bounded width, but does not admit a
    robust algorithm with polynomial loss (unless the UGC
    fails)~\cite{Guruswami12:tight}.  The algebraic condition that separates
    {\sc 3-Lin-$p$} and {\sc Horn 3-Sat} from the CSPs that can potentially be
    shown to admit a robust algorithm with polynomial loss is known as
    $\mathrm{SD}(\vee)$ or ``omitting types \textbf{1}, \textbf{2} and
    \textbf{5}''~\cite{Dalmau13:robust}, see Section~\ref{sec:algebra} for the
    description of $\mathrm{SD}(\vee)$ in terms of polymorphisms. The condition
    $\mathrm{SD}(\vee)$ is also a necessary condition for the
    logico-combinatorial property of CSPs called ``{\em bounded pathwidth
    duality}'' (which says, roughly, that all unsatisfiable instances can be
    refuted via local propagation in a linear fashion), and possibly a
    sufficient condition for it too~\cite{Larose09:universal}. It seems very hard 
    to obtain a robust algorithm with polynomial loss for every CSP satisfying $\mathrm{SD}(\vee)$
    all in one step.

    From the algebraic perspective, the most general natural condition that is
    (slightly) stronger than  $\mathrm{SD}(\vee)$ is the {\em near-unanimity
    (NU)} condition~\cite{Baker75:chinese-remainder}.  CSPs with a constraint
    language having an NU polymorphism received a lot of attention in the
    literature (e.g.~\cite{Feder98:monotone,Jeavons98:consist,Barto12:NU}).
    Bounded pathwidth duality for CSPs admitting an NU polymorphism was
    established in a series of
    papers~\cite{Dalmau05:linear,Dalmau08:majority,Barto12:NU},  and we use some
    ideas from~\cite{Dalmau08:majority,Barto12:NU} in this paper. 
    
    We prove that any CSP with a constraint language having an NU polymorphism
    admits a randomized robust algorithm with loss $O(\eps^{1/k})$, where $k$
    depends on the size of the domain. It is an open question whether this
    dependence on the size of the domain is necessary. We prove that, for the
    special case of a ternary NU polymorphism known as {\em dual discriminator}
    (the corresponding CSP is a common generalisation of {\sc Unique Games} with
    a fixed domain and {\sc 2-Sat}), we can always choose $k=2$. Like the vast majority of approximation algorithms for CSPs~\cite{MM17:cspsurvey}, our algorithms
    use the standard SDP relaxation.

    The algorithm for the general NU case follows the same general scheme as~\cite{Barto16:robust,Kun12:robust}:
    \begin{enumerate}
      \item Solve the LP/SDP relaxation for a $(1-\eps)$-satisfiable instance $\calI$.
      \item Use the LP/SDP solution to remove certain constraints in $\calI$ with
      total weight $O(g(\eps))$ (in our case, $O(\eps^{1/k})$) so that the
      remaining instance satisfies a certain consistency condition.
      \item Use the appropriate polymorphism (in our case, NU) to show
      that any instance of $\CSP\Gamma$ with this consistency condition is
      satisfiable.
    \end{enumerate}
    Steps 1 and 2 in this scheme can be applied to any CSP instance, and this is
    where essentially all work of the approximation algorithm happens.
    Polymorphisms are not used in the algorithm, they are used in step 3 only to
    prove the correctness. While the above general scheme is rather simple, 
    applying it is typically quite challenging. Obviously, step 2 prefers weaker  conditions
    (achievable by removing not too many constraints), while step 3 prefers
    stronger conditions (so that they can guarantee satisfiability), so reaching
    the balance between them is the main (and typically significant) technical challenge in any application
    of this scheme. Our algorithm is somewhat inspired
    by~\cite{Barto16:robust}, but it is also quite different from the algorithm there. 
    That algorithm is designed so
    that steps 1 and 2 establish a consistency condition that, in particular,
    includes the 1-minimality condition, and establishing 1-minimality alone
    requires removing constraints with total weight
    $O(1/\log{(1/\eps)})$~\cite{Guruswami12:tight}, unless UGC fails. Since our requirement on the loss function $g(\eps)$ is stricter, we have to design a different ``rounding'' procedure (which is usually the hardest part to analyse for most approximation algorithms).
As in~\cite{Barto16:robust}, our rounding is non-traditional, since a solution to the SDP relaxation is used to decide which constraints to violate, rather than to immediately assign values to the variables. 
 To show that our rounding gives the right dependency on $\eps$, we introduce a new consistency
    condition somewhat inspired by~\cite{Barto12:NU,Kozik16:circles}.  The proof
    that the new consistency condition satisfies the requirements of steps 2 and
    3 of the above scheme is one of the main technical contributions of our
    paper.

    \subsubsection*{Organization of the paper}

    After some preliminaries, we formulate the two main results of this paper in
    Section \ref{sec:main}. Section \ref{sec:SDP} then contains a~description of
    SDP relaxations that we will use further on. Sections \ref{sec:overview1}
    and \ref{sec:overview-thm2} contain the description of the algorithms for
    constraint languages compatible with NU polymorphism and dual discriminator,
    respectively; the following chapters prove the correctness of the two
    algorithms.

  \section{Preliminaries}

    \subsection{CSPs}

    Throughout the paper, let $D$ be a {\em fixed finite} set, sometimes called
    the {\em domain}.  An {\em instance} of the $\csp$ is a pair
    $\calI=(V,{\mathcal C})$ with $V$ a finite set of {\em variables} and
    ${\mathcal C}$ is a finite set of constraints. Each constraint is a pair
    $(\overline{x},R)$ where $\overline{x}$ is a tuple of variables (say, of
    length $r>0$), called the {\em scope} of $C$ and $R$ an $r$-ary relation on
    $D$ called the {\em constraint relation} of $C$. The arity of a constraint
    is defined to be the arity of its constraint relation. In the weighted
    optimization version, which we consider in this paper, every constraint
    $C\in{\mathcal C}$ has an associated {\em weight} $w_C\geq 0$.  Unless
    otherwise stated we shall assume that every instance satisfies
    $\sum_{C\in{\mathcal C}} w_C=1$.

    An {\em assignment} for $\calI$ is a mapping $s\colon V\rightarrow D$. We say that
    $s$ satisfies a constraint $((x_1,\dots,x_r),R)$ if
    $(s(x_1),\dots,s(x_r))\in R$. 
    For $0\leq \beta\leq 1$ we say that assignment $s$ $\beta$-satisfies $\calI$
    if the total weight of the constraints satisfied by $s$ is at least $\beta$.
    In this case we say that $\calI$ is $\beta$-satisfiable. The best possible
    $\beta$ for $\calI$ is denoted by $\mathrm{Opt}(\calI)$.

    A {\em constraint language} on $D$ is a {\em finite} set $\Gamma$ of
    relations on $D$. The problem $\csp(\Gamma)$ consists of all instances of
    the CSP where all the constraint relations are from $\Gamma$.  Problems {\sc
    $k$-Sat}, {\sc Horn $k$-Sat}, {\sc 3-Lin-$p$}, {\sc Graph $H$-colouring}, and
    {\sc Unique Games$(|D|)$} are all of the form $\CSP\Gamma$.

    The {\em decision problem} for $\csp(\Gamma)$ asks whether an input instance
    $\calI$ of $\csp(\Gamma)$ has an assignment satisfying all constraints in
    $\calI$. The {\em optimization problem} for $\csp(\Gamma)$ asks to find an
    assignment $s$ where the weight of the constraints satisfied by $s$ is as
    large as possible. Optimization problems are often hard to solve to
    optimality, motivating the study of {\em approximation} algorithms.

    \subsection{Algebra}\label{sec:algebra}

    An $n$-ary operation $f$ on $D$ is a map from $D^n$ to $D$. We say that $f$
    {\em preserves} (or is a {\em polymorphism} of) an $r$-ary relation $R$ on
    $D$ if for all $n$ (not necessarily distinct) tuples $(a^i_1,\dots,a_r^i)\in
    R$, $1\leq i\leq n$, the tuple
    \[
      (f(a_1^1,\dots,a_n^1),\dots,f(a_1^r,\dots,a_n^r))
    \]
    belongs to $R$ as well.
    Say, if $R$ is the edge relation of a digraph $H$, then $f$ is a
    polymorphism of $R$ if and only if, for any list of $n$ (not necessarily
    distinct) edges $(a_1,b_1),\ldots,(a_n,b_n)$ of $H$, there is an edge in $H$
    from $f(a_1,\ldots,a_n)$ to $f(b_1,\ldots,b_n)$.
    If $f$ is a polymorphism
    of every relation in a constraint language $\Gamma$ then $f$ is called a
    polymorphism of $\Gamma$.  Many algorithmic properties of $\CSP\Gamma$
    depend only on the polymorphisms of
    $\Gamma$, see survey~\cite{Barto17:poly}, also~\cite{Bulatov??:classifying,Dalmau13:robust,Jeavons97:closure,Larose09:universal}.

    An $(n+1)$-ary ($n\ge 2$) operation $f$ is a {\em near-unanimity (NU)} operation
    if, for all $x,y\in D$, it satisfies
    \begin{multline*}
      f(x,x,\ldots,x,x,y)=f(x,x,\ldots,x,y,x)=\dots
      =f(y,x,\ldots,x,x,x)=x.
    \end{multline*}
    Note that the behaviour of $f$ on other tuples of arguments is not
    restricted.
    An NU operation of arity 3 is called a {\em majority} operation. 

    We mentioned in the introduction that (modulo UGC) only constraint languages
    satisfying condition $\mathrm{SD}(\vee)$ can admit robust algorithms with
    polynomial loss. The condition $\mathrm{SD}(\vee)$ can be expressed in many
    equivalent ways: for example, as the existence of ternary polymorphisms
    $d_0\zd d_t$, $t\ge 2$, satisfying the following
    equations~\cite{Hobby88:structure}:
    \begin{align}
      d_0(x,y,z) &= x,\quad d_t(x,y,z) = z, \\
      d_i(x,y,x) &= d_{i+1}(x,y,x) \text{  for all even $i<t$}, \label{sdjoin3}\\
      d_i(x,y,y) &= d_{i+1}(x,y,y) \text{  for all even $i<t$},\\
      d_i(x,x,y) &= d_{i+1}(x,x,y) \text{  for all odd $i<t$}.
    \end{align}
    If line (\ref{sdjoin3}) is strengthened to $d_i(x,y,x)=x$ for all $i$, then,
    for any constraint language, having such polymorphisms would be equivalent
    to having an NU polymorphism of some arity~\cite{Barto13:NU} (this is true only when 
    constraint languages are assumed to be finite).
    
    NU polymorphisms appeared many times in the CSP literature. For example,
    they characterize the so-called ``bounded strict width''
    property~\cite{Feder98:monotone,Jeavons98:consist}, which says, roughly,
    that, after establishing local consistency in an instance, one can always
    construct a solution in a greedy way, by picking values for variables in any
    order so that constraints are not violated. 
    
    \begin{theorem}\cite{Feder98:monotone,Jeavons98:consist}
    \label{the:maj} \label{the:nu}
    Let $\Gamma$ be a constraint language with an NU polymorphism of some arity.
    There is a polynomial-time algorithm that, given an instance of
    $\csp(\Gamma)$, finds a satisfying assignment or reports that none exists.
    \end{theorem}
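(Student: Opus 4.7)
Let $f$ be the given NU polymorphism of $\Gamma$, of arity $n+1 \ge 3$. The plan is the standard ``bounded strict width'' argument: enforce a local consistency condition on the input tailored to the arity of $f$, and then use $f$ to show that the resulting ``locally satisfiable'' instance admits a global solution that can be read off greedily.

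The first step is to enforce $(n, n+1)$-consistency on the input instance $\calI = (V, \calC)$. For each $n$-subset $S \subseteq V$, maintain a relation $P_S \subseteq D^S$, initialised by intersecting the projections onto $S$ of all constraints whose scope lies inside $S$. Iteratively delete any tuple $t \in P_S$ that fails to extend to a tuple of $P_{S \cup \{v\}}$ for some $v \notin S$, or that contradicts some constraint whose scope is contained in $S \cup \{v\}$. This fixed-point computation runs in polynomial time, since there are $O(|V|^n)$ relations each of size at most $|D|^n$. If some $P_S$ ends up empty, output ``no solution''; correctness is immediate because any satisfying assignment of $\calI$ would project consistently onto every $P_S$. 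Otherwise, build a solution greedily: pick an unassigned variable, assign it any value allowed by the current relations, fix the choice by adding a unary constraint, re-propagate, and repeat.

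The content of the proof, and the main obstacle, is to show that greedy assignment never fails — equivalently, that in a $(n, n+1)$-consistent instance with every $P_S$ nonempty, any partial assignment on $n$ variables drawn from the corresponding $P_S$ extends to a full solution. I would argue this by induction on the number of unassigned variables, using $f$ in a Helly-style fashion. Given a maximal partial solution $s$ defined on $X \subsetneq V$ with $|X| \ge n$, pick $v \notin X$; then $(n, n+1)$-consistency yields $n+1$ partial extensions $s_1, \dots, s_{n+1}$ on $X \cup \{v\}$, arranged so that each $s_i$ agrees with $s$ on all of $X$ except possibly one coordinate, and the sacrificed coordinates are distinct across the $s_i$'s. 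Applying $f$ componentwise to $(s_1, \dots, s_{n+1})$ produces a new assignment $s^\star$ on $X \cup \{v\}$: at every coordinate in $X$, at least $n$ of the $n+1$ inputs equal $s$, so the NU identities of $f$ force $s^\star = s$ on $X$. Because $f$ is a polymorphism of every relation in $\Gamma$, $s^\star$ satisfies every constraint whose scope lies in $X \cup \{v\}$, so $s$ extends, contradicting the maximality of $X$. The bookkeeping needed to produce the $n+1$ partial extensions with the correct pattern of sacrificed coordinates, leveraging $(n, n+1)$-consistency in the right way, is the main technical point to carry out carefully.
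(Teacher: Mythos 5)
The paper does not prove this theorem: it is stated as a citation to Feder--Vardi and Jeavons--Cohen--Cooper, who established the equivalence between having an NU polymorphism and the ``bounded strict width'' property. Your overall plan---enforce $(n,n+1)$-consistency and then extend a partial solution greedily, invoking the NU identity to show the greedy step never gets stuck---is the right one and follows those sources.

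However, the step you flag as ``bookkeeping'' is in fact where the argument as written breaks, and the fix requires an idea you did not mention. You want $n+1$ \emph{partial solutions} $s_1,\dots,s_{n+1}$ on $X\cup\{v\}$, each agreeing with $s$ on $X$ except at one distinguished coordinate $x_i$, with the $x_i$'s distinct. The inductive hypothesis (``every partial solution on a set of size $<|X|$ extends by one variable'') only hands you a partial solution $s_i'$ on $(X\setminus\{x_i\})\cup\{v\}$ agreeing with $s$ on $X\setminus\{x_i\}$. To turn $s_i'$ into an $s_i$ defined on all of $X\cup\{v\}$ you would need to extend a partial solution on a set of size $|X|$, i.e.\ exactly what you are in the middle of proving; filling in $s_i(x_i)$ arbitrarily produces a function that is generally \emph{not} a partial solution. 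Once the $s_i$'s are not partial solutions, the inference ``$f$ is a polymorphism, hence $s^\star=f(s_1,\dots,s_{n+1})$ satisfies every constraint in $X\cup\{v\}$'' collapses: for a constraint whose scope contains some $x_i$, the $i$-th input tuple is not in the constraint relation, so the polymorphism gives you nothing.

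The missing ingredient is $n$-decomposability (Baker--Pixley), which the paper recalls in its preliminaries: every relation preserved by an $(n+1)$-ary NU is $n$-decomposable, so to check that $s^\star$ satisfies a constraint one only needs to verify the $n$-ary projections of its scope. For any such projection $T\subseteq X\cup\{v\}$ with $|T|\le n$, at least one $x_i\notin T$, hence $T\subseteq(X\setminus\{x_i\})\cup\{v\}$ and the genuine partial solution $s_i'$ does restrict correctly to $T$. One still has to ensure that the $(n,n+1)$-minimality preprocessing has propagated information so that these $n$-ary projections are respected by the $P_S$'s, and then combine the $n+1$ restrictions under $f$. None of this is automatic, and it is precisely the content---not mere bookkeeping---that you would need to supply to complete the proof.
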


    Every
    relation with an $(n+1)$-ary NU polymorphism is {\em $n$-decomposable} (and
    in some sense the converse also holds)~\cite{Baker75:chinese-remainder}. We
    give a formal definition only for the majority case $n=2$. Let $R$ be a
    $r$-ary ($r\ge 2$) relation.  For every $i,j\in\{1,\dots,r\}$, let
    $\pr_{i,j} R$ be the binary relation $\{(a_i,a_j)\mid (a_1,\dots,a_r)\in
    R\}$. Then $R$ is called $2$-{\em decomposable} if the following holds: a
    tuple $(a_1,\dots,a_r)\in D^r$ belongs to $R$ if and only if  $(a_i,a_j)\in
    \pr_{i,j} R$ for every $i,j\in\{1,\dots,r\}$.

    The {\em dual discriminator} is a majority operation $f$ such that
    $f(x,y,z)=x$ whenever $x,y,z$ are pairwise distinct. Binary relations
    preserved by the dual discriminator are known as {\em
    implicational}~\cite{Bibel88:deductive} or {\em
    0/1/all}~\cite{Cooper94:characterising} relations. Every such relation is of one of the four following types:
    \begin{enumerate}
      \item $(\{a\}\times D)\cup (D\times\{b\})$ for $a,b\in D$,
      \item $\{(\pi(a),a)\mid a\in D\}$ where $\pi$ is a permutation on $D$,
      \item $P\times Q$ where $P,Q\subseteq D$,
      \item a intersection of a relation of type 1 or 2 with a relation of type 3.
    \end{enumerate}
    The relations of the first kind, when $D=\{0,1\}$, are exactly the relations
    allowed in {\sc 2-Sat}, while the relations of the second kind are precisely
    the relations allowed in {\sc Unique Games $(|D|)$}.  We remark that having
    such an explicit description of relations having a given polymorphism is
    rare beyond the Boolean case.

  \section{Main result} \label{sec:main}

    \begin{theorem} \label{the:main}
    Let $\Gamma$ be a constraint language on $D$.
    \begin{enumerate}
      \item If $\Gamma$ has a near-unanimity polymorphism then $\CSP\Gamma$ admits a
      randomized polynomial-time robust algorithm with loss $O(\eps^{1/k})$ for $k = 6|D|^r +7$
      where $r$ is the maximal arity of a~relation in $\Gamma$.
      Moreover, if $\Gamma$ contains only binary relations then one can choose $k=6|D|+7$.
      \item If $\Gamma$ has the dual discriminator polymorphism then
      $\CSP\Gamma$ admits a randomized polynomial-time robust algorithm with loss
      $O(\sqrt{\eps})$.
    \end{enumerate}
    \end{theorem}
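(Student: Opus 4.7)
The plan is to execute, for both parts, the three-step scheme outlined in the introduction: solve a standard SDP relaxation, remove a small-weight set of constraints so that the residue satisfies a carefully chosen consistency condition, and then argue via the polymorphism that the residue is satisfiable.

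For Part~1, every relation in $\Gamma$ is $n$-decomposable by the Baker--Pixley theorem (where $n+1$ is the arity of the given NU polymorphism), and Theorem~\ref{the:nu} turns any $n$-consistent instance into a satisfying assignment in polynomial time. This identifies what Step~3 needs: a notion of ``$n$-consistency restricted to SDP-supported partial assignments''. I would attach to each pair $(x,a)$ a unit vector (via the standard SDP), call a partial assignment $\overline a$ on a scope $\overline x$ $\tau$-supported if the corresponding vectors are pairwise within squared distance $\tau$, and fix a geometric sequence of thresholds $\tau_0<\tau_1<\dots$. The consistency condition to enforce is that for every constraint $(\overline x,R)$ and every $\tau_i$-supported $\overline a\in R$, the partial assignment $\overline a$ can be extended to any further $n$ variables using $\tau_i$-supported partial assignments. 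Constraints witnessing a failure are removed, and with this condition in force Step~3 follows from $n$-decomposability and Theorem~\ref{the:nu}.

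The main obstacle is Step~2: bounding the removed weight by $O(\eps^{1/k})$ with $k=6|D|^r+7$. A naive argument enforcing only $1$-minimality costs $\Theta(1/\log(1/\eps))$ under the UGC \cite{Guruswami12:tight}, so inconsistencies must be amortized across scales $\tau_i$. In the spirit of the potential / ``circle'' arguments of \cite{Barto12:NU,Kozik16:circles}, I would introduce a potential that counts the number of distinct supported partial assignments surviving on each scope, use SDP triangle inequalities to bound the extra weight needed to promote $\tau_i$-support to $\tau_{i+1}$-support, and observe that the number of threshold increments is bounded by the number of partial assignments on a scope, namely $|D|^r$, reducing to $|D|$ when all relations are binary. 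With the ratios $\tau_{i+1}/\tau_i$ chosen so that each increment costs $O(\eps^{1/k})$ in removed constraints, and a final randomization of the thresholds over a narrow interval, the expected loss is of the claimed order.

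For Part~2, the explicit classification of dual-discriminator-invariant relations reduces the problem to three fragments: unique-games permutation constraints, $2$-SAT-type implications $(\{a\}\times D)\cup(D\times\{b\})$, and Cartesian products (plus intersections of the first two with the third). I would solve the standard SDP and apply a single coupled rounding that specializes to the Charikar--Makarychev--Makarychev rounding on the permutation fragment and to a Charikar--Makarychev--Makarychev-style hyperplane rounding on the implication fragment; product constraints are controlled by the unary SDP marginals. Each rounding in isolation is known to achieve $O(\sqrt{\eps})$ loss on its own fragment, so the main remaining task is to show that a single coupled rounding simultaneously achieves $O(\sqrt{\eps})$ loss on all three fragments. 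Because all the relations are binary and $2$-decomposable by the majority polymorphism, only pairwise SDP interactions matter, and Step~3 follows from Theorem~\ref{the:nu} once the coupled rounding is analyzed.
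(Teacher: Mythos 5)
Your high-level plan (SDP relaxation, prune a small-weight set of constraints to establish a consistency condition, then invoke the NU polymorphism for satisfiability) matches the paper's scheme, but both parts contain gaps at exactly the places where the paper does its real work.

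For Part 1, the decisive gap is in Step~3. You assert that ``$n$-consistency restricted to SDP-supported partial assignments'' implies satisfiability ``by $n$-decomposability and Theorem~\ref{the:nu}'', but neither statement delivers that: $n$-decomposability lets you recover a relation from its $n$-ary projections, and Theorem~\ref{the:nu} solves a given CSP$(\Gamma)$ instance --- neither says that a bespoke consistency notion on a residual instance forces satisfiability. The paper's consistency condition \IPQ{n} is genuinely different from what you propose: it is not an $n$-consistency condition at all, but a condition about propagating a singleton along iterated ``path of $n$-trees'' patterns and eventually recovering the starting element, deliberately avoiding anything as strong as $1$-minimality (which, as the paper notes, costs $\Theta(1/\log(1/\eps))$ to enforce under UGC). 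Proving that \IPQ{n} implies satisfiability is Theorem~\ref{thm:nicelevels}, which occupies Section~\ref{sect:proofof3} and relies on a layered construction over the level sets $D_x^\ell$, universal covering tree instances, and Theorem~A.2/Corollary~B.2 from~\cite{Kozik16:circles}. None of that is sketched or obviously replaceable by your condition. Separately, your $|D|^r$ arises as the count of partial assignments per scope, whereas the paper's $k=6|D|^r+7$ comes from first reducing to a binary language on domain $D^r$ (Proposition~4.1 of~\cite{Barto16:robust}) and then running the binary argument with $k=6|D'|+7$; you never do that reduction, and without it your threshold/potential accounting does not attach to a proof of Step~3.

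For Part 2, you propose a single coupled rounding applied across all three fragments, but this glosses over two concrete obstacles that the paper solves explicitly. First, constraints straddling variables whose SDP mass is spread differently (some on two values, some on many) cannot be rounded coherently; the paper handles this by randomly partitioning the variables into classes $\VV{0},\VV{1},\VV{2}$ and conservatively discarding the cross-class constraints, showing their expected weight is $O(\eps)$ (Lemma~\ref{lem:preproc-violated}). Second, after restricting each $x\in\VV{1}$ to a two-element domain $D_x$, the original SDP solution is no longer feasible for the Boolean 2-CSP subinstance (since $\sum_{a\in D_x}\mathbf{x}_a\neq\mathbf{v}_0$); the paper repairs this by a Goemans--Williamson hyperplane pre-rounding to label the two values consistently across UG constraints (Lemma~\ref{lem:consistentUG}, discarding $\calC_{bad}$ of expected weight $O(\sqrt\eps)$), then builds a new feasible SDP $\{\mathbf{\tilde x}_a\}$ of cost $O(\eps)$ (Lemma~\ref{lem:new-SDP-feasible}). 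Without these two steps there is no valid input to hand to the Charikar--Makarychev--Makarychev algorithms, so the claim that a coupled rounding ``achieves $O(\sqrt\eps)$ loss on all three fragments'' has no supporting argument.
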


    It was stated as an open problem in~\cite{Dalmau13:robust} whether every CSP
    that admits a robust algorithm with loss $O(\eps^{1/k})$ admits one where
    $k$ is bounded by an absolute constant (that does not dependent on $D$). In
    the context of the above theorem, the problem can be made more specific: is
    dependence of $k$ on $|D|$ in this theorem avoidable or there is a strict
    hierarchy of possible degrees there? The case of a majority polymorphism is 
    a good starting point when trying to answer this question.

    As mentioned in the introduction, robust algorithms with polynomial loss and
    bounded pathwidth duality for CSPs seem to be somehow related, at least in
    terms of algebraic conditions. The condition $\mathrm{SD}(\vee)$ is the
    common necessary condition for them, albeit it is conditional on UGC for the
    former  and unconditional for the latter. Having an NU polymorphism is a
    sufficient condition for both.  Another family of problems $\CSP\Gamma$ with
    bounded pathwidth duality was shown to admit robust algorithms with
    polynomial loss in~\cite{Dalmau13:robust}, where the parameter $k$ depends
    on the pathwidth duality bound (and appears in the algebraic description of
    this family). This family includes languages not having an NU polymorphism
    of any arity -- see~\cite{Carvalho10:caterpillar,Carvalho11:lattice}. It is
    unclear how far connections between the two directions go, but consistency
    notions seem to be the common theme. 
    
    Returning to the discussion of a possible hierarchy of degrees in polynomial
    loss in robust algorithms -- there was a similar question about a hierarchy
    of bounds for pathwidth duality, and the hierarchy was shown to be
    strict~\cite{Dalmau08:majority}, even in the presence of a majority
    polymorphism.

    \section{SDP relaxation}\label{sec:SDP}

    Associated to every instance $\calI=(V,{\mathcal C})$ of CSP there is a
    standard SDP relaxation. It comes in two versions: maximizing the number of
    satisfied constraints and minimizing the number of unsatisfied constraints.
    We use the latter.  We define it assuming that all constraints are binary, this will be sufficient
    for our purposes.
    The SDP has a variable $\bx_a$ for every $x\in V$ and $a\in D$. It also
    contains a special unit vector $\bv_0$. The goal is to assign
    $(|V\|D|)$-dimensional real vectors to its variables minimizing the
    following objective function:
    \begin{equation}
      \sum_{C=((x,y),R)\in {\mathcal C}} w_C\sum_{(a,b)\not\in R} \bx_a\by_b
      \label{sdpobj}
    \end{equation}
    subject to:
    \begin{align}
      &\bx_a\by_b\geq 0  &  x,y\in V, a,b\in D \label{sdp1} \\
      &\bx_a\bx_b= 0  &  x\in V, a,b\in D, a\neq b \label{sdp2} \\
      &\textstyle\sum_{a\in D} \bx_a=\bv_0  &  x\in V  \label{sdp3} \\
      &\|\bv_0\|=1  &  \label{sdp4}
    \end{align}
	
    In the intended integral solution, $x=a$ if $\bx_a=\bv_0$. In the fractional
    solution, we informally interpret $\|\mathbf{x}_a\|^2$ as the 
    probability
    of
    $x=a$ according to the SDP (the constraints of the SDP ensure that
    $\sum_{a\in D} \|\mathbf{x}_a\|^2 =1$).  If $C=((x,y),R)$ is a constraint
    and $a,b\in D$, one can think of $\bx_a\by_b$ as the probability
    given by the solution of the SDP to the pair $(a,b)$ in $C$. The optimal SDP
    solution, then, gives as little probability as possible to pairs that are not in
    the constraint relation. For a constraint $C=((x,y),R)$, conditions
    (\ref{sdp3}) and (\ref{sdp4}) imply that $\sum_{(a,b)\in R} \bx_a\by_b$ is
    at most $1$. Let $\loss(C)=\sum_{(a,b)\not\in R} \bx_a\by_b$. For a subset
    $A\subseteq D$, let $\bx_A=\sum_{a\in A} \bx_a$. Note that
    $\bx_D=\by_D(=\bv_0)$ for all $x,y\in D$.

    Let $\mathrm{SDPOpt}(\calI)$ be the optimum value of (\ref{sdpobj}).  It is
    clear that, for any instance $\calI$, we have $\mathrm{Opt}(\calI)\ge
    \mathrm{SDPOpt}(\calI)\ge 0$. There are algorithms 
    ~\cite{Vandenberghe96:semidefinite} that, given an SDP instance $\calI$ and some 
    additive
    error $\delta>0$, produce in time $\operatorname{\textit{poly}}\, (|\calI|,
    \log(1/\delta))$ an output vector solution whose value is at most
    $\mathrm{SDPOpt}(\calI)+\delta$. There are several ways to deal with the error 
    $\delta$. In this paper we deal with it by introducing a preprocessing step which will also be needed
    to argue that the  algorithm described in the proof of Theorem~\ref{the:main}(1) runs 
    in polynomial time.

        {\bf Preprocessing step 1.}\quad
        Assume that ${\mathcal C}=\{C_1,\dots,C_m\}$ and that
        $w_{C_1}\ge w_{C_2}\ge \ldots \ge w_{C_m}$. Using the algorithm from
        Theorem~\ref{the:nu}, find the largest $j$ such that the subinstance
        $\inst I_j=(V,\{C_1,\dots,C_j\})$ is satisfiable. If the total weight of
        the constraints in $\inst I_j$ is at least $1-1/m$ then return the
        assignment $s$ satisfying $\inst I_j$ and stop. 

        \begin{lemma} \label{lem:prep1} \label{le:prep1}
        Assume that $\calI$ is $(1-\eps)$-satisfiable.
        If $\epsilon\leq 1/m^2$ then preprocessing step 1 returns an assignment that
        $(1-\sqrt{\epsilon})$-satisfies~$\inst I$.
        \end{lemma}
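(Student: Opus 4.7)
The plan is to show two things: (i) under the hypothesis $\eps \le 1/m^2$ the preprocessing step does return an assignment (rather than falling through without output), and (ii) that assignment satisfies constraints of total weight at least $1-\sqrt{\eps}$. Both will follow from a single numerical estimate on $j$, the largest index for which $\inst I_j$ is satisfiable.

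First I would fix a satisfying assignment $s^{*}$ of some subset $T\subseteq\mathcal C$ with $\sum_{C\in T} w_C\ge 1-\eps$, which exists by the $(1-\eps)$-satisfiability of $\inst I$. Let $i_0$ be the least index with $C_{i_0}\notin T$ (if no such index exists the whole instance is satisfiable, $j=m$, and we are trivially done). Then $\{C_1,\dots,C_{i_0-1}\}\subseteq T$ is satisfiable, so $j\ge i_0-1$.

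Next I would use the ordering $w_{C_1}\ge w_{C_2}\ge\dots\ge w_{C_m}$. Since $C_{i_0}\notin T$ and the constraints outside $T$ have total weight at most $\eps$, we have $w_{C_{i_0}}\le \eps$, and hence $w_{C_i}\le\eps$ for every $i\ge i_0$ by monotonicity. Summing,
\[
  \sum_{i=i_0}^{m} w_{C_i} \;\le\; (m-i_0+1)\eps \;\le\; m\eps,
\]
so the total weight of $C_1,\dots,C_{i_0-1}$ is at least $1-m\eps$. Because $j\ge i_0-1$, the same lower bound holds for the total weight of the satisfiable prefix $\inst I_j$:
\[
  \sum_{i=1}^{j} w_{C_i} \;\ge\; 1-m\eps.
\]

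Finally I would plug in the assumption $\eps\le 1/m^2$. This gives $m\eps\le 1/m$, hence $\sum_{i=1}^{j} w_{C_i}\ge 1-1/m$, so preprocessing step~1 does return the assignment $s$ satisfying $\inst I_j$. Moreover, the same assumption yields $m\eps=\sqrt{\eps}\cdot m\sqrt{\eps}\le \sqrt{\eps}$, so $s$ satisfies a set of constraints of total weight at least $1-m\eps\ge 1-\sqrt{\eps}$, as required. There is no real technical obstacle here; the only thing to be careful about is matching the two inequalities $m\eps\le 1/m$ (to trigger the ``return'' branch) and $m\eps\le\sqrt{\eps}$ (to get the claimed loss) against the single hypothesis $\eps\le 1/m^2$.
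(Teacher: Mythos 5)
Your proof is correct and follows essentially the same route as the paper's: identify a satisfiable prefix of the weight-sorted constraint list whose complementary suffix consists entirely of constraints of weight at most $\eps$, bound the suffix weight by $m\eps$, and then use $\eps\le 1/m^2$ to conclude $m\eps\le\min(1/m,\sqrt\eps)$. The only cosmetic difference is that the paper cuts at the largest index $i$ with $w_{C_i}>\eps$ (observing that a $(1-\eps)$-satisfying assignment must satisfy every constraint of weight exceeding $\eps$), while you cut at the first constraint outside the set $T$ of constraints satisfied by an optimal assignment; both lead to the same bound. You are also more explicit than the paper in checking that the preprocessing step actually triggers its ``return'' branch (i.e.\ that the prefix weight is at least $1-1/m$), which is a harmless but worthwhile clarification.
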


        \begin{proof}
        Assume $\eps\leq 1/m^2$. Let $i$ be maximum with the property that
        $w_{C_i}>\eps$. It follows that the instance
        $\calI_i=(V,\{C_1,\dots,C_i\})$ is satisfiable since the assignment
        $(1-\eps)$-satisfying $\calI$ must satisfy every constraint with weight
        larger than $\eps$. It follows that $i\leq j$ and, hence, the value of the
        assignment satisfying $\calI_j$ is at least $1-w_{C_{i+1}}-\cdots
        -w_{C_m}\geq 1-mw_{C_{i+1}}\geq 1-m\eps\geq 1-\sqrt{\eps}$.
        \end{proof}

        If the preprocessing step returns an assignment then we are done.  So assume
        that it did not return an assignment. Then we know that $\epsilon\ge
        1/m^2$.  We then solve the SDP relaxation with $\delta=1/m^2$ obtaining a 
        solution with objective value at most $2\epsilon$ which is good enough for our 
        purposes.

  \section{Overview of the proof of Theorem \ref{the:main}(1)}
    \label{sec:overview1}
    
    We assume throughout that $\Gamma$ has a near-unanimity polymorphism of
    arity $n+1$ ($n\ge 2$).
    
    It is sufficient to prove Theorem \ref{the:main}(1) for the case when
    $\Gamma$ consists of binary relations and $k=6|D|+7$. The rest will follow
    by Proposition~4.1 of~\cite{Barto16:robust} (see also Theorem~24 in~\cite{Barto17:poly}), 
    which shows how to reduce the
    general case to constraint languages consisting of unary and binary
    relations in such a way that the domain size increases from $|D|$ to $|D|^r$
    where $r$ is the maximal arity of a~relation in $\Gamma$.  Note that every
    unary constraint $(x,R)$ can be replaced by the binary constraint
    $((x,x),R')$ where $R'=\{(a,a)\mid a\in R\}$.

    Throughout the rest of this section, let $\calI=(V,{\mathcal C})$ be a
    $(1-\eps)$-satisfiable instance of $\CSP\Gamma$.

    \subsection{Patterns and realizations}\label{sect:pattern}
      A~\emph{pattern in $\inst I$} is defined as a~directed multigraph $p$
      whose vertices are labeled by variables of $\inst I$ and edges are labeled
      by constraints of $\inst I$ in such a way that the beginning of an edge
      labeled by $((x,y),R)$ is labeled by $x$ and the end by $y$.
      Two of the
      vertices in $p$ can be distinguished as the \emph{beginning} and the
      \emph{end} of~$p$.  If these two vertices are labeled by variables $x$ and $y$,
      respectively, then we say that $p$ is a~pattern from $x$ to $y$.  

      For two patterns $p$ and $q$ such that the end of $p$ and the beginning of
      $q$ are labeled by the same variable, we define $p+q$ to be the pattern
      which is obtained from the disjoint union of $p$ and $q$ by identifying
      the end of $p$ with the beginning of $q$ and choosing the beginning of
      $p+q$ to be the beginning of $p$ and the end of $p+q$ to be the end of $q$.
      We also define $jp$ to be $p + \dots + p$ where $p$ appears $j$ times.
      A~pattern is said to be a~\emph{path pattern} if the underlying graph is
      an~oriented path with the beginning and the end being the two end vertices
      of the path, and is said to be an~\emph{$n$-tree pattern} if the
      underlying graph is an orientation of a tree with at most $n$ leaves, and
      both the beginning and the end are leaves. A~\emph{path of $n$-trees
      pattern} is then any pattern of the form $t_1+\dots+t_j$ for some
      $n$-tree patterns $t_1,\dots,t_j$. 

      A~\emph{realization of a~pattern}
      $p$ is a~mapping $r$ from the set of
      vertices of $p$ to $D$ such that if $(v_x,v_y)$ is an edge labeled by
      $((x,y),R)$ then $(r(v_x),r(v_y)) \in R$. Note that $r$ does not have to
      map different vertices of $p$ labeled with same variable to the same element in $D$.
      A~\emph{propagation} of a~set $A\subseteq D$ along a pattern $p$ whose
      beginning vertex is $b$ and ending vertex is $e$ is defined as follows.
      For $A\subseteq D$, define $A + p=\{r(e) \mid \text{ $r$ is a realization of
      $p$ with  $r(b) \in A$}\}$.
      Also for a~binary relation $R$ we put $A+R=\{b\mid (a,b)\in R \mbox{ and }
      a\in A\}$.
      Observe that 
      we have  $(A+p)+q = A+(p+q)$.

      Further, assume that we have non-empty sets $D_x^\ell$ where $1\leq
      \ell\leq |D|+1$ and $x$ runs through all variables in an instance $\calI$.
      Let $p$ be a pattern in $\calI$ with beginning $b$ and end $e$. We call a
      realization $r$ of $p$ an \emph{$\ell$-realization} (with respect to the
      family $\{D_x^\ell\}$) if, for any vertex $v$ of $p$ labeled by a~variable
      $x$, we have $r(v)\in D_x^{\ell+1}$. For $A\subseteq D$, define
      \[
        A +^\ell p = \{r(e) \mid r
          \text{ is an $\ell$-realization of $p$ with $r(b) \in A$}\}
      .\]
      Also, for a constraint $((x,y),R)$ or $((y,x),R^{-1})$ and sets
      $A,B\subseteq D$, we write $B=A+^\ell (x,R,y)$ if $B=\{b\in D_y^{\ell+1}
      \mid (a,b)\in R \mbox{ for some } a\in A\cap D_x^{\ell+1}\}$.
    
    \subsection{The consistency notion}
      
      Recall that we assume that $\Gamma$ contains only binary relations. Before
      we formally introduce the new consistency notion, which is the key to our
      result, as we explained in the introduction, we give an example of a
      similar simpler condition. We mentioned before that {\sc 2-Sat} is a
      special case of a CSP that admits an NU polymorphism (actually, the only
      majority operation on $\{0,1\}$).
      There is a textbook consistency condition characterizing satisfiable {\sc
      2-Sat} instances, which can be expressed in our notation as follows: for
      each variable $x$ in a {\sc 2-Sat} instance $\calI$, there is a value
      $a_x$ such that, for any path pattern $p$ in $\calI$ from $x$ to $x$, we
      have $a_x\in \{a_x\}+p$.  
      
      Let $\calI$ be an instance of $\CSP\Gamma$ over a set $V$ of variables. We
      say that $\calI$ satisfies condition \IPQ{n} if the following holds:
      \begin{description}
        \item{\IPQ{n}} For every $y\in V$, there exist non-empty sets $D_y^1
        \subseteq \ldots \subseteq D_y^{|D|}  \subseteq D_y^{|D|+1}=D$ such that
        for any $x\in V$, any $\ell \leq |D|$, any $a\in D_x^\ell$, and any two
        patterns $p,q$ which are paths of $n$-trees in $\inst I$ from $x$ to
        $x$, there exists $j$ such that
        \[
          a \in \{a\} +^{\ell} (j(p+q) + p).
        \]
      \end{description}
      Note that $+$ between $p$ and $q$ is the pattern addition and thus
      independent of $\ell$. Note also that $a$ in the above condition belongs to
      $D_x^\ell$, while propagation is performed by using $\ell$-realizations,
      i.e., inside sets $D_y^{\ell+1}$.
 
      The following theorem states that this consistency notion satisfies the
      requirements of step~3 of the general scheme (for designing robust
      approximation algorithms) discussed in the introduction.
      \begin{theorem}\label{thm:nicelevels}
        Let $\Gamma$ be a constraint language containing only binary relations
        such that $\Gamma$ has an $(n+1)$-ary NU polymorphism. If an instance
        $\calI$ of $\CSP\Gamma$ satisfies \IPQ{n}, then $\inst I$ is satisfiable.
      \end{theorem}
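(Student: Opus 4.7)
The plan is to construct a satisfying assignment $s : V \to D$ with $s(y) \in D_y^1$ for every $y \in V$. The argument splits into two stages: first, establish a strong local consistency property for the restriction of $\inst I$ to the innermost sets $D_y^1$; second, invoke the bounded strict width property of $(n{+}1)$-ary NU polymorphisms (a consequence of Theorem~\ref{the:nu} together with the $n$-decomposability of relations preserved by an $(n{+}1)$-ary NU) to lift local consistency to global satisfiability. Since path-of-$n$-trees patterns are exactly what $n$-decomposable relations are sensitive to, IPQ$_n$ is tailored to be the right pattern-based consistency notion to feed this machinery.

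For the first stage, the key tool is the return property in IPQ$_n$: for each $a \in D_x^\ell$ and any paths of $n$-trees $p, q$ from $x$ to $x$, there exists $j$ with $a \in \{a\} +^{\ell} (j(p+q)+p)$. I plan to carry out a double induction, outer on $|D|$ and inner on $\ell$ running from $|D|$ downward to $1$, showing that at each level $\ell$ the set $D_y^\ell$ can be taken as the consistent candidate set for variable $y$. Passing from level $\ell+1$ to level $\ell$ is the step that requires real work: we already know that $\ell$-realizations exist inside $D_y^{\ell+1}$ by the previous level, and we want to deduce a consistency statement inside the strictly smaller set $D_y^\ell$. The return property at level $\ell$ is exactly what lets us propagate an element of $D_x^\ell$ around a loop and come back to itself while staying one level out; feeding this into the inductive hypothesis should let us intersect many $\ell$-realizations down to a non-empty family that respects $D_y^\ell$.

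The main technical obstacle is precisely this level-shift between $D_y^\ell$ and $D_y^{\ell+1}$. The plan for handling it is to apply the $(n{+}1)$-ary NU polymorphism componentwise to several realizations at level $\ell+1$: by the NU identities, whenever at least $n$ of the inputs at a vertex already lie in $D_y^\ell$, the output does too, so the NU pulls the combined realization back into the smaller set. The return property guarantees a supply of realizations that mostly stay inside $D_y^\ell$, and a sufficient number of them can be combined via the NU to eliminate the "escaping" coordinates. This is the step I expect to be genuinely delicate, since one must simultaneously respect the constraint structure along the path-of-$n$-trees pattern and the set membership at each vertex; it is also where the specific form $j(p+q)+p$ of the return pattern becomes important, giving enough room to align several realizations before applying the polymorphism.

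Once this hierarchy of consistency statements has been proved all the way down to $\ell = 1$, the second stage is essentially the classical bounded-strict-width argument: a greedy process assigns the variables one at a time, using the NU polymorphism (and $n$-decomposability of the relations in $\Gamma$) to always find, for the next variable $y$, a value in $D_y^1$ that extends the current partial assignment without violating any constraint. The overall structure of the proof is therefore: unpack IPQ$_n$ into a level-by-level consistency hierarchy using the NU polymorphism and the return property, then conclude by the standard greedy argument; the delicate point throughout is the interaction between the $|D|$ levels and the path-of-$n$-trees patterns, and controlling that interaction is what I expect to be the main difficulty.
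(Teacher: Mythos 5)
Your proposal contains a genuine gap stemming from a misreading of how IPQ$_n$ interacts with the level structure.

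You plan to build an assignment $s$ with $s(y)\in D_y^1$ for every $y$, pushing a consistency hierarchy ``all the way down to $\ell=1$.'' But IPQ$_n$ never promises any consistency \emph{inside} the sets $D_y^1$: at level $\ell$ the hypothesis is that $a\in D_x^\ell$ returns to itself along $\ell$-realizations, and $\ell$-realizations are defined to live inside the \emph{larger} sets $D_y^{\ell+1}$, not $D_y^\ell$. Already for a single constraint $((x,y),R)$ with $a\in D_x^1$, the condition only gives you a partner $b\in D_y^2$ with $(a,b)\in R$; it gives no $b\in D_y^1$. There is therefore no uniform ``innermost level'' to descend to, and no amount of NU-averaging of level-$(\ell{+}1)$ realizations will force the result back into level $\ell$ unless $D_y^\ell = D_y^{\ell+1}$ for the vertex $y$ in question. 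The observation that makes the actual proof go is exactly this: since $D_y^1\subseteq\cdots\subseteq D_y^{|D|+1}=D$ are $|D|{+}1$ nonempty subsets of a $|D|$-element set, by pigeonhole each variable $y$ has \emph{its own} level $\ell(y)$ where $D_y^{\ell(y)}=D_y^{\ell(y)+1}$, and the solution must be built to place $y$ in $D_y^{\ell(y)}$ rather than in $D_y^1$. The paper then processes variables level by level ($V^1$, then $V^2$, \dots), carrying an invariant that every $n$-tree pattern admits a realization inside the $D_x^{i+1}$'s agreeing with the partial solution on $V^{\le i}$, and uses absorption by the NU polymorphism (via universal covering trees and Corollary~B.2 of \cite{Kozik16:circles}) to push this invariant through. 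This per-variable level assignment is the missing conceptual ingredient in your outline, and your ``outer induction on $|D|$'' does not correspond to anything in the argument.

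A second, smaller issue: your closing stage, a greedy bounded-strict-width argument, is not available here. IPQ$_n$ is a path-of-$n$-trees return condition, not an $(n,n{+}1)$-minimality condition, so the hypotheses of the classical strict-width greedy do not hold; moreover the auxiliary instances one actually needs to solve (with $n$-tree-pattern-defined constraints, so arity up to $n$ even though $\Gamma$ is binary) are handled in the paper by invoking Theorem~\ref{thm:nice} on instances satisfying the (PQ) condition, which is a substantially different and stronger result than Theorem~\ref{the:nu}. Without that machinery, the final lift from local to global does not go through by the route you sketch.
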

      
      \subsection{The algorithm}
      
        Let $k = 6|D| + 7$.  We provide an algorithm which, given
        a~$(1-\epsilon)$-satisfiable instance $\inst I$ of $\CSP\Gamma$, removes
        $O(\epsilon^{1/k})$ constraints from it  to obtain a~subinstance $\inst
        I'$ satisfying condition (IPQ)$_n$.  It then follows
        from Theorem~\ref{thm:nicelevels} that $\inst I'$ is satisfiable, and
        we can find a satisfying assignment by Theorem~\ref{the:nu}.

      \subsubsection{More preprocessing}
      \label{sec:preprocessing}

	    By Lemma \ref{lem:prep1} we can assume that $\epsilon\geq 1/m^2$.
        We solve the SDP relaxation with error $\delta=1/m^2$ and obtain a~solution
     	$\{\vec x_a\}$ $(x\in V,a\in D)$ whose objective value $\epsilon'$ is at most 
        $2\epsilon$. Let us define $\alpha$ to be $\max\{\epsilon',1/m^2\}$. It
        is clear 
        that $\alpha=O(\epsilon)$. Furthermore, this gives us that
        $1/\alpha\le m^2$.
        This will be
        needed to argue that the main part of the algorithm runs in polynomial
        time.

        Let $\kappa=1/k$ (we will often use $\kappa$ to avoid overloading
        formulas).

        {\bf Preprocessing step 2.}\quad For each $x\in V$ and $1\leq \ell\leq
        |D|+1$, compute sets $D_x^{\ell}\subseteq D$ as follows.  Set
        $D_x^{|D|+1}=D$ and, for $1\leq \ell\leq |D|$, set $D_x^{\ell}=\{a\in
        D\mid \|\vec x_a\|\ge r_{x,\ell}\}$ where $r_{x,\ell}$ is the smallest
        number of the form $r=\alpha^{3\ell \kappa}(2|D|)^{i/2}$, $i\ge 0$
        integer, with $\{b\in D\mid r(2|D|)^{-1/2}\le \|\vec x_b\|<
        r\}=\emptyset$.  It is easy to check that $r_{x,\ell}$ is obtained with
        $i\le |D|$.

        It is clear that the sets $D_x^{\ell}\subseteq D$, $x\in V$, $1\leq \ell\leq
        |D|$, can be computed in polynomial time. 

        The sets $D_x^\ell$ are chosen such that $D_x^\ell$
        contains relatively ``heavy'' elements ($a$'s such that $\|\vec x_a\|^2$
        is large).
        The thresholds are chosen so that there is a~big gap (at
        least by a~factor of $2|D|$) between ``heaviness'' of an element in
        $D_x^\ell$ and outside.

      \subsubsection{Main part}
      \label{sec:algorithm}
        Given the preprocessing is done, we have that $1/\alpha \leq m^2$, and
        we precomputed sets $D_x^\ell$ for all $x\in V$ and $1\leq \ell \leq
        |D|+1$.  The description below uses the number $n$, where $n+1$ is the
        arity of the NU polymorphism of $\Gamma$.
        
        \textbf{Step 0.}\quad Remove every constraint $C$ with $\loss(C) >
        \alpha^{1-\kappa}$.

        \textbf{Step 1.}\quad For every $ 1 \leq \ell \leq |D| $ do the following.
        Pick a~value $r_\ell \in (0,\alpha^{(6\ell +4)\kappa})$ uniformly at
        random. Here we need some notation: for $x,y \in V$ and $A,B\subseteq D$,
        we write $\vec x_A \preceq^\ell \vec y_B$ to indicate that there is
        \textbf{no} integer $j$ such that
        \(
          \| \vec y_B \|^2 < r_\ell +j\alpha^{(6\ell +4)\kappa}
          \leq \| \vec x_A \|^2.
        \)
        Then, remove all constraints $((x,y),R)$ such that there are sets $A,B
        \subseteq D$ with $B = A +^\ell (x,R,y)$ and $\vec
        x_A \not\preceq^\ell \vec y_B$, or with $B = A +^\ell (y,R^{-1},x)$ and
        $\vec y_A \not\preceq^\ell \vec x_B$.

        \textbf{Step 2.}\quad For every $1 \leq \ell \leq |D|$ do the following.
        Let $m_0 = \lfloor \alpha^{-2\kappa} \rfloor$.
        Pick a~value $s_\ell \in \{0,\dots, m_0-1\}$
        uniformly at random. We define $\vec x_A \preceq^{\ell}_w \vec y_B$ to
        mean that there is \textbf{no} integer $j$ such that
        \(
          \|\vec y_B\|^2 < r_\ell + (s_\ell +jm_0)\alpha^{(6\ell +4)\kappa}
          \leq \|\vec x_A\|^2.
        \)
        Obviously, if $\vec x_A \preceq^{\ell} \vec y_B$ then $\vec x_A
        \preceq^{\ell}_w \vec y_B$.
        Now, if $A\subseteq B \subseteq D_x^{\ell+1}$  are such that
        $\|\vec x_B - \vec x_A\|^2 \leq (2n-3)\alpha^{(6\ell +4)\kappa}$ and 
        $\vec x_B \not\preceq^{\ell}_w \vec x_A$,
        then remove all the constraints in which $x$ participates.

        \textbf{Step 3.}\quad For every $1\leq \ell \leq |D|$ do the following.
        Pick $m_\ell = \lceil \alpha ^ {-(3\ell +1)\kappa} \rceil$ unit
        vectors independently uniformly at random. For $x,y\in V$ and $A,B
        \subseteq D$, say that $\vec x_A$ and $\vec y_B$ are \emph{cut} by a
        vector $\vec u$ if the signs of $\vec u \cdot (\vec x_A - \vec
        x_{D\setminus A})$ and $\vec u \cdot (\vec y_B - \vec y_{D\setminus B})$
        differ.  Furthermore, we say that $\vec x_A$ and $\vec y_B$ are
        $\ell$-cut if there are cut by at least one of the chosen $m_\ell$
        vectors.
        For every variable $x$, if there exist subsets $A,B \subseteq D$ such that
        $A \cap D_x^\ell \neq B \cap D_x^\ell$ and the vectors $\vec x_A$ and
        $\vec x_B$ are not $\ell$-cut, then remove all the constraints in which
        $x$ participates.

        \textbf{Step 4.}\quad For every $1 \leq \ell \leq |D|$, remove every
        constraint $((x,y),R)$ such that there are sets $A,B \subseteq D$ with
        $B = A +^\ell (x,R,y)$, and $\vec x_A$ and $\vec y_B$ are $\ell$-cut, or with
        $B = A +^\ell (y,R^{-1},x)$, and $\vec y_A$ and $\vec x_B$ are $\ell$-cut.

        \textbf{Step 5.}\quad For every $1 \leq \ell \leq |D|$ do the following.
        For every variable $x$,
        If $A,B\subseteq D_x^{\ell +1}$
        such that $\| \vec x_B - \vec x_A \|^2 \leq
        (2n-3)\alpha^{(6\ell+4)\kappa}$ and $\vec x_A$ and $\vec x_B$ are
        $\ell$-cut, remove all constraints in which $x$ participates.

        \textbf{Step 6.}\quad By Proposition~\ref{prop:consistence} and
        Theorem~\ref{thm:nicelevels}, the remaining instance $\inst I'$ is
        satisfiable. Use the algorithm given by Theorem \ref{the:nu} to find
        a~satisfying assignment for $\inst I'$. Assign all variables in $\inst
        I$ that do not appear in $\inst I'$ arbitrarily and return the obtained
        assignment for $\inst I$.

        \bigskip
        Note that we chose to define the cut condition based on
        $\bx_A-\bx_{D\setminus A}$, rather than on $ \bx_A$, because the former
        choice has the advantage that $\|\bx_A-\bx_{D\setminus A}\|=1$, which
        helps in some calculations.
        
        In step~0 we remove constraints such that, according to the SDP solution, 
        have a high probability to be violated.
        Intuitively, steps 1 and 2 ensure that the loss in $\|\vec x_A\|$ after
        propagating $A$ by a~path of $n$-trees is not too big. This is achieved
        first by ensuring that by following a~path we do not lose too much
        (step~1) which also gives a~bound on how much we can lose by following
        an~$n$-tree pattern (see Lemma \ref{lem:tree-loss}). Together with the removal 
        of constraints in step 2, this guarantees that
        following a~path of $n$-trees we do not lose too much. This ensures that $\{a\} +^\ell (j(p+q) + p)$ is non-vanishing as
        $j$ increases.
        Steps 3--5 ensure that if $A$ and $B$ are connected by paths of
        $n$-trees in both directions (i.e., $\vec x_A = \vec x_B +^{\ell} p_1$ and $\vec 
        x_B = \vec x_A +^{\ell} p_2$), then $\vec x_A$ and $\vec x_B$ do not differ too 
        much (i.e., $A \cap D_x^\ell = B \cap D_x^\ell$). This is achieved by separating 
        the space
        into cones by cutting it using the $m_\ell$ chosen vectors, removing the
        variables which have two different sets that are not $\ell$-cut (step
        3), and then ensuring that if we follow an edge (step 4), or if we drop
        elements that do not extend to an~$n$-tree (step 5) we do not cross
        a~border to another cone.  This gives us both that the sequence $A_j
        = \{a\} +^\ell (j(p+q) +p)$ stabilizes and that, after it stabilizes,
        $A_j$ contains $a$. This provides condition \IPQ{n} for the remaining instance $\inst I'$. 

        The algorithm runs in polynomial time. Since $D$ is fixed, it is clear
        that the steps 0--2 can be performed in polynomial time. For steps 3--5,
        we also need that $m_\ell$ is bounded by a polynomial in $m$, which holds
        because $\alpha \geq 1/m^2$. 
        
        The correctness of the algorithm is given by Theorem~\ref{thm:nicelevels} and the two following
        propositions whose proof can be found in Section
        \ref{sec:correctness-of-algorithm}. These propositions show that our new
        consistency notion satisfies the requirements of step~2 of the general
        scheme for designing robust approximation algorithms
        discussed in the introduction.

        \begin{proposition} \label{prop:expected-lost-weight}
        The expected total weight of constraints removed by the algorithm is
        $O(\alpha^\kappa)$.
        \end{proposition}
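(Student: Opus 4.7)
The plan is to bound the expected weight of constraints removed at each of the six steps 0--5 separately by $O(\alpha^\kappa)$; the total then follows by linearity. Throughout, write $W(x) = \sum_{C \ni x} w_C$, and note that $\sum_x W(x) \leq 2$ since each binary constraint contributes its weight to at most two such sums. Since $|D|$ is a fixed constant, the number of pairs $(A,B)\in 2^D\times 2^D$ is $O(1)$.

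Step~0 is an immediate Markov estimate: since $\sum_C w_C \loss(C) \leq \alpha$, the total weight of constraints with $\loss(C) > \alpha^{1-\kappa}$ is at most $\alpha^\kappa$. For step~1, fix a surviving constraint $C=((x,y),R)$, a level $\ell$, and a pair $(A,B)$ with $B = A+^\ell(x,R,y)$ (the reverse direction is symmetric). The probability over the uniform shift $r_\ell$ that $\vec x_A \not\preceq^\ell \vec y_B$ is at most $(\|\vec x_A\|^2 - \|\vec y_B\|^2)_+/\alpha^{(6\ell+4)\kappa}$. An SDP computation using the identity $\|\vec x_A\|^2 = \vec x_A\cdot\vec y_B + \vec x_A\cdot\vec y_{D\setminus B}$, the inequality $\|\vec y_B\|^2 \geq \vec x_A\cdot \vec y_B$, and the pointwise bound $\vec x_a\cdot \vec y_b \leq \min(\|\vec x_a\|^2,\|\vec y_b\|^2)$ --- the last used together with the gap property of the thresholds $r_{x,\ell+1}$ to control ``out-of-level'' contributions where $a \notin D_x^{\ell+1}$ or $b \notin D_y^{\ell+1}$ --- yields $\|\vec x_A\|^2 - \|\vec y_B\|^2 \leq \loss(C) + O(\alpha^{6(\ell+1)\kappa})$. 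Summing over the $O(1)$ pairs, all $\ell \leq |D|$, and all constraints (weighted by $w_C$), the expected removed weight is $O(\alpha^{1-(6|D|+4)\kappa}) + O(\alpha^{2\kappa}) = O(\alpha^\kappa)$, using $\kappa = 1/(6|D|+7)$ so that $1 - (6|D|+4)\kappa = 3\kappa$.

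Step~2 uses the coarser grid of spacing $m_0 \alpha^{(6\ell+4)\kappa}$ with $m_0 = \lfloor \alpha^{-2\kappa}\rfloor$: for any $A\subseteq B\subseteq D_x^{\ell+1}$ with $\|\vec x_B\|^2 - \|\vec x_A\|^2 \leq (2n-3)\alpha^{(6\ell+4)\kappa}$, the probability over $s_\ell$ of triggering removal is at most $(2n-3)/m_0 = O(\alpha^{2\kappa})$, and $\sum_x W(x) \cdot O(\alpha^{2\kappa}) = O(\alpha^{2\kappa})$. Step~3 uses a Goemans--Williamson-style argument: if $A \cap D_x^\ell \neq B \cap D_x^\ell$, the symmetric difference contains some $a \in D_x^\ell$ with $\|\vec x_a\|^2 \geq r_{x,\ell}^2 \geq \alpha^{6\ell\kappa}$. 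Consequently $\sin^2(\theta/2) = \|\vec x_{A\triangle B}\|^2 \geq \alpha^{6\ell\kappa}$ for the angle $\theta$ between the unit vectors $\vec x_A - \vec x_{D\setminus A}$ and $\vec x_B - \vec x_{D\setminus B}$, giving $\theta \geq 2\alpha^{3\ell\kappa}$. The probability that none of $m_\ell \geq \alpha^{-(3\ell+1)\kappa}$ random unit vectors cuts the pair is therefore at most $\exp(-(2/\pi)\alpha^{-\kappa})$, which is $O(\alpha^\kappa)$ uniformly since $y\,e^{-2y/\pi}$ is bounded on $y\geq 0$. Summing over the $O(1)$ many (level, pair) combinations per variable and using $\sum_x W(x) \leq 2$ gives $O(\alpha^\kappa)$.

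Steps~4 and~5 also rely on random-hyperplane cutting, but now we show that ``close'' vectors are unlikely to be cut. A direct computation gives $\|\vec u_A - \vec v_B\|^2 = 4\|\vec x_A - \vec y_B\|^2$ for the unit vectors $\vec u_A := \vec x_A - \vec x_{D\setminus A}$, $\vec v_B := \vec y_B - \vec y_{D\setminus B}$, so by a union bound the probability that the pair is $\ell$-cut is at most $m_\ell \|\vec x_A - \vec y_B\|$ up to a universal constant. For step~5 the hypothesis $\|\vec x_A - \vec x_B\|^2 \leq (2n-3)\alpha^{(6\ell+4)\kappa}$ directly gives cut probability $m_\ell \cdot O(\alpha^{(3\ell+2)\kappa}) = O(\alpha^\kappa)$ per (pair, level), and $\sum_x W(x) \leq 2$ concludes. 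The main technical obstacle is step~4: we must prove $\|\vec x_A - \vec y_B\|^2 = O(\alpha^{(6\ell+4)\kappa})$ for every $(A,B)$ with $B = A+^\ell(x,R,y)$ surviving step~1. The identity $\|\vec x_A - \vec y_B\|^2 = (\|\vec y_B\|^2 - \|\vec x_A\|^2) + 2\vec x_A\cdot\vec y_{D\setminus B}$ bounds the second summand by $O(\alpha^{(6\ell+4)\kappa})$ exactly as in step~1. For the first summand, the case $\|\vec y_B\|^2 > \|\vec x_A\|^2$ is not directly controlled by the forward step~1; here we invoke step~1 in the backward direction applied to $(B, B')$ with $B' := B +^\ell (y,R^{-1},x)$, which gives $\|\vec y_B\|^2 - \|\vec x_{B'}\|^2 \leq \alpha^{(6\ell+4)\kappa}$. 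The key observation is that $B'$ contains every element of $A \cap D_x^{\ell+1}$ that admits an $R$-neighbor in $D_y^{\ell+1}$, while the remaining elements of $A$ contribute only $O(\alpha^{6(\ell+1)\kappa})$-mass to $\|\vec x_A\|^2$ (by the gap property of $D_x^{\ell+1}$ for those outside $D_x^{\ell+1}$, and via $\|\vec x_a\|^2 \leq \loss(C) + O(\alpha^{6(\ell+1)\kappa})$ for those inside $D_x^{\ell+1}$ without such a neighbor). This gives $\|\vec x_{B'}\|^2 \geq \|\vec x_A\|^2 - O(\alpha^{6(\ell+1)\kappa})$, whence $\|\vec y_B\|^2 - \|\vec x_A\|^2 \leq O(\alpha^{(6\ell+4)\kappa})$ and the argument closes.
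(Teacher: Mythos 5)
Your overall decomposition — bounding each of the six steps separately by $O(\alpha^\kappa)$, Markov for Step~0, the grid/shift argument for Steps~1--2, and the random-hyperplane argument for Steps~3--5 — is the same as the paper's, and your treatments of Steps~0, 1, 2, 3, and 5 are essentially correct (matching the paper's Lemmas up to minor bookkeeping). The problem is Step~4.

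For Step~4 you must show $\|\vec x_A - \vec y_B\|^2 = O(\alpha^{(6\ell+4)\kappa})$, and via your identity this reduces to an upper bound on $\|\vec y_B\|^2 - \|\vec x_A\|^2$. You invoke the backward Step~1 survival condition on $B' := B +^\ell(y,R^{-1},x)$ to get $\|\vec y_B\|^2 \le \|\vec x_{B'}\|^2 + \alpha^{(6\ell+4)\kappa}$; to close the chain you would now need $\|\vec x_{B'}\|^2 \le \|\vec x_A\|^2 + O(\cdot)$. Instead you prove the \emph{opposite} inequality $\|\vec x_{B'}\|^2 \ge \|\vec x_A\|^2 - O(\alpha^{6(\ell+1)\kappa})$: your ``key observation'' controls the mass of $A\setminus B'$ (elements of $A$ that drop out), when what is actually needed is control on the mass of $B'\setminus A$ (elements of $D_x^{\ell+1}$ that are pulled into $B'$ because they have an $R$-neighbour in $B$ but were not in $A$ to begin with). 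From the lower bound you derived, the conclusion ``whence $\|\vec y_B\|^2 - \|\vec x_A\|^2 \le O(\alpha^{(6\ell+4)\kappa})$'' simply does not follow. In fact, bounding $\|\vec x_{B'\setminus A}\|^2$ is the genuine difficulty here: nothing in the surviving-Step~1 hypothesis prevents $B'$ from containing high-mass elements outside $A$, so this term cannot be dismissed as ``small'' by the same kind of local argument; the paper instead deduces the needed $\|\by_B\|^2 \le \|\bx_A\|^2 + \alpha^{(6\ell+4)\kappa}$ directly from a $\preceq^\ell$-relation asserted to hold after Step~1 (stated there as $\by_B \preceq^\ell \bx_A$), rather than by the $B'$ route you attempt.

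One further small caveat: in your Step~1 accounting you sum $w_C\loss(C)/\alpha^{(6\ell+4)\kappa}$ over constraints to get $O(\alpha^{1-(6|D|+4)\kappa}) = O(\alpha^{3\kappa})$; this is fine, but it is simpler (and what the paper does) to use the Step~0 cutoff $\loss(C)\le\alpha^{1-\kappa}\le\alpha^{(6\ell+6)\kappa}$ uniformly, so that the per-constraint probability is $O(\alpha^{2\kappa})$ regardless of $\ell$ and no sum over $\loss(C)$ is needed.
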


        \begin{proposition} \label{prop:consistence}
        The instance $\inst I'$ obtained after steps 0--5 satisfies the
        condition \IPQ{n} (with the sets $D_x^\ell$ computed by preprocessing
        step 2 in Section \ref{sec:preprocessing}).
        \end{proposition}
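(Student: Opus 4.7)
Plan. Fix $x\in V$, $\ell\le|D|$, $a\in D_x^\ell$, and path-of-$n$-trees patterns $p,q$ in $\inst I'$ from $x$ to $x$. Set $A_j := \{a\}+^\ell (j(p+q)+p)$, so that $A_0 = \{a\}+^\ell p$ and $A_{j+1} = A_j +^\ell(q+p)$ for all $j\ge 0$. I shall in fact prove the stronger statement that $a\in A_j$ for every $j\ge 0$, which is enough to verify~\IPQ{n}.

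The first ingredient is a~\emph{tree-loss lemma}: for $A\subseteq D_y^{\ell+1}$ and an $n$-tree pattern $t$ in $\inst I'$ from $y$ to $z$, if $A'$ denotes the $\ell$-propagation of $A$ along the unique $y$-to-$z$ path inside $t$, then
\[
  \|\vec z_{A'} - \vec z_{A+^\ell t}\|^2 \le (2n-3)\,\alpha^{(6\ell+4)\kappa}.
\]
This would be proved by induction on the number of ``branch'' edges of $t$, using Step~1 of the algorithm to control per-edge loss and Step~2 to rule out the ``bad'' way a~branch can shrink the propagated set. Iterating the lemma over the edges and $n$-trees of $j(p+q)+p$, together with Step~1, gives $\|\vec x_{A_j}\|^2\ge\|\vec x_a\|^2 - O(j\cdot\alpha^{(6\ell+4)\kappa})$ times the size of $p+q$, which remains positive for every $j$ in a~polynomially long range; in particular $A_j\ne\emptyset$ there.

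The second ingredient is \emph{preservation of cut signatures}. For $B\subseteq D_y^{\ell+1}$ let $\sigma(B)\in\{-,+\}^{m_\ell}$ encode the signs $\sign\bigl(\vec u_r\cdot(\vec y_B-\vec y_{D\setminus B})\bigr)$, $r=1,\dots,m_\ell$, for the $m_\ell$ random unit vectors of Step~3. I claim $\sigma(A_j)=\sigma(\{a\})$ for every $j\ge 0$, by induction on the subpatterns of $j(p+q)+p$. Propagation across a~surviving constraint edge preserves $\sigma$ by Step~4. For an~$n$-tree $t$ with main-path propagation $A'$ of input $A$, one has $\sigma(A')=\sigma(A)$ by iterating the single-edge case, and $\sigma(A+^\ell t)=\sigma(A')$ because both vectors live in $D_z^{\ell+1}$ and differ by squared norm at most $(2n-3)\alpha^{(6\ell+4)\kappa}$ (by the tree-loss lemma), so Step~5 forbids an~$\ell$-cut between them. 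Concatenating these preservations gives the claim. Now $\{a\}$ and $A_j$ are two subsets of $D$ with equal signatures, hence not $\ell$-cut, and Step~3 of the algorithm then forces $\{a\}\cap D_x^\ell = A_j\cap D_x^\ell$. Since $a\in D_x^\ell$, this yields $a\in A_j$ as required.

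The main obstacle is the tree-loss lemma itself: showing that the $(2n-3)$ factor governing a single $n$-tree traversal is matched exactly by the $(2n-3)\alpha^{(6\ell+4)\kappa}$ threshold used in Steps~2 and~5 of the algorithm. This quantitative matching is where the arity $n+1$ of the NU~polymorphism enters the analysis, and where the random thresholds $r_\ell,s_\ell$ of Steps~1--2 must be calibrated so that the cut-signature argument passes through.
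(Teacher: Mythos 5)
Your high-level strategy is essentially the paper's: track the ``cut signature'' $\sigma(\cdot)$ along the propagation, then invoke Step~3 to identify $\{a\}\cap D_x^\ell$ with $A_j\cap D_x^\ell$. In fact your target claim ($a\in A_j$ for \emph{every} $j$) is a legitimate and cleaner statement than what the paper establishes: the paper only proves $a\in A_j$ for one sufficiently large $j$, via a detour through periodicity of the sequence $(A_u)$ and the SDP-mass comparison in Lemma~\ref{lem:maj19}. That your direct conclusion suffices and follows once signature-preservation is in hand is a genuine simplification.

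The gap is in how you propose to preserve $\sigma$ across a single $n$-tree $t$ from $y$ to $z$. You propagate $A$ along the main path alone to get $A'$ (iterating Step~4), and then ``jump'' from $A'$ to $A+^\ell t$ once, at the end vertex $z$, invoking Step~5. This requires the bound $\|\vec z_{A'}-\vec z_{A+^\ell t}\|^2\le(2n-3)\alpha^{(6\ell+4)\kappa}$, which you leave unproven (``the main obstacle''). This bound does not follow from the paper's machinery, and it is not clearly true: Lemma~\ref{lem:lost-on-a-tree} bounds the SDP mass of the filtered-out set \emph{at the branching vertex where the failure occurs}, but carrying that set along the rest of the main path to $z$ is not mass-controlled --- Step~1 and Lemma~\ref{lem:maj18} only prevent mass from \emph{decreasing} much under propagation, they put no upper bound on how much the propagated set's mass can grow. (Moreover, the reverse-propagation trick that would give such an upper bound fails here, since an element of $A'\setminus(A+^\ell t)$ can have edge-preimages outside $A'_{i}$ that are not in the filtered-out set.) The paper's proof of Lemma~\ref{lem:consist} avoids this by interleaving the two operations as it walks the main path: at each edge it compares $B_i$ with $B'_{i+1}=B_i+^\ell(\cdot)$ using Step~4, and at each \emph{branching} vertex it compares $B'_{i+1}$ with the filtered $B_{i+1}$ \emph{at that same vertex} using Step~5, where Lemma~\ref{lem:lost-on-a-tree} directly bounds the filtered mass by $(2n-3)\alpha^{(6\ell+4)\kappa}$. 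The set $A'$ never appears and the filtering is never deferred to the end of the tree. If you replace your per-tree ``jump'' by this vertex-by-vertex interleaving, the rest of your argument goes through and yields your cleaner conclusion.
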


\section{Overview of the proof of Theorem~\ref{the:main}(2)}\label{sec:overview-thm2}

Since the~dual discriminator is a~majority operation, every relation in $\Gamma$
is 2-decomposable. Therefore, it follows, e.g.\ from Lemma 3.2
in~\cite{Dalmau13:robust}, that to prove that $\CSP\Gamma$ admits a robust
algorithm with loss $O(\sqrt\eps)$, it suffices to prove this for the case when
$\Gamma$ consists of all unary and binary relations preserved by the dual
discriminator.
Such binary constraints are of one of the four kinds described in
Section~\ref{sec:algebra}. Using this description, it follows from Lemma~3.2
of~\cite{Dalmau13:robust} that it suffices to consider the following three types
of constraints:
\begin{enumerate}
\item Disjunction constraints of the form $x= a \vee y = b$, where $a,b\in D$;
\item Unique game (UG) constraints of the form $x = \pi(y)$, where $\pi$ is any permutation on~$D$;
\item Unary constraints of the form $x \in P$, where $P$ is an arbitrary non-empty subset of $D$.
\end{enumerate}

We present an algorithm that, given a $(1-\eps)$-satisfiable instance
$\calI=(V,\calC)$ of the problem, finds a solution satisfying constraints with
expected total weight $1- O(\sqrt{\eps\log{|D|}})$ (the hidden constant in the
$O$-notation depends neither on $\eps$ nor on $|D|$).

We now give an informal and somewhat imprecise sketch of the algorithm and its
analysis.  We present details in Section~\ref{sec:thm2}. We use the SDP
relaxation from Section~\ref{sec:SDP}.  Let us call the value
$\|\mathbf{x}_a\|^2$ the SDP weight of the value $a$ for variable $x$.

\subsubsection*{Variable Partitioning Step}
The algorithm first solves the SDP relaxation. Then, it partitions all variables
into three groups $\VV{0}$, $\VV{1}$, and $\VV{2}$ using a threshold rounding
algorithm with a random threshold.  If most of the SDP weight for $x$ is
concentrated on one value $a\in D$, then the algorithm puts $x$ in the set
$\VV{0}$ and assigns $x$ the value $a$.  If most of the SDP weight for $x$ is
concentrated on two values $a,b\in D$, then the algorithm puts $x$ in the set
$\VV{1}$ and restricts the domain of $x$ to the set $D_x = \{a,b\}$ (thus we
guarantee that the algorithm will eventually assign one of the values $a$ or $b$
to $x$).  Finally, if the SDP weight for $x$ is spread among 3 or more values,
then we put $x$ in the set $\VV{2}$; we do not restrict the domain for such $x$.
After we assign values to $x\in \VV{0}$ and restrict the domain of $x\in \VV{1}$
to $D_x$, some constraints are guaranteed to be satisfied (say, the constraint
$(x=a)\vee (y=b)$ is satisfied if we assign $x$ the value $a$ and the constraint
$x\in P$ is satisfied if $D_x\subseteq P$). Denote the set of such constraints
by ${\cal C}_s$ and let $\calC'=\calC\setminus\calC_s$.

We then identify a set ${\cal C}_v\subseteq \calC'$ of constraints that we conservatively label as violated. This set includes all constraints in $\calC'$ except those belonging to one
of the following 4 groups:
\begin{enumerate}
\item disjunction constraints $(x = a) \vee (y = b)$ with $x, y \in {\cal V}_1$ and $a\in D_x$, $b\in D_y$;
\item UG constraints $x = \pi (y)$ with $x, y \in {\cal V}_1$ and $D_x = \pi(D_y)$;
\item UG constraints $x = \pi (y)$ with $x, y \in {\cal V}_2$;
\item unary constraints $x \in P$ with $x\in{\cal V}_2$.
\end{enumerate}
Our construction of sets $\VV{0}$, $\VV{1}$, and $\VV{2}$, which is based on randomized threshold rounding, ensures that the expected total weight of constraints in ${\cal C}_v$ is $O(\eps)$ (see Lemma~\ref{lem:preproc-violated}).

The constraints from the 4 groups above naturally form two disjoint sub-ins\-tan\-ces of $\calI$: $\calI_1$ (groups 1 and 2) on the set of variables $\VV{1}$, and $\calI_2$ (groups 3 and 4) on $\VV{2}$. We treat these instances independently as described below.

\subsubsection*{Solving Instance $\calI_1$} The instance $\calI_1$ with the domain of each $x$ restricted to $D_x$ is effectively an instance of
Boolean 2-CSP (i.e., each variable has a 2-element domain and all constraints are binary). A robust algorithm with quadratic loss for this problem was given by Charikar et al.~\cite{Charikar09:near}.  This algorithm finds a solution violating an $O(\sqrt{\varepsilon})$ fraction of all constraints if the optimal solution violates at most $\varepsilon$ fraction of all constraints or $\mathrm{SDPOpt}(\calI_1)\leq \eps$. However, we cannot
apply this algorithm to the instance $\calI_1$ as is. The problem is that the weight of violated constraints in the optimal solution
for $\calI_1$ may be greater than $\omega(\varepsilon)$. Note that the unknown optimal solution for the original instance $\calI$ may assign values to variables $x$ outside of the restricted domain $D_x$, and hence it
is not a feasible solution for $\calI_1$. Furthermore, we do not have a feasible SDP solution for the instance $\calI_1$, since the original SDP solution (restricted to the variables in $\VV{1}$) is not a feasible solution for the Boolean 2-CSP problem (because $\sum_{a\in D_x}\mathbf{x}_a$ is not necessarily equal to $\mathbf{v}_0$ and, consequently, $\sum_{a\in D_x}\|\mathbf{x}_a\|^2$ may be less than 1). Thus, our algorithm first transforms the SDP solution to obtain a feasible solution for $\calI_1$. To this end, it partitions the set of vectors $\{\mathbf{x}_a: x\in \VV{1}, a\in D_x\}$ into two sets $H$ and $\bar{H}$ using a modification of the hyperplane rounding algorithm by Goemans and Williamson~\cite{Goemans95:improved}. In this partitioning, for every variable $x$, one of the two vectors
$\{\mathbf{x}_a: a\in D_x\}$ belongs to $H$ and the other belongs to $\bar H$.
Label the elements of each $D_x$ as $\alpha_x$ and $\beta_x$ so that so that $\mathbf{x}_{\alpha_x}$ is the vector in
$H$ and $\mathbf{x}_{\beta_x}$ is the vector in $\bar H$. For every $x$, we define two new vectors
$\mathbf{\tilde{x}}_{\alpha_x} = \mathbf{x}_{\alpha_x}$ and
$\mathbf{\tilde{x}}_{\beta_x} = \bv_0-\mathbf{x}_{\alpha_x}$. It is not hard to verify that the set
of vectors $\{\mathbf{\tilde{x}}_{a}: x\in\VV{1}, a\in D_x\}$ forms a feasible SDP solution for the
instance $\calI_1$. We show that for each disjunction constraint $C$ in the instance $\calI_1$,
the cost of $C$ in the new SDP solution is not greater than the cost of $C$ in the original SDP solution (see Lemma~\ref{lem:new-SDP-feasible}).
The same is true for all but $O(\sqrt{\eps})$ fraction of UG constraints. Thus, after removing UG constraints
for which the SDP value has increased, we get an SDP solution of cost $O(\varepsilon)$. Using the algorithm~\cite{Charikar09:near} for Boolean 2-CSP, we obtain a solution for $\calI_1$
that violates constraints of total weight at most $O(\sqrt{\eps})$.

\subsubsection*{Solving Instance $\calI_2$} The instance $\calI_2$ may contain
only unary and UG constraints as all disjunction constraints are removed from
$\calI_2$ in the variable partitioning step. 
We run the approximation algorithm for
Unique Games by Charikar et al.~\cite{Charikar06:near} on $\calI_2$ using the
original SDP solution restricted to vectors $\{\mathbf{x}_a: x\in \VV{2}, a\in
D\}$. This is a valid SDP relaxation because in the instance $\calI_2$, unlike
the instance $\calI_1$, we do not restrict the domain of variables $x$ to $D_x$.
The cost of this SDP solution is at most $\eps$. As shown
in~\cite{Charikar06:near},  the weight of constraints violated by the
algorithm~\cite{Charikar06:near} is at most $O(\sqrt{\eps\log |D|})$.

We get the solution for $\calI$ by combining solutions for $\calI_1$ and
$\calI_2$, and assigning values chosen at the variable partitioning 
step to the variables from the set $\VV{0}$.

\section{Proof of Theorem~\ref{thm:nicelevels}}\label{sect:proofof3}

  In this section we prove Theorem~\ref{thm:nicelevels}.  The proof will use
  constraint languages with relations of arity greater than two.  In order to
  talk about such instances we need to extend the definition of a pattern.  Note
  that patterns~(in the sense of Section~\ref{sect:pattern}) are instances~
  (with some added structure) and the realizations of patterns are solutions.
  We use the pattern/instance and solution/realization duality to generalize the
  notion of a pattern.  Moreover we often treat patterns as instances
  and~(whenever it makes sense) instances as patterns.

  We will often talk about path/tree instances; they are defined using the {\em
  incidence multigraph}.  The incidence multigraph of an instance $\inst J$ is bipartite, its
  the vertex set consists of variables and constraints of $\inst J$ (which form the two parts), and if a
  variable $x$ appears $j$ times in a constraint $C$ then the vertices
  corresponding to $x$ and $C$ have $j$ edges between them. 

  An instance is {\em connected} if its incidence multigraph is connected;
  an~instance is a~\emph{tree instance} if it is connected and its incidence
  multigraph has no multiple edges and no cycles.  A~\emph{leaf variable} in a
  tree instance is a~variable which corresponds to a~leaf in the incidence
  multigraph, and we say that two variables are \emph{neighbours} if they appear
  together in a~scope of some constraint (i.e., the corresponding vertices are
  connected by a~path of length 2 in the incidence multigraph).  Note that the incidence multigraph of a
  path pattern in a binary instance~(treated as an instance, as described in the
  first paragraph of this section) is a path, and of  an $n$-tree pattern is a
  tree with $n$ leaves.
  
  The next definition captures, among other things, the connection between the
  pattern~(treated as an instance) and the instance in which the pattern is
  defined.  Let $\inst J_1$ and $\inst J_2$ be two instances over the same
  constraint language. 
An~\emph{(instance) homomorphism} $e\colon \inst J_1 \to
  \inst J_2$ is a~mapping that maps each variable of $\inst J_1$ to a variable of
  $\inst J_2$ and each constraint of $\inst J_1$ to constraint of $\inst J_2$ in
  such a~way that every constraint $((y_1,\dots,y_k),R)$ in $\inst J_1$ is mapped to $((e(y_1),\dots,e(y_k)),R)$.

  Using these new notions, a~path pattern in an instance $\inst I$ (see the
  definition in Section~\ref{sect:pattern}) can alternatively be defined as an
  instance, with beginning and end chosen among the leaf variables, whose
  incidence graph is a path from beginning to end, together with a homomorphism
  into $\inst I$.  Similarly we define a {\em path pattern} in a (not
  necessarily binary) instance $\inst I$ as an instance $\inst J$, with chosen
  beginning/end leaf variables,  whose incidence graph, after removing all the
  other vertices of degree one, is a path from beginning to end, together with
  a~homomorphism $e\colon \inst J\rightarrow\inst I$.  Addition of path patterns
  and propagation are defined in an analogous way as for patterns with binary
  constraints~(see Section~\ref{sect:pattern}).

  For a $k$-ary relation $R$, let $\pr_i(R)=\{a_i\mid
  (a_1,\ldots,a_i,\ldots,a_k)\in R\}$.  A CSP instance $\inst J$ is called {\em
  arc-consistent} in sets $D_x$~($x$ ranges over variables
  of $\inst J$) if, for any variable $x$ and any constraint
  $((x_1,\ldots,x_k),R)$ in $\inst J$, if $x_i=x$ then $\pr_i(R)=D_x$.
  We say that a~CSP instance $\inst J$ satisfies condition {\em (PQ)} in sets $D_x$ if 
  \begin{enumerate} 
    \item $\inst J$ is arc-consistent in these sets and \label{cond:pqac}
    \item for any variable $x$, \label{cond:pqpq}
      any path patterns $p,q$ from $x$ to $x$, and 
      any $a\in D_x$ 
      there exists $j$ such that 
      \( a\in \{a\} + (j(p+q) + p) \).
  \end{enumerate}
  Note that if the instance $\inst J$ is binary then (PQ) implies \IPQ{n} for
  all $n$~(setting $D_x^i=D$ if $i=|D|+1$ and $D_x^i=D_x$ if $i<|D|+1$).

  The following fact, 
  a special case of Theorem A.2 in~\cite{Kozik16:circles}, 
  provides solutions for (PQ) instances.
  \begin{theorem} \label{thm:nice}
    If $\Gamma'$ is a~constraint language with a near-unanimity polymorphism, 
    then every instance of $\CSP{\Gamma'}$ satisfying condition (PQ) is satisfiable.
  \end{theorem}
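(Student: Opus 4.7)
The plan is to establish satisfiability of $\inst J$ by a gradual refinement argument: starting from the given instance with domains $D_x$, iteratively shrink the $D_x$ while preserving (PQ), until every $D_x$ is a singleton. At that point, arc-consistency forces the unique induced assignment to satisfy every constraint, yielding a solution. The outer induction is on $\sum_x |D_x|$, so the inductive step reduces to finding some variable $x$ with $|D_x| > 1$ together with a proper nonempty subset $D'_x \subsetneq D_x$ such that replacing $D_x$ by $D'_x$ and then closing under arc-consistency yields an instance that still satisfies (PQ).

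The candidates for $D'_x$ come from the loop-closure data encoded in (PQ) condition~(2). For any path pattern $p$ from $x$ to $x$, the set-valued map $A \mapsto A + p$ on $2^{D_x} \setminus \{\emptyset\}$ has fixed points by finiteness; the (PQ) condition, phrased through the compound pattern $j(p+q)+p$, is precisely what guarantees that for every $a \in D_x$ and every pair of loops $p, q$ some iterate brings $a$ back into the set it started in. Concretely, I would look for a subset $D'_x$ that is closed under $+p$ for every loop $p$ at $x$ and that is proper in $D_x$, extracting it from the pattern semigroup using the interaction between $p$ and a ``return'' witness $q$ that (PQ) supplies. Provided $|D_x| > 1$, combinatorial arguments along these lines should produce a suitable $D'_x$.

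The main obstacle is verifying that (PQ) continues to hold in the refined instance for loops at variables $y \ne x$, after the restriction at $x$ has propagated. A loop at $y$ need not lift to a loop whose intermediate realizations stay inside the shrunken domains---the witness realization in the original instance could pass through values of $x$ that have been eliminated. This is where the $(n+1)$-ary NU polymorphism is indispensable: one combines $n+1$ realizations using the NU operation to force the combined realization to take values inside the restricted domains where needed, and the combined tuple still lies in each constraint relation because relations preserved by an $(n+1)$-ary NU are $n$-decomposable by the classical Baker--Pixley result~\cite{Baker75:chinese-remainder}. Since the theorem is quoted as a special case of Theorem~A.2 in~\cite{Kozik16:circles}, the fully rigorous argument is expected to require substantial pattern-algebra bookkeeping to certify that (PQ) survives the refinement; a cleaner alternative route would be to first refine the instance to be $(k, k+1)$-consistent for suitable $k$ while preserving (PQ), and then invoke the classical bounded strict width theorem for NU polymorphisms~\cite{Feder98:monotone,Jeavons98:consist}.
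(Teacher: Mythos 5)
There is a genuine gap --- in fact the paper itself does not prove this statement at all: it imports it as a special case of Theorem~A.2 of~\cite{Kozik16:circles}, and the content of that theorem is exactly the part your sketch leaves open. Your overall strategy (shrink the sets $D_x$ while preserving (PQ) until they are singletons, then read off the solution from arc-consistency) is the right general shape for such arguments, but the two pivotal steps are only declared, not proved. First, the existence of a proper nonempty $D'_x\subsetneq D_x$ that is ``closed under $+p$ for every loop $p$ at $x$'' is asserted via ``combinatorial arguments along these lines should produce a suitable $D'_x$''; no construction is given, and mere closure under loop propagation is not the right notion --- the objects that make this work are (nu-)absorbing subuniverses, and proving that suitable ones exist and can be used here is a substantial piece of the absorption machinery, not a routine finiteness argument about the pattern semigroup. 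Second, you correctly identify that the heart of the matter is showing (PQ), arc-consistency and non-emptiness survive the restriction at $x$ after propagation, but your proposed tool is wrong: $n$-decomposability (Baker--Pixley) tells you a relation is determined by its projections onto $n$ coordinates, and it does not let you combine $n+1$ realizations so that the result stays inside the shrunken domains. What that combination step actually needs is that each restricted set/relation nu-absorbs the original one (all arguments but one in the smaller set forces the output into the smaller set) --- precisely the hypothesis of Corollary~B.2 of~\cite{Kozik16:circles} that the paper invokes elsewhere --- and establishing that absorption configuration, together with the ``pattern-algebra bookkeeping'' you defer, is the actual proof, which is why the paper cites it rather than reproving it.

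The suggested alternative route has the same gap in a different place: if you first enforce $(k,k+1)$-consistency and then appeal to the bounded strict width theorem of~\cite{Feder98:monotone,Jeavons98:consist}, the entire difficulty is to show that the consistency-enforcement does not empty any domain on an instance that merely satisfies (PQ); (PQ) is a condition about propagation along path patterns between the levels of arc-consistent sets, and it does not obviously control the higher-order constraints that strong $k$-consistency introduces. No argument is offered for this, and it is not easier than the theorem itself. So the proposal should be read as a plausible plan aligned with the known absorption-based proofs, but with the decisive lemmas missing and one key tool (decomposability in place of absorption) misidentified; as written it does not constitute a proof, whereas the paper's treatment is a citation to the external result that supplies exactly these missing ingredients.
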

  
  Finally, a standard algebraic notion has not been defined yet:
  having fixed $\Gamma$ over a set $D$, a subset $A\subseteq D$ is a {\em subuniverse} if, for any
  polymorphism $g$ of $\Gamma$, we have $g(a_1,a_2,\ldots)\in A$ whenever
  $a_1,a_2,\ldots \in A$. For any $S\subseteq D$, the subuniverse {\em generated
  by} $S$ is defined as
  \[
    \{g(a_1,\ldots,a_r)\mid
      r\ge 1, a_1,\ldots,a_r\in S,
    \text{$g$ is an $r$-ary polymorphism of $\Gamma$}
    \} 
  \]

  \subsection{Into the proof}
    We begin the proof of Theorem~\ref{thm:nicelevels}.  We fix a binary
    language $\Gamma$ compatible with an $(n+1)$-ary NU polymorphism and an
    instance $\inst I$ of $\CSP\Gamma$ which satisfies $\IPQ{n}$ with sets
    $D_x^\ell$.  Note that we can assume that all $D_x^\ell$'s are subuniverses.
    If this is not the case, we replace each $D_x^\ell$ with the subuniverse generated by
    it.  It is easy to check that~(after the change) the instance $\calI$ still
    satisfies $\IPQ{n}$ with such enlarged $D_x^\ell$'s. 
    
    For each variable $x$, choose and fix an arbitrary index $i$ such that $D_x^i=D_x^{i+1}$ and call it
    the {\em level} of $x$.  Note that each variable has a level~
    (since the sets $D_x^\ell$ are non-empty and $\ell$ ranges from 1 to $|D|+1$).
    Let $V^i$ denote the set of variables of level $i$ and 
    $V^{< i}, V^{\le i},\dots$ be defined in the natural way.
    
    Our proof of  Theorem~\ref{thm:nicelevels} will proceed by applying
    Theorem~\ref{thm:nice} to $\inst I$ restricted to $V^1$, then to $V^2$ and so on.
    However, in order to obtain compatible solutions,
    we will add constraints to the restricted instances.

  \subsection{The instances in levels}
    Let $\inst I^i$~(for $i\leq |D|$) be the instance defined as follows:
    \begin{enumerate}
      \item $V^i$ is the set of variables of $\inst I^i$;
      \item $\inst I^i$ contains, for every $n$-tree pattern $t$ of
        $\inst I$, the constraint $((x_1,\dots,x_k), R)$ defined in the
        following way: \label{enum:newconstraints}
        let $v_1,\dotsc,v_k$ be all the vertices of $t$ labeled by variables
        from $V^i$, then $x_1,\dots,x_k$ are the labels of $v_1,\dots,v_k$
        respectively and
        \[
          R = \{ (r(v_1),\dots,r(v_k)) \mid r
            \text{ is a $i$-realization of $t$ (i.e., inside sets $D_x^{i+1}$)} \}.
        \]
    \end{enumerate}

This definition has a number of immediate consequences: First, every binary
    constraint between two variables from $V^i$ is present in $\inst I^i$~
    (as it defines a two-element $n$-tree).  Second, note that if some $n$-tree contains a vertex $v_j$ in $V^i$ which is not a leaf then 
by splitting the tree $t$ at
    $v_j$~(with $v_j$ included in both parts) we obtain two trees defining constraints which together are equivalent to the constraint defined by $t$. 
    This implies that by including only the constraints defined by $n$-trees $t$ such that only the leaves can be from $V^i$, we obtain an equivalent~(i.e., having the same
    set of solutions) instance.  Throughout most of the proof we will be working
    with such a restricted instance.  In this instance the arity of constraints
    is bounded by $n$.

Since the arity of a constraint in $\inst I^i$ is bounded and
    the size of the universe is fixed, $\inst I^i$
   is a finite instance, even though some constraints in it can be defined via infinitely many 
$n$-tree patterns.   It is easy to see that all the relations in the
    constraints are preserved by all the polymorphisms of $\Gamma$.
    
    The instance $\inst I^i$ is arc-consistent with sets $D_x^i(=D_x^{i+1})$:
    Let $((x_1,\dotsc,x_k),R)$ be a constraint defined by  $v_1,\dotsc, v_k$ in
    $t$ and let $a\in D_{x_j}^i$.  By \IPQ{n} there is a realization of $t$ in
    $D_x^{i+1}$ mapping $v_j$ to $a$ and thus  $D_{x_j}^i\subseteq \pr_j R$.  On
    the other hand, as $D_{x_j}^i=D_{x_j}^{i+1}$ and every tuple in $R$ comes
    from a realization inside the sets $D_x^{i+1}$'s, we get $\pr_j R\subseteq D_{x_j}^i$.
    
    Next we show that $\inst I^i$ has property (PQ).
    Part \ref{cond:pqac} of the definition was established in the paragraph
    above. For part \ref{cond:pqpq}, let $p$ and $q$ be arbitrary path patterns
    from $x$ to $x$ in $\inst I^i$.  
    Define $p'$ and $q'$ to be the paths of trees
    in $\inst I$ obtained, from $p$ and $q$, respectively, by replacing (in the natural way) each constraint in
    $p$ and $q$ with the~tree that defines it (we use the fact that each
    constraint is defined by leaves of a  tree).  We apply property \IPQ{n} for
    $\inst I$ with $\ell=i$ and patterns $p'$ and $q'$ to get that, for any
    $x\in V^i$ and any $a\in D_x^i$, there is a number $j$ such that $a\in
    \{a\}+^i(j(p'+q')+p')$.     The property (PQ) follows immediately.

    Since $\inst I^i$ has the property (PQ) then, by  Theorem~\ref{thm:nice}, it
    has a solution.  The solution to $\inst I$ will be obtained by taking the
    union of appropriately chosen solutions to $\inst I^1,\dotsc, \inst
    I^{|D|}$.

  \subsection{Invariant of the iterative construction}
    A global solution, denoted $\sol\colon V\to D$, 
    is constructed in steps. 
    At the start, we define it for the variables in $V^1$ by
    choosing an arbitrary solution to $\inst I^1$. 
    
    In step $i$ we extend the definition of $\sol$ from $V^{<i}$ to $V^{\le i}$,
    using a carefully chosen solution to $\inst I^i$.
    Our construction will maintain the following condition:
    \begin{description}
      \item{($E_i$)} every $n$-tree pattern in $\inst I$ has a realization inside the sets $D_x^{i+1}$
        which agrees with $\sol$ on $V^{\le i}$.
    \end{description}
    Note that, after the first step,  the condition $(E_1)$ is guaranteed 
    by the constraints of $\inst I^1$. 

    Assume that we are in step $i$: we have already defined $\sol$ on $V^{<i}$
    and condition $(E_{i-1})$ holds. Our goal is to extend $\sol$ by a~solution
    of $\inst I^i$ in such a~way that $(E_{i})$ holds.
    The remainder of Section~\ref{sect:proofof3} is devoted to proving that such
    a solution exists.

    Once we accomplish that, we are done with the proof:
    Condition $(E_{i})$ implies that $\sol$ is defined on $V^{\le i}$, and for
    every constraint $((x,y),R)$ between $x,y\in V^{\le i}$ the pattern from $x$
    to $y$ containing a~single edge labeled by $((x,y),R)$ is an~$n$-tree.  This
    implies that $\sol$ satisfies $((x,y),R)$ i.e., it is a solution on $V^{\le
    i}$.  After establishing $(E_{|D|})$ we obtain a solution to $\inst I$.

  \subsection{Restricting \texorpdfstring{$\inst I^i$}{the instance in level i}} 
    We begin by defining a new instance $\inst K^i$: it is defined almost
    identically to $\inst I^i$, but in part~\ref{enum:newconstraints} of the
    definition we require that the realization $r$ sends vertices from $V^{<i}$
    according to $\sol$.  As in the case of $\inst I^i$ we can assume that all
    the constraints are defined by leaves of the tree.  Thus every $n$-tree
    pattern with no internal vertices in $V^{i}$ defines one constraint in $\inst
    I^i$ and another in $\inst K^i$.
    Just like $\inst I^i$, the instance $\inst K^i$  is finite. 

    Note that we yet need to establish that constraints of $\inst K^i$ are
    non-empty, but the following claim, where $f$ is the fixed $(n+1)$-ary near
    unanimity polymorphism, holds independently.
    
    \begin{claim}\label{claim:absorbing}
      Let $((x_1,\dotsc,x_k),R)$ and $((x_1,\dotsc,x_k),R')$ be constraints 
      defined by the same tree $t$ in $\inst I^i$ and $\inst K^i$~(respectively).
      If $\tuple a_1,\dotsc,\tuple a_{n+1}\in R'$, $\tuple a\in R$, and $j\in
      \{1,\dots,n+1 \}$
      then $f(\tuple a_1,\dotsc,\tuple a_{j-1},\tuple a,\tuple a_{j+1},\dotsc,\tuple a_{n+1})$~
       belongs to $R'$.
    \end{claim}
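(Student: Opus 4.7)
The plan is to prove the claim by explicit construction: take realizations of the tree $t$ witnessing the given tuples and combine them componentwise using $f$ to produce a realization that witnesses membership in $R'$. The near-unanimity identity plays the absorbing role that ``corrects'' the one realization that may disagree with $\sol$.

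First, I would unpack the definitions. Fix realizations $r_1,\dotsc,r_{n+1}$ of $t$ witnessing $\tuple a_1,\dotsc,\tuple a_{n+1}\in R'$ (so each $r_s$ is an $i$-realization and sends every vertex labeled by a variable from $V^{<i}$ according to $\sol$), and an $i$-realization $r$ of $t$ witnessing $\tuple a\in R$ (which need not agree with $\sol$). Define a map $r^*$ on the vertices of $t$ by
\[
  r^*(v) = f\bigl(r_1(v),\dotsc,r_{j-1}(v),\,r(v),\,r_{j+1}(v),\dotsc,r_{n+1}(v)\bigr).
\]
It suffices to show that $r^*$ is an $i$-realization of $t$ which agrees with $\sol$ on $V^{<i}$: then the tuple of its values on $v_1,\dotsc,v_k$ lies in $R'$ and is, by construction, the componentwise application of $f$ asserted in the claim.

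Next I would verify the three required properties. (i) $r^*$ is a realization of $t$: this is immediate because $f$ is a polymorphism of $\Gamma$, and $f$ applied componentwise to $n+1$ realizations of $t$ produces another realization of $t$. (ii) $r^*$ is an $i$-realization: if $v$ is labeled by $x$, then each $r_s(v)$ lies in $D_x^{i+1}$ and $r(v)\in D_x^{i+1}$; since we have already arranged that every $D_x^{i+1}$ is a subuniverse, the value $r^*(v)$ also lies in $D_x^{i+1}$. (iii) $r^*$ agrees with $\sol$ on $V^{<i}$: if $v$ is labeled by some $x\in V^{<i}$, then $r_1(v)=\dotsb=r_{n+1}(v)=\sol(x)$, so the $n+1$ arguments of $f$ are all equal to $\sol(x)$ except possibly the $j$-th, which is $r(v)$. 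The near-unanimity identity for $f$ gives $r^*(v)=\sol(x)$ regardless of $r(v)$. This is the one essential use of the NU axiom.

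Combining (i)--(iii) shows that $r^*$ is an $i$-realization of $t$ that respects $\sol$ on $V^{<i}$, so the tuple $(r^*(v_1),\dotsc,r^*(v_k))$ belongs to $R'$, which is precisely $f(\tuple a_1,\dotsc,\tuple a_{j-1},\tuple a,\tuple a_{j+1},\dotsc,\tuple a_{n+1})$. I do not anticipate a real obstacle here: the content of the claim is exactly that the NU polymorphism absorbs a single ``bad'' realization, and the subuniverse assumption on the sets $D_x^{i+1}$ (already secured at the start of the proof) takes care of keeping things inside the prescribed domains.
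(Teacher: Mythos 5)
Your proof is correct and follows exactly the same approach as the paper's: combine the realizations componentwise by $f$ and observe that the result is an $i$-realization that respects $\sol$ on $V^{<i}$ thanks to the near-unanimity identity. You even spell out the $i$-realization check (using that the $D_x^{i+1}$ are subuniverses) that the paper leaves implicit under ``this is clearly a realization''.
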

    \begin{proof}
      Let $r_i$ be a realization of $t$ defining $\tuple a^i$; this realization
      sends all the vertices of $t$ labeled by variables from $V^{<i}$ according
      to $\sol$.  Let $r$ be a realization of $t$ defining $\tuple a$.

      Define a~function, from vertices of $t$ into $D$,
      sending a~vertex $v$ to
      \[
        f(r_1(v),\dotsc,r(v),\dotsc,r_{n+1}(v))
      \]
      (where $r(v)$ is in position $j$). 
      This is clearly a realization, and if $v$ is labeled by $x\in V^{<i}$ it sends $v$ according to $\sol$~
      (since $f$ is a near-unanimity operation).
      The new realization witnesses that $f(\tuple a_1,\dotsc,\tuple a_{j-1},\tuple a,\tuple a_{j+1},\dotsc,\tuple a_{n+1})$ belongs to $R'$.
    \end{proof}
    In order to proceed we need to show that the instance $\inst K^i$ contains a non-empty, arc-consistent subinstance, i.e., an arc-consistent instance (in some non-empty sets $D_x$) obtained from $\inst K^i$ by restricting every constraint in it so that each coordinate can take value only in the appropriate set $D_x$.
    
    A proof of this claim is the subject of the next section.

  \subsection{Arc-consistent subinstance of~$\inst K^i$}
    
    In order to proceed with the proof we need an additional definition.  Let
    $e\colon \inst J_1 \to \inst J_2$ be an instance homomorphism.  If for any
    variable $y$ of $\inst J_1$ and any constraint $((x_1,\dots,x_k),R)$ of
    $\inst J_2$ with $e(y) = x_i$~(for some $i$) the constraint
    $((x_1,\dots,x_k),R)$ has exactly one preimage $((y_1,\dots,y_k),R)$ with
    $y = y_i$, we say that $e$ is a~\emph{covering}.  A~\emph{universal
    covering tree instance} $\uct(\inst J)$ of a connected instance $\inst J$
    is a~(possibly countably infinite) tree instance $\inst T$ together with
    a~covering $e\colon\inst T\to \inst J$ satisfying some additional
    properties.  If $\inst J$ is a tree instance, then one can take $\uct(\inst
    J)=\inst J$, otherwise $\uct(\inst J)$ is always infinite.  If an instance
    $\inst J$ is disconnected  then $\uct(\inst J)$ is a disjoint union of
    universal covering tree instances for connected components of $\inst J$.
    
    Several equivalent (precise) definitions of $\UCT$ can be found in Section
    5.4 of~\cite{Kozik16:circles} or Section 4 of~\cite{Kun12:robust}. For our
    purposes, it is enough to mention that, for any $\inst J$, the instance
    $\uct(\inst J)$~(with covering $e$)  has the following two properties.  For
    any two variables $v,v'$ satisfying  $e(v)=e(v')$ there exists an
    endomorphism $h$ of $\uct(\inst J)$ (i.e., a~homomorphism into itself)
    sending $v$ to $v'$ and such that $e\circ h =e$.  Similarly for constraints
    $C$ and $C'$ if $e(C)=e(C')$ then there is an endomorphism $h$ such that
    $h(C)=C'$ and $e\circ h = e$.  It is well known that $\UCT(\inst J)$ has a
    solution if and only if $\inst J$ has an arc-consistent subinstance.

    Consider $\UCT(\inst K^i)$ and fix a~covering $e'\colon \UCT(\inst K^i)
    \rightarrow \inst K^i$. Let $\tuct^i$ be an instance obtained from
    $\UCT(\inst K^i)$ by replacing each constraint $C$ in it by a~tree that
    defines $e'(C)$, each time introducing a~fresh set of variables for the
    internal vertices of the trees.  Let $e$ be the instance homomorphism from
    $\tuct^i$ to $\inst I$ defined in the natural way.  We call a solution~(or
    a partial solution) to $\tuct^i$ {\em nice} if it maps each $v$ into
    $D_{e(v)}^{i+1}$ and moreover if $e(v)\in V^{<i}$ then $v$ is mapped to
    $\sol(e(v))$.  It should be clear that nice solutions to $\tuct^i$
    correspond to solutions of $\UCT(\inst K^i)$ (although the correspondence
    is not one-to-one).

    \begin{claim}\label{claim:almostAC}
      There exists a~nice solution of $\tuct^i$.
    \end{claim}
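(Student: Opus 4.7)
\medskip

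My plan is to reduce the claim to showing that $\inst K^i$ admits an arc-consistent subinstance with non-empty domains $D'_x \subseteq D_x^i$. By the fact recalled just before the claim statement, such a subinstance corresponds to a solution of $\UCT(\inst K^i)$; this solution then lifts to a nice solution of $\tuct^i$ by filling in the internal vertices of each expanded $n$-tree using a nice realization of that $n$-tree in $\inst I$, which exists by $(E_{i-1})$ (after adapting it, through the NU polymorphism, to match the prescribed leaf values). The relevant sets $D_x^{i+1}$ for internal vertices lying above the levels already fixed, and $\{\sol(e(v))\}$ for $v$ with $e(v) \in V^{<i}$, are respected because nice realizations supplied by $(E_{i-1})$ satisfy these conditions.

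The starting point for producing the arc-consistent subinstance of $\inst K^i$ is to apply Theorem~\ref{thm:nice} to $\inst I^i$ (which is legitimate since we have already verified that $\inst I^i$ satisfies (PQ) in sets $D_x^i$ and is preserved by the NU polymorphism). This yields a solution $\sigma$ of $\inst I^i$, giving, for every constraint of $\inst I^i$, a canonical tuple in the larger relation $R$. The key additional tool is Claim~\ref{claim:absorbing}: each $\inst K^i$-constraint $R'$ absorbs the corresponding $\inst I^i$-constraint $R$ under $f$, so a tuple from $R$ combined via $f$ with $n$ tuples from $R'$ produces a new tuple in $R'$.

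The main obstacle will be to keep the candidate domains non-empty as one performs the arc-consistency closure of $\inst K^i$ starting from $D_x^i$. My proposed invariant is that at every stage of the shrinking, (i) the current restricted $\inst I^i$-relations still contain an ``anchor'' tuple descended from $\sigma$ (built up iteratively by applying $f$ to $\sigma$ and surviving $\inst K^i$-tuples), and (ii) the restricted $\inst K^i$-relations still absorb the corresponding restricted $\inst I^i$-relations via $f$, as in Claim~\ref{claim:absorbing}. Whenever a shrinking step threatens to empty a restricted $\inst K^i$-relation, one invokes absorption with the anchor tuple to produce a replacement tuple staying inside the current domains. Propagation along paths in $\inst K^i$ (which unfold to paths of $n$-trees in $\inst I$) is controlled by $\IPQ{n}$ at level $i$, which one uses to certify that the anchor values can be carried around cycles of constraints without collapsing the subuniverses involved; here it is crucial that each $D_x^\ell$ was replaced at the start of Section~7 by a subuniverse, so that all the sets produced by applications of $f$ remain inside the right ambient subuniverses.

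Combining these, the arc-consistency closure of $\inst K^i$ inside the $D_x^i$ terminates in non-empty subuniverses $D'_x$, giving the desired arc-consistent subinstance. Lifting to $\UCT(\inst K^i)$ and expanding through $(E_{i-1})$ then yields the nice solution of $\tuct^i$. The delicate interplay between the global propagation supplied by $\IPQ{n}$ and the local absorption from Claim~\ref{claim:absorbing} is the main technical challenge, and is what the remainder of Section~\ref{sect:proofof3} is devoted to making precise.
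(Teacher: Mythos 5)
Your high-level reduction — find a non-empty arc-consistent subinstance of $\inst K^i$, lift to $\UCT(\inst K^i)$, then unfold to $\tuct^i$ — is legitimate as a reformulation of what needs to be shown (the equivalence with $\UCT$ solvability is stated in the paper), and the lift to $\tuct^i$ works, though it needs no NU mixing: once a tuple of $R'$ is chosen, the existence of a realization of the defining tree lying in $D_x^{i+1}$ and agreeing with $\sol$ on $V^{<i}$ is immediate from the definition of $R'$. The serious gap is in the core: your proposed way of keeping arc-consistency from collapsing. You suggest that ``whenever a shrinking step threatens to empty a restricted $\inst K^i$-relation, one invokes absorption with the anchor tuple to produce a replacement tuple staying inside the current domains.'' This is not a valid move. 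Arc-consistency deletes values because they lack support in a neighbouring relation; absorption with an anchor from $\sigma$ (a solution of $\inst I^i$, not $\inst K^i$) produces a tuple that is in $R'$ by Claim~\ref{claim:absorbing}, but nothing guarantees that tuple survives the restrictions already imposed elsewhere in the instance, so it cannot serve as the needed support. Absorption is a structural property of the relations; it does not let you reintroduce tuples into an already-pruned subinstance. Moreover, your invariant (ii) presupposes the restricted $\inst K^i$-relations are non-empty at every stage, which is precisely what must be proved.

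The paper's actual argument is quite different and more concrete: it reduces, by compactness (on each connected component), to producing a nice solution for every finite subtree of $\tuct^i$, and then runs a minimal-counterexample argument. A minimal bad subtree $\inst T$ has only leaves mapped into $V^{<i}$ (otherwise split at an interior such vertex and merge the two nice solutions). If $\inst T$ has at most $n$ such leaves, $(E_{i-1})$ gives a realization on the subtree they span, which is then extended step by step to all of $\inst T$ using $\IPQ{n}$ while maintaining a boundary invariant about values lying in $D^i_{e(v)}$. If it has at least $n+1$ such leaves, removing each of $v_1,\dots,v_{n+1}$ in turn yields nice solutions $s_1,\dots,s_{n+1}$ by minimality; extending each $s_j$ by assigning some compatible value to $v_j$ (using $\IPQ{n}$, possibly not matching $\sol$) and then applying the $(n+1)$-ary NU $f$ coordinatewise gives a nice solution of $\inst T$, contradiction. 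None of this is an arc-consistency closure argument, and your final sentence — deferring the proof to ``what the remainder of Section~\ref{sect:proofof3} is devoted to making precise'' — is circular, since that remainder \emph{is} the proof of this claim followed by further steps that presuppose it.
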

    \begin{proof}
      If $\tuct^i$ is not connected, we consider each connected component separately
      and then take the union of nice solutions.  Henceforth we assume that $\tuct^i$
      is connected.  By a standard compactness argument, it suffices to find a
      nice solution for every finite subtree of $\tuct^i$.  Suppose,
      for a contradiction, that $\inst T$ is a minimal finite subtree of $\tuct^i$
      without nice solutions.
      
      First, only the leaf vertices of $\inst T$ can be mapped, by $e$, into
      variables from $V^{<i}$.  Indeed, if an internal vertex is mapped to a variable in $V^{<i}$, we can
      split the tree at this vertex into two parts, obtain~(from the minimality of $\inst T$)
      nice solutions to both parts~(which need to map the splitting vertex
      according to $\sol$, i.e., to the same element) and merge these solutions
      to obtain a nice solution to $\inst T$.  This is a contradiction.

Second we show that $\inst T$ has more than $n$ leaves mapped by $e$ into
$V^{<i}$.  Assume that $\inst T$ has $n$ or fewer leaves mapped to $V^{<i}$ and
let $\inst T'$ be the smallest subtree of $\inst T$ with these leaves.
Then $\inst T'$ is an $n$-tree and by $(E_{i-1})$ we obtain a solution $s$ to
$\inst T'$ in $D^i_x$'s which sends leaves of $\inst T'$ according to $\sol$.
It remains to extend $s$ to a solution of $\inst T$ in $D^{i+1}_x$'s. This
extension is done in a sequence of steps. In each step $s$ is defined for
increasingly larger subtrees of $\inst T$. Furthermore, in each step the
following condition (*) is satisfied by $s$:
if a vertex $v$ has a value assigned by $s$ and a neighbour without such value
then $s(v)$ belongs to $D^i_{e(v)}$.
Clearly, this condition holds in the beginning.
In each step we pick a~constraint $C$ on a~vertex $v$ with an assigned value and
a~vertex $v'$ without such a value. (Note that the constraints of $\inst T^i$,
and consequently of $\inst T$, are binary.)
$C$ has been added to $\inst T^i$ by replacing a constraint of $\uct(\inst K^i)$ with an $n$-tree ${\inst T_C}$ that defines it.
Let $\inst S$ be a~maximal subtree of ${\inst T}$ such that
it contains $C$, it has $v$ as a leaf, and all other nodes in $\inst S$ have not been assigned by $s$ and belong
to ${\inst T_C}$. 
Since ${\inst T_C}$ is a $n$-tree, $\inst S$ is also an $n$-tree, and we can use $\IPQ{n}$ to derive that there
exists a solution, $s'$, of ${\inst S}$ in $D_x^{i+1}$'s that sends $v$ to $s(v)\in D^i_{e(v)}$. More specifically,
we apply $\IPQ{n}$ with $x=v$, $a=s(v)$, and both $p$ and $q$ being the same pattern $t_1+t_2$ such that $t_1$ is  $\inst S$ with beginning $v$ and end being any other leaf of $\inst S$, and $t_2$ is $t_1$ with beginning and end swapped.
This solution $s'$ can be added to $s$ (as
the values on $v$ are the same). It remains to see that condition (*) is preserved after extending $s$ with $s'$. Indeed, let $u$ be
any vertex such that after adding solution $s'$ has a neighbour $u'$ that has not yet been assigned. We can assume that $u$
is one of the new variables assigned by $s'$. If $e(u)\in V^i$ then the claim
follows from the fact that $D^{i+1}_{e(u)}=D^i_{e(u)}$ so we can assume that
$e(u)\not\in V^i$. However, in this case, all neighbours of $u$ in $\inst T$ must be in $\inst T_C$, so the constraint in ${\mathcal T}$
containing both $u$ and $u'$ must be also in ${\inst T_C}$ contradicting the
maximality of ${\inst S}$.

      So the counterexample $\inst T$ must have at least $n+1$ leaves mapped
      into $V^{<i}$.  Fix any $n+1$ of such leaves $v_1,\dotsc,v_{n+1}$ and let
      $\inst T_j$, for $j=1,\dots,n+1$, denote a~subinstance of $\inst T$
      obtained by removing $v_j$ together with the single constraint containing
      $v_j$: $((v_j,v'_j),R_j)$ from $\inst T$.  Clearly, $v'_j$ is not a
      leaf~(as it would make our $\inst T$ a two-element instance) and by the
      fact that only leaves can be mapped into $V^{<i}$ we get that $e(v'_j)
      \in V^{i}$ or $e(v'_j)\in V^{>i}$ and, in the last case, $i\neq|D|$.
      
      By minimality, each $\inst T_j$ has a~nice realization, say $s_j$.  Now
      either $e(v'_j)\in V^i$ and $s_j(v'_j)\in D_{e(v'_j)}^i =
      D_{e(v'_j)}^{i+1}$ or $e(v'_j)\in V^{>i}, s_i(v'_j)\in D_{e(v'_j)}^{i+1}$
      and $i+1\neq |D|+1$.  In both cases $s_j(v'_j)\in D_{e(v'_j)}^{i'}$ for
      $i'\leq |D|$ and thus, by \IPQ{n}, there exists $a_j\in D$ such that
      $(a_j,s_j(v'_j))\in R_i$.  We let $s'_j$ be the realization of $\inst T$
      obtained by extending $s_j$ by mapping $v_j$ to $a_j$.  The last step is
      to apply the $(n+1)$-ary near unanimity operation coordinatewise to
      $s'_j$'s (in a~way identical to the one in the proof of
      Claim~\ref{claim:absorbing}).  The application produces a~nice
      realization of $\inst T$.  This contradiction finishes the proof of the
      claim.
    \end{proof}
    
    We will denote the arc-consistent subinstance of $\inst K^i$~
    (which is about to be constructed) by $\inst L^i$.
    The variables of $\inst L^i$ and $\inst K^i$~(or indeed $\inst I^i$) are the same.
    For every constraint $(\tuple x, R)$ in $\inst K^i$
    we introduce a constraint $(\tuple x, R')$ into $\inst L^i$
    where 
    \begin{equation*}
      R' = \{\tuple{a}\colon \tuple a = s(\tuple y) \text{ where $s$ is a solution to $\UCT(\inst K^i)$ and $e'((\tuple y,R)) = (\tuple x, R)$}\}
    \end{equation*}
    where $e'$ is an instance homomorphism mapping $\UCT(\inst K^i)$ to $\inst K^i$.
    In other words we restrict a relation in a constraint of $\inst K^i$ by allowing only the tuples 
    which appear in a solution of the $\UCT(\inst K^i)$~
    (at this constraint).
    
    All the relations of $\inst L^i$ are preserved by all the polymorphisms of $\Gamma$,
    and are non-empty~(by Claim 2).
    The fact that $\inst L^i$ is arc-consistent
    is an easy consequence of the endomorphism structure of universal covering trees.
    Finally Claim~\ref{claim:absorbing} holds for $\inst L^i$:

    \begin{claim}\label{claim:absorbingL}
      Let $((x_1,\dotsc,x_k),R)$ and $((x_1,\dotsc,x_k),R')$ be constraints 
      defined by the same tree $t$ in $\inst I^i$ and $\inst L^i$, respectively.
      Let $\tuple a_1,\dotsc,\tuple a_{n+1}\in R'$ and  $\tuple a\in R$, 
      then $f(\tuple a_1,\dotsc,\tuple a,\dotsc,\tuple a_{n+1})$, 
      where $f$ is the $(n+1)$-ary near unanimity operation and  
      $\tuple a$ is in position $j$,  belongs to $R'$.
    \end{claim}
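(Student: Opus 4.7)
The plan is to extend the argument of Claim~\ref{claim:absorbing} one level up: where that proof combined $n+1$ realizations of the single tree $t$ using $f$, here the $n$ witnesses for $k\ne j$ will come from actual solutions of $\UCT(\inst K^i)$ aligned at a common preimage of $(\tuple x,R)$, and the role of the extra realization $r$ (which only witnesses $\tuple a\in R^{\inst I^i}$) will be played by a solution of the structurally identical but strictly looser instance $\UCT(\inst I^i)$.

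First I would align all the $\inst L^i$-witnesses at a single preimage. By the definition of $R'$ in $\inst L^i$, for each $k$ there is a solution $\hat s_k$ of $\UCT(\inst K^i)$ and a preimage $(\tuple y_k,R)$ of $(\tuple x,R)$ under $e'$ with $\hat s_k(\tuple y_k)=\tuple a_k$. Fix one preimage $(\tuple y,R)$ and invoke the endomorphism property of $\UCT$ recalled earlier (for any two preimages of the same constraint there is an endomorphism $h$ with $h(\tuple y)=\tuple y_k$ and $e'\circ h=e'$); then the composition $s'_k := \hat s_k\circ h$ is a solution of $\UCT(\inst K^i)$ with $s'_k(\tuple y)=\tuple a_k$, uniformly for every $k$.

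The main technical step is to produce a function $g$ on the variables of $\UCT(\inst K^i)$ with $g(\tuple y)=\tuple a$ such that for every constraint $(\tuple z,R^{\inst K^i}_C)$ of $\UCT(\inst K^i)$ one has $g(\tuple z)\in R^{\inst I^i}_C$ --- equivalently, $g$ is a solution of $\UCT(\inst I^i)$, which shares the incidence structure of $\UCT(\inst K^i)$ but uses the weaker $\inst I^i$-relations (no $\sol$ requirement on $V^{<i}$ internal vertices). Three ingredients enter: $\tuple a\in R^{\inst I^i}$ forces every entry of $\tuple a$ to lie in the appropriate $D^i_{x_l}$; $\inst I^i$ is arc-consistent in the sets $D^i_x=D^{i+1}_x$ (as established earlier in Section~\ref{sect:proofof3}), so $\UCT(\inst I^i)$ inherits this arc-consistency; and the incidence graph of $\UCT(\inst K^i)$ is a tree. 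Initialising $g(\tuple y)=\tuple a$ and removing the constraint $(\tuple y,R)$ splits $\UCT(\inst K^i)$ into branches, each attached to $\tuple y$ through a single boundary variable whose value is already fixed; arc-consistency of $\inst I^i$ then lets me extend $g$ constraint by constraint outward along each branch without ever creating a conflict, and a standard compactness (K\"onig-type) argument handles the (possibly infinite) case. This construction of $g$ is where I expect the bulk of the work to lie; once it is in hand, the rest is routine.

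With $g$ in hand, define $s^*(v):=f(s'_1(v),\dots,s'_{j-1}(v),g(v),s'_{j+1}(v),\dots,s'_{n+1}(v))$. To verify that $s^*$ is a solution of $\UCT(\inst K^i)$, fix any constraint $(\tuple z,R^{\inst K^i}_C)$: we have $s'_k(\tuple z)\in R^{\inst K^i}_C$ for every $k\ne j$ and $g(\tuple z)\in R^{\inst I^i}_C$, so applying Claim~\ref{claim:absorbing} at this single constraint yields $s^*(\tuple z)=f(s'_1(\tuple z),\dots,g(\tuple z),\dots,s'_{n+1}(\tuple z))\in R^{\inst K^i}_C$. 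Reading $s^*$ at the preimage $(\tuple y,R)$ gives $s^*(\tuple y)=f(\tuple a_1,\dots,\tuple a_{j-1},\tuple a,\tuple a_{j+1},\dots,\tuple a_{n+1})$, which therefore appears as the value at a preimage of $(\tuple x,R)$ in a solution of $\UCT(\inst K^i)$, placing it in $R'$ of $\inst L^i$ as required.
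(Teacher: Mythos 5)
Your proof is correct and follows the same approach as the paper's: extend each witness $\tuple a_k$ to a solution of $\UCT(\inst K^i)$, extend $\tuple a$ to a solution of $\UCT(\inst I^i)$, and apply $f$ coordinatewise using Claim~\ref{claim:absorbing} at each constraint of the covering tree. You merely make explicit two details the paper glosses over --- aligning all the witnesses at a single preimage of $(\tuple x,R)$ via the endomorphism structure of $\UCT$, and constructing the $\UCT(\inst I^i)$-extension of $\tuple a$ by propagating along the tree using arc-consistency of $\inst I^i$ --- both of which are exactly what the paper's terse ``each $\tuple a_i$ extends to a solution of $\UCT(\inst K^i)$ and $\tuple a$ extends to a solution of $\UCT(\inst I^i)$'' is relying on.
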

    \begin{proof}
      By Claim~\ref{claim:absorbing} the tuple $f(\tuple a_1,\dotsc,\tuple a,\dotsc,\tuple a_{n+1})$
      belongs to the relation in the corresponding constraint in $\inst K^i$. 
      Thus if it extends to a solution of $\UCT(\inst K^i)$ it belongs to $R'$.
      However each $\tuple a^i$ extends to a solution of $\UCT(\inst K^i)$ 
      and $\tuple a$ extends to a solution of $\UCT(\inst I^i)$. 
      By applying the near-unanimity operation $f$ to these extensions~(coordinatewise),
      we obtain the required evaluation.
    \end{proof}

  \subsection{A solution to $\inst K^i$}

    In order to find a solution to $\inst L^i$, we will use Corollary B.2
    from~\cite{Kozik16:circles}.  
    We state it here in a~simplified form using
    the following notation: for subuniverses $A'\subseteq A$, we say that $A'$
    {\em nu-absorbs} $A$ if, for some NU polymorphism $f$, $f(a_1,\ldots,a_n)\in
    A'$ whenever $a_1,\ldots,a_n\in A$ and at most one $a_i$ is in $A\setminus A'$.
    Similarly, if $R'\subseteq R$ are relations preserved by all
    polymorphisms of $\Gamma$ we say $R'$ nu-absorbs $R$, if for some
    near-unanimity operation $f$ taking all arguments from $R'$ except for one
    which comes from $R$ produces a result in $R'$.
    
    \begin{corollary}[Corollary B.2 from~\cite{Kozik16:circles}]
      Let $\inst I$ satisfy (PQ) condition in sets $A_x$.
      Let $\inst I'$ be an arc-consistent instance in sets $A'_x$ on the same set of
      variables as $\inst I$ such that:
      \begin{enumerate}
        \item for every variable $x$ the subuniverse $A'_x$ nu-absorbs $A_x$, and
        \item for every constraint $((x_1,\dotsc,x_n),R')$ in $\inst I'$ there
          is a corresponding constraint $((x_1,\dotsc,x_n),R)$ in $\inst I$ such
          that $R'$ nu-absorbs $R$~(and both respect the NU operation).
      \end{enumerate}
      Then there are subuniverses $A_x''$ of $A_x'$~(for every $x$) such that
      the instance $\inst I''$ obtained from $\inst I'$ by restricting the
      domain of each variable to $A''_x$ and by restricting the constraint
      relations accordingly satisfies the condition (PQ).
    \end{corollary}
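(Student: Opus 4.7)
The plan is to construct the subuniverses $A''_x$ by iteratively shrinking each $A'_x$ while using the near-unanimity operation to lift (PQ)-witnessing realizations from $\inst I$ into $\inst I'$. Fix the $(n+1)$-ary NU polymorphism $f$ that simultaneously witnesses the nu-absorption of each $A'_x$ in $A_x$ and of each relation $R'$ in the corresponding $R$. Since $\inst I$ satisfies (PQ) in the sets $A_x$, for every $a \in A_x$ and every pair of path patterns $p, q$ from $x$ to $x$, there is some $j$ and a realization $r$ of $j(p+q)+p$ in $\inst I$ (inside the $A_x$'s) with $r(\text{start}) = r(\text{end}) = a$. The goal is to produce analogous realizations inside the restricted instance.

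The central technical device is a lifting procedure. Given such a realization $r$ in $\inst I$, together with $n$ realizations $r_1, \ldots, r_n$ of the same path pattern $j(p+q)+p$ inside the current $A'_x$'s (obtained from arc-consistency plus an induction on the length of the pattern), define a new assignment by applying $f$ coordinate-wise. Nu-absorption on constraint relations guarantees that the result respects every constraint of $\inst I'$, and the NU equations $f(y, y, \ldots, y, z) = f(y, \ldots, z, \ldots, y) = y$ force the lifted realization to agree with any value on which $r_1, \ldots, r_n$ already agree. In particular, if the $r_i$ all take value $a' \in A'_x$ at the beginning and end of the path, then so does the lifted realization, provided $a' \in A'_x$ is chosen to be a fixed element for the endpoint variable. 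This is how a (PQ) witness in $\inst I$ pulls back to a (PQ) witness in $\inst I'$.

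To produce the subuniverses $A''_x$, I would iterate the following shrinking step. Let $A^{(0)}_x = A'_x$ and, at stage $k$, let $A^{(k+1)}_x$ consist of those $a' \in A^{(k)}_x$ for which, for every pair of path patterns $p, q$ considered up to some bound, the above lifting produces a realization with start and end equal to $a'$ inside the current subuniverses. Non-emptiness at each stage follows from the nu-absorption of domains: starting from any element of $A_x$ (whose existence is guaranteed by (PQ) for $\inst I$) we can combine $n$ copies of a candidate in $A^{(k)}_x$ with one element of $A_x$ via $f$ to land back in $A^{(k)}_x$, and similarly for the constraint relations. Since $A'_x$ is finite and each $A^{(k)}_x$ is a subuniverse, the sequence stabilizes at a non-empty $A''_x$, and the restriction of $\inst I'$ to these sets is arc-consistent by standard propagation arguments.

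The main obstacle I expect is ensuring the lifting procedure works coherently \emph{along the entire path pattern} and not merely at its endpoints. Although the NU equations pin down coordinates where the $r_i$ agree, the intermediate vertices of a long path pattern are lifted only up to membership in $A'_x$; controlling them so that the aggregate realization simultaneously inherits the recurrence property $a' \in \{a'\} + j(p+q)+p$ requires constructing the $r_i$ recursively along sub-patterns, aligning their endpoints with the absorbing operation at each stage. This recursive alignment, together with verifying that the stabilized $A''_x$ satisfy (PQ) (not merely a bounded approximation of it), is the technical heart of the proof.
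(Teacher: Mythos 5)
The paper does not prove this corollary --- it imports it as Corollary B.2 from \cite{Kozik16:circles} and uses it as a black box. There is therefore no internal proof to compare your sketch against; I can only assess it on its own.

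Your plan contains a genuine gap, and to your credit you flag it yourself in the last paragraph. The lifting step --- apply the NU operation $f$ coordinate-wise to $n$ realizations $r_1,\dots,r_n$ of a path pattern inside the candidate restricted sets, together with one realization $r$ of the same pattern in $\inst I$ coming from (PQ) --- does produce a valid realization in $\inst I'$ by nu-absorption of the relations, and \emph{if} all $n$ of the $r_i$ start and end at the same $a'\in A'_x$ then the NU equations force the lifted realization to start and end at $a'$ as well. But the existence of those $n$ endpoint-aligned realizations inside $\inst I'$ is precisely a (PQ)-type property of $\inst I'$: it is what you are trying to establish, not something that arc-consistency gives you for free. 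Arc-consistency only lets you extend partial realizations one constraint at a time and offers no control over where a long path lands when it returns to $x$, so the claimed ``induction on the length of the pattern'' never gets off the ground. You have noticed this yourself; the point is that this is not a detail to be filled in later but the entire content of the statement.

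The non-emptiness argument for the shrinking sequence is also wrong as stated. You argue that combining $n$ copies of a candidate $a'\in A^{(k)}_x$ with one element of $A_x$ via $f$ lands back in $A^{(k)}_x$; indeed $f(a',\dots,a',a)=a'$ by near-unanimity, so this is trivially true, but it does not show that $a'$ belongs to $A^{(k+1)}_x$, since membership in $A^{(k+1)}_x$ requires passing the recurrence test for \emph{all} pairs $(p,q)$, and that is exactly the circular requirement above. Restricting to patterns ``up to some bound'' is a further concession that is not obviously recoverable: (PQ) quantifies over all path patterns, and finiteness of $D$ alone does not let you truncate. What the cited reference actually supplies --- and what your outline would need to reproduce --- is a structural absorption argument (a minimal absorbing subinstance of a (PQ)-instance again satisfies (PQ)), not a direct recursive alignment of realizations. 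The high-level direction of shrinking via NU-absorption is plausible, but the technical heart is missing, and the justification for non-emptiness as written does not hold.
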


    We will apply the corollary above using $\inst I^i$ for $\inst I$ and $\inst L^i$ for $\inst I'$.
    By our construction, $\inst I^i$ satisfies condition (PQ), and the sets $D_x^i$~
    (which play the role of $A_x$) 
    are subuniverses of $D$.
    On the other hand $\inst L^i$ is arc-consistent and 
    all the relations involved in it are closed under the polymorphisms of $\Gamma$.
    Claim~\ref{claim:absorbingL} shows that each relation $R'$ nu-absorbs the corresponding $R$.
By arc-consistency, the projection of $R'$ on a variable $x$ is the same for each constraint $((x_1,\ldots,x_n),R')$ containing $x$, call the corresponding sets $A'_x$. Since each $R'$
 nu-absorbs $R$, it follows that each $A'_x$ nu-absorbs the corresponding $A_x$.  
    The corollary implies that we can restrict 
    the instance $\inst L^i$ to obtain
    an instance satisfying (PQ).
    By Theorem~\ref{thm:nice} such an instance, and thus both $\inst K^i$ and $\inst L^i$, has a solution.
   
  \subsection{Finishing the proof}
    We choose any solution to $\inst K^i$ and extend the global solution $\sol$
    to $V^i$ according to it. 
    There exists a solution on $V^{\le i}$, 
    because every constraint between two variables from this set is either in $V^{<i}$ or
    defines a two-variable $n$-tree which was used to define a constraint in $\inst K^i$. 
    It remains to prove
    that, with such an extension, condition $(E_{i})$ holds.
    
    Let $t$ be an $n$-tree pattern in $\calI$.  If it has no variables mapped to
    $V^{i}$, then $(E_i)$ follows from $(E_{i-1})$. Assume that it has such variables. By splitting $t$ at internal vertices mapped to $V^i$, it is enough to consider the case when only leaves of $t$ are mapped to $V^i$. Then  
    $t$ defines a constraint $(\tuple x,R)$ in $\inst K^i$.  The solution to
    $\inst K^i$ mapping $\tuple x$ to $\tuple a\in R$ and the evaluation of $t$
    witnessing that $\tuple a$ belongs to $R$ can be taken to satisfy $(E_i)$
    for $t$.
    Theorem~\ref{thm:nicelevels} is proved.

\section{Full proof of Theorem \ref{the:main}(1)}
      \label{sec:correctness-of-algorithm}
      
      In this subsection we prove Propositions \ref{prop:expected-lost-weight}
      and~\ref{prop:consistence}.
      The following equalities, which can be directly verified, are used
      repeatedly in this section: for any subsets $A,B$ of $D$ and any feasible
      solution $\{\bx_a\}$ of the SDP relaxation of $\inst I$ it holds that
      $\|\bx_A\|^2=\bx_A\by_D$ and $\|\by_B-\bx_A\|^2=\bx_{D\setminus
      A}\by_B+\bx_A\by_{D\setminus B}$. 

      \subsection{Analysis of Preprocessing step 2}\label{sec:preproc}

        In some of the proofs it will be required that $\alpha\leq c_0$ for some
        constant $c_0$ depending only on $|D|$.  This can be assumed without
        loss of generality, since we can adjust constants in $O$-notation in
        Theorem~\ref{the:main}(1) to ensure that $\eps\le c_0$ (and we know that
        $\alpha\le \eps$). We will specify the requirements on the choice of
        $c_0$ as we go along.

        \begin{lemma} \label{le:defH} \label{lem:maj2}
        There exists a constant $c>0$ that depends only on $|D|$ such that the
        sets $D_x^{\ell}\subseteq D$, $x\in V$, $1\leq \ell\leq |D|$, obtained
        in Preprocessing step 2, are non-empty and satisfy the following
        conditions:
        \begin{enumerate}
          \item for every $a\in D_x^{\ell}$, $\|\bx_a\|\geq \alpha^{3\ell \kappa}$,
          \item for every $a\not\in D_x^{\ell}$,
            $\|\bx_a\|\leq c\alpha^{3\ell \kappa}$,
          \item for every $a\in D_x^{\ell}$,
            $\|\bx_a\|^2\geq 2 \|\bx_{D\setminus{D_x^{\ell}}}\|^2$,
          \item $D_x^{\ell}\subseteq D_x^{\ell+1}$ (with $D_x^{|D|+1}=D$).
        \end{enumerate}
        \end{lemma}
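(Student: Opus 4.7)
The plan is to establish non-emptiness and the four properties directly from the construction of $r_{x,\ell}$ in Preprocessing step 2. Everything rests on two basic observations: a two-sided bound $\alpha^{3\ell\kappa} \le r_{x,\ell} \le \alpha^{3\ell\kappa}(2|D|)^{|D|/2}$, and the ``gap property'' that no value $\|\bx_b\|$ lies in the half-open interval $[r_{x,\ell}(2|D|)^{-1/2}, r_{x,\ell})$, which is built into the definition. The constant $c_0$ bounding $\alpha$ will be fixed at the end to accommodate all threshold comparisons.

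First I would verify that the search for $r_{x,\ell}$ terminates with exponent $i \le |D|$. Indeed, the $|D|+1$ pairwise disjoint intervals of the form $[\alpha^{3\ell\kappa}(2|D|)^{(i-1)/2},\alpha^{3\ell\kappa}(2|D|)^{i/2})$ for $i=0,\dots,|D|$ cannot all contain some $\|\bx_b\|$, since $|D\,|=|D|$. Pigeonhole then supplies at least one empty interval, and the two-sided bound on $r_{x,\ell}$ is immediate. Property~1 follows because $a\in D_x^\ell$ implies $\|\bx_a\|\ge r_{x,\ell}\ge \alpha^{3\ell\kappa}$, and Property~2 follows because $a\notin D_x^\ell$ implies $\|\bx_a\|<r_{x,\ell}\le (2|D|)^{|D|/2}\alpha^{3\ell\kappa}$, so I would set $c:=(2|D|)^{|D|/2}$, which depends only on $|D|$.

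For Property~3 I would exploit orthogonality: by (\ref{sdp2}), $\|\bx_{D\setminus D_x^\ell}\|^2 = \sum_{b\notin D_x^\ell} \|\bx_b\|^2$. The gap property forces $\|\bx_b\|^2 < r_{x,\ell}^2/(2|D|)$ for every $b\notin D_x^\ell$. Summing over at most $|D|$ such elements yields $\|\bx_{D\setminus D_x^\ell}\|^2 < r_{x,\ell}^2/2 \le \|\bx_a\|^2/2$ for any $a\in D_x^\ell$, as required.

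Property~4 and non-emptiness are where the condition $\alpha\le c_0$ enters. I would pick $c_0=c_0(|D|)$ small enough to guarantee $\alpha^{3\kappa}(2|D|)^{|D|/2}\le 1$ and $\alpha^{3\ell\kappa}(2|D|)^{|D|/2}\le |D|^{-1/2}$ for all $\ell\le |D|$. The first inequality gives $\alpha^{3(\ell+1)\kappa}(2|D|)^{|D|/2}\le \alpha^{3\ell\kappa}$, so even the largest allowed threshold at level $\ell+1$ is dominated by the smallest allowed threshold at level $\ell$, which yields $r_{x,\ell+1}\le r_{x,\ell}$ and hence $D_x^\ell\subseteq D_x^{\ell+1}$. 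For non-emptiness, since $\sum_{a\in D}\|\bx_a\|^2 = \|\bv_0\|^2 = 1$ by (\ref{sdp3})--(\ref{sdp4}), some $a$ satisfies $\|\bx_a\|\ge |D|^{-1/2}\ge r_{x,\ell}$, and thus belongs to $D_x^\ell$. There is no serious obstacle; the only care needed is to make the numerical choice of $c_0$ consistent with the (later) requirements in Section~\ref{sec:preproc}.
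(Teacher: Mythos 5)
Your proof is correct and follows essentially the same approach as the paper, just spelling out the details that the paper dismisses as "straightforward to verify" (the pigeonhole argument giving $i\le|D|$, the two-sided bound on $r_{x,\ell}$, and the gap-based proof of property~3). The only place you diverge slightly is non-emptiness: the paper handles $\ell=1$ by bounding $\sum_{a\notin D_x^1}\|\bx_a\|^2<1$ and propagates to all $\ell$ via property~4, whereas you exhibit directly for every $\ell$ an element with $\|\bx_a\|\ge |D|^{-1/2}\ge r_{x,\ell}$; both require nothing more than an additional constraint on $c_0$ depending only on $|D|$, so the difference is cosmetic.
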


        \begin{proof}
          Let $c=(2|D|)^{(|D|/2)}$. It is straightforward to verify that
          conditions (1)--(3) are satisfied.  Let us show condition (4). Since
          $c$ only depends on $|D|$ we can choose $c_0$ (an upper bound on
          $\alpha$) so that $c\alpha^{3\kappa}<1$. It follows that
          $c\alpha^{3(\ell+1)\kappa}<\alpha^{3\ell \kappa}$. It follows from
          conditions (1) and (2) that $D_x^{\ell}\subseteq D_x^{\ell+1}$.

          Finally, let us show that $D_x^{\ell}$ is non-empty. By condition (4)
          we only need to take care of case $\ell=1$. We have by condition (2)
          that
          \[
            \sum_{a\in {D\setminus D_x^1}} \|\bx_a\|^2\leq |D|c^2\alpha^{6\kappa}
          \]
          Note that we can adjust $c_0$ to also satisfy
          $|D|c^2\alpha^{6\kappa}<1$ because, again, $c$ only depends on $|D|$.
        \end{proof}

      \subsection{Proof of Proposition \ref{prop:expected-lost-weight}}
         
         We will prove that the total weight of constraints removed in each step 0-5 of the algorithm in Section~\ref{sec:algorithm} is $O(\alpha^\kappa)$.

        \begin{lemma} \label{le:step0}
          The total weight of the constraints removed in step $0$ is at most
          $\alpha^{\kappa}$.
        \end{lemma}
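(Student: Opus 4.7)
The plan is to apply a one-line Markov-style argument directly to the SDP objective. Recall that after preprocessing we obtained an SDP solution of objective value at most $2\eps$, and we set $\alpha = \max\{\eps', 1/m^2\}$ where $\eps'$ is the SDP value; in particular $\sum_{C\in\mathcal C} w_C \loss(C) = \eps' \le \alpha$ (up to absorbing the factor of $2$ into the $O$-notation, or equivalently into the choice of $c_0$).

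Let $\mathcal C_0 \subseteq \mathcal C$ denote the set of constraints removed in Step~0, i.e.\ those with $\loss(C) > \alpha^{1-\kappa}$. Then
\[
  \alpha \;\ge\; \sum_{C\in\mathcal C} w_C\loss(C) \;\ge\; \sum_{C\in\mathcal C_0} w_C\loss(C) \;>\; \alpha^{1-\kappa}\sum_{C\in\mathcal C_0} w_C,
\]
so dividing through gives $\sum_{C\in\mathcal C_0} w_C < \alpha^{\kappa}$, as required.

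There is no real obstacle here; the only thing to be a bit careful about is the factor $2$ between $\eps'$ and $2\eps$ and the overall normalisation of $\alpha$, both of which are harmless since the statement is up to a constant (and in fact the statement as written asks for the bound $\alpha^\kappa$, which follows directly because the SDP value is exactly $\eps' \le \alpha$). No probabilistic argument is needed for this step: unlike Steps~1--5, the removal at Step~0 is deterministic.
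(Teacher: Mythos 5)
Your proof is correct and essentially identical to the paper's: both apply a one-line Markov-type bound $\alpha \ge \sum_C w_C\loss(C) \ge \alpha^{1-\kappa}\sum_{C\in\mathcal C_0} w_C$ and divide by $\alpha^{1-\kappa}$. The side remarks about the factor of $2$ are unnecessary since, as you yourself note, the inequality $\eps' \le \alpha$ already gives the clean bound $\alpha^\kappa$ with no hidden constants.
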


        \begin{proof}
          We have
          \[
            \alpha\geq \sum_{C\in{\mathcal C}} w_C \loss(C)\geq
            \sum_{\substack{C\in{\mathcal C} \\ \loss(C)\geq\alpha^{1-\kappa}}} w_C
            \alpha^{1-\kappa},
          \]
          from which the lemma follows.
        \end{proof}

        \begin{lemma} \label{le:path-loss}
          Let $((x,y),R)$ be a constraint not removed in step $0$, and let $A,B$
          be such that $B=A+^{\ell}(x,R,y)$. Then
          $\|\by_B\|^2\geq \|\bx_A\|^2-c\alpha^{(6\ell+6)\kappa}$
          for some constant $c>0$ depending only on $|D|$. The same is also true
          for a~constraint $((y,x),R)$ and $A = B+^\ell (y,R^{-1},x)$.
        \end{lemma}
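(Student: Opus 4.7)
The plan is to bound $\|\bx_A\|^2 - \|\by_B\|^2$ using two ingredients: Lemma~\ref{lem:maj2}, which controls the SDP mass outside the sets $D_x^{\ell+1}$, and the Step~0 survival condition $\loss(C) \leq \alpha^{1-\kappa}$. I will handle the first case in detail; the second will follow by the same three-way decomposition with the roles of $x$ and $y$ swapped, using that $\loss$ of a binary constraint is invariant under inverting its relation.

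First, using the SDP non-negativity $\bx_{D\setminus A}\by_B \geq 0$ together with $\|\bx_A\|^2 = \bx_A\by_D$ and $\|\by_B\|^2 = \bx_D\by_B$, I get
\[
\|\bx_A\|^2 - \|\by_B\|^2 \;=\; \bx_A\by_{D\setminus B} - \bx_{D\setminus A}\by_B \;\leq\; \bx_A\by_{D\setminus B}.
\]
Next I decompose $\bx_A\by_{D\setminus B}$ into the three natural sources of slack: (i) mass escaping outside the level-set at $y$, namely $\bx_A\by_{D\setminus D_y^{\ell+1}}$; (ii) mass of $A$ sitting outside the level-set at $x$, namely $\bx_{A\setminus D_x^{\ell+1}}\by_{D_y^{\ell+1}\setminus B}$; and (iii) genuine constraint violations $\bx_{A\cap D_x^{\ell+1}}\by_{D_y^{\ell+1}\setminus B}$. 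Pieces (i) and (ii) are bounded by $\|\by_{D\setminus D_y^{\ell+1}}\|^2$ and $\|\bx_{D\setminus D_x^{\ell+1}}\|^2$ respectively, after dominating one side of the inner product by $\bv_0$; each of these is at most $|D|c^2\alpha^{(6\ell+6)\kappa}$ by Lemma~\ref{lem:maj2}(2) applied at level $\ell+1$ (the case $\ell=|D|$ is trivial since then $D_x^{|D|+1}=D_y^{|D|+1}=D$). Piece (iii) uses the defining property of $B=A+^\ell(x,R,y)$: for $a\in A\cap D_x^{\ell+1}$ and $b\in D_y^{\ell+1}\setminus B$, one has $(a,b)\notin R$, so this piece is at most $\sum_{(a,b)\notin R}\bx_a\by_b=\loss(C)\leq\alpha^{1-\kappa}$.

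The only delicate point is matching exponents: I need $\alpha^{1-\kappa}\leq \alpha^{(6\ell+6)\kappa}$, which (for $\alpha\leq 1$, a WLOG assumption as in Section~\ref{sec:preproc}) is equivalent to $(6\ell+7)\kappa \leq 1$, i.e.\ $6\ell+7\leq k$. This is precisely why the parameter $k$ was chosen to be $6|D|+7$: the inequality holds for every $\ell\leq|D|$. Combining the three bounds yields $\|\bx_A\|^2 - \|\by_B\|^2 \leq (2|D|c^2 + 1)\alpha^{(6\ell+6)\kappa}$, with the constant depending only on $|D|$. The main obstacle is selecting the three-way split so that Lemma~\ref{lem:maj2} applies at the correct level while simultaneously letting the Step~0 loss bound mesh with the target exponent; this is essentially the calculation that forces the linear dependence of $k$ on $|D|$ in Theorem~\ref{the:main}(1).
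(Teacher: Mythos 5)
Your proof is correct and takes essentially the same route as the paper's: both start from $\|\bx_A\|^2 - \|\by_B\|^2 \leq \bx_A\by_{D\setminus B}$ and then split $\bx_A\by_{D\setminus B}$ using the sets $D_x^{\ell+1},D_y^{\ell+1}$ together with the Step~0 bound $\loss(C)\leq\alpha^{1-\kappa}$, invoking Lemma~\ref{le:defH}(2) and the exponent inequality $(6\ell+6)\kappa\leq 1-\kappa$. The only cosmetic difference is that you partition first by whether $b\in D_y^{\ell+1}$ and then by whether $a\in D_x^{\ell+1}$, whereas the paper partitions by whether $(a,b)\in R$ and then observes that the in-$R$ pairs must escape one of the two level sets; the resulting bounds and the dependence of $k$ on $|D|$ are identical.
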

        \begin{proof}
        Consider the first case, i.e., a~constraint $((x,y),R)$ and $B = A+^\ell
        (x,R,y)$.
        We have
        \[
          \bx_A\by_{D\setminus B} =
            \sum_{\substack{a\in A, b\in D\setminus B \\ (a,b)\not\in R}} \bx_a\by_b +
            \sum_{\substack{a\in A, b\in D\setminus B \\ (a,b)\in R}} \bx_a\by_b.
        \]
        The first term is bounded from above by the loss of constraint
        $((x,y),R)$, and hence is at most $\alpha^{1-\kappa}$, since the
        constraint has not been removed in step $0$.
        Since $B=A+^{\ell}(x,R,y)$ it follows that for every $(a,b)\in R$ such
        that $a\in A$ and $b\in D\setminus{B}$ we have that $a\not\in
        D_x^{\ell+1}$ or $b\not\in D_y^{\ell+1}$. Hence, the second term is at
        most
        \[
        \bx_{D\setminus D_x^{\ell+1}}\by_D+\bx_D\by_{D\setminus
        D_y^{\ell+1}}=\|\bx_{D\setminus D_x^{\ell+1}}\|^2+\|\by_{D\setminus
        D_y^{\ell+1}}\|^2
        \]
        which, by Lemma~\ref{le:defH}(2), is bounded from
        above by $d\alpha^{(6\ell+6)\kappa}$ for some constant $d>0$. From the
        definition of $\kappa$ it follows that $(6\ell+6)\kappa\leq 1-\kappa$,
        and hence we conclude that $\bx_A\by_{D\setminus B}\leq
        (d+1)\alpha^{(6\ell+6)\kappa}$.
        Then, we have that
        \begin{multline*}
          \|\by_B\|^2 = \bx_A\by_B+\bx_{D\setminus A}\by_B
              \geq \bx_A\by_B = \\
            \bx_A\by_D-\bx_A\by_{D\setminus B}
              \geq \|\bx_A\|^2-(d+1)\alpha^{(6\ell+6)\kappa}.
              \quad\qedhere\!\!\!\!\!\!
        \end{multline*}
        \end{proof}

        \begin{lemma} \label{le:step1}
          The expected weight of the constraints removed in step $1$ is
          $O(\alpha^{\kappa})$.
        \end{lemma}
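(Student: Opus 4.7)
The plan is to bound, for each constraint $C=((x,y),R)$ that was not already removed in step~$0$, the probability that $C$ is removed in step~$1$, and then sum over all constraints. A constraint removed in step~$0$ contributes at most $\alpha^\kappa$ in total weight by Lemma~\ref{le:step0}, so we can restrict attention to constraints with $\loss(C)\le \alpha^{1-\kappa}$.

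Fix such a constraint $C$, a level $\ell$, and a pair of subsets $A,B\subseteq D$ with $B=A+^\ell(x,R,y)$. The condition $\vec x_A\not\preceq^\ell\vec y_B$ requires the existence of some integer $j$ with
\[
  \|\vec y_B\|^2 \;<\; r_\ell + j\,\alpha^{(6\ell+4)\kappa} \;\le\; \|\vec x_A\|^2,
\]
which in particular forces $\|\vec x_A\|^2>\|\vec y_B\|^2$. By Lemma~\ref{le:path-loss} the length of the interval $(\|\vec y_B\|^2,\|\vec x_A\|^2]$ is at most $c\,\alpha^{(6\ell+6)\kappa}$. Because $r_\ell$ is chosen uniformly from an interval of length $\alpha^{(6\ell+4)\kappa}$, the translated grid $\{r_\ell+j\alpha^{(6\ell+4)\kappa}:j\in\mathbb{Z}\}$ hits any fixed interval of length $L\le\alpha^{(6\ell+4)\kappa}$ with probability exactly $L/\alpha^{(6\ell+4)\kappa}$. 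Hence the probability that this particular $(A,B)$ witnesses removal at level $\ell$ is at most
\[
  \frac{c\,\alpha^{(6\ell+6)\kappa}}{\alpha^{(6\ell+4)\kappa}} \;=\; c\,\alpha^{2\kappa}.
\]
The reverse condition $B=A+^\ell(y,R^{-1},x)$ is handled symmetrically, using the second half of Lemma~\ref{le:path-loss}.

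A union bound over the at most $2^{|D|}$ subsets $A$, the two directions, and the $|D|$ values of $\ell$ (whose random choices are independent across $\ell$, but independence is not needed for the union bound) yields the per-constraint bound
\[
  \Pr[C\text{ removed in step }1] \;\le\; 2|D|\cdot 2^{|D|}\cdot c\,\alpha^{2\kappa} \;=\; O(\alpha^{2\kappa}).
\]
Since $\sum_{C}w_C=1$, linearity of expectation gives expected total weight removed in step~$1$ at most $O(\alpha^{2\kappa})=O(\alpha^\kappa)$, as required.

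There is no real obstacle here beyond correctly lining up the two scales: Lemma~\ref{le:path-loss} provides a gap of order $\alpha^{(6\ell+6)\kappa}$, and step~$1$ deliberately uses a grid of spacing $\alpha^{(6\ell+4)\kappa}$, the ratio being $\alpha^{2\kappa}$. The factor $2^{|D|}$ from the union bound over $A$ is harmless because $|D|$ is a fixed constant. (In fact the argument gives $O(\alpha^{2\kappa})$, strictly stronger than the claimed $O(\alpha^\kappa)$; the weaker statement is all that will be needed when later steps are combined.)
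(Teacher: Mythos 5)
Your proposal is correct and follows essentially the same route as the paper's own proof: fix a constraint not removed in step~0, fix a level $\ell$ and a pair $A,B$ with $B=A+^\ell(x,R,y)$, use Lemma~\ref{le:path-loss} to bound the gap $\|\vec x_A\|^2-\|\vec y_B\|^2$ by $O(\alpha^{(6\ell+6)\kappa})$, conclude that the random grid of spacing $\alpha^{(6\ell+4)\kappa}$ separates them with probability $O(\alpha^{2\kappa})$, then union bound over $A$, direction, and $\ell$. (Your write-up actually states the removal condition $\vec x_A\not\preceq^\ell\vec y_B$ in the correct orientation, whereas the paper's proof text reverses it—evidently a typo there—so your version is if anything a little cleaner.)
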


        \begin{proof}
          Let $((x,y),R)$ be a constraint not removed in step $0$. We shall see
          that the probability that it is removed in step $1$ is at most $c
          \alpha^{\kappa}$ where $c>0$ is a constant.

          Let $A,B$ be such that $B=A+^{\ell} (x,R,y)$. It follows from Lemma
          \ref{le:path-loss} that $\|\by_B\|^2\geq
          \|\bx_A\|^2-d\alpha^{(6\ell+6)\kappa}$ for some constant $d>0$.
          Hence, the probability that a value $r_\ell$ in step $1$ makes that
          $\by_B\not\preceq^{\ell} \bx_A$ is at most
          \[
            \frac{d\alpha^{(6\ell+6)\kappa}}{\alpha^{(6\ell+4)\kappa}}
              = d\alpha^{2\kappa}\leq d \alpha^{\kappa}.
          \]
          We obtain the same bound if we switch $x$ and $y$, and consider sets
          $A,B$ such that $A=B+^{\ell} R^{-1}$. Taking the union bound for all
          sets $A,B$ and all values of $\ell$ we obtain the desired bound.
        \end{proof}

        \begin{lemma} \label{lem:removing-variable}
          If there exists a~constant $c>0$ depending only on $|D|$ such that for
          every variable $x$, the probability that all constraints involving $x$ are
          removed in step~2, step~3, or step~5 is at most $c\alpha^\kappa$, then
          the total expected weight of constraints removed this way in the corresponding is at most $2c\alpha^\kappa$.
        \end{lemma}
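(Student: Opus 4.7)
The plan is to reduce the lemma to a weighted union bound over variables, exploiting the shared structure of the three steps in question: in each of Step~2, Step~3, and Step~5, whenever the local rounding condition fails at a variable $x$, the algorithm removes \emph{all} constraints in which $x$ participates. Consequently, a constraint $C=((x,y),R)$ is among those removed in these steps if and only if the event ``$x$ (resp.\ $y$) triggers removal in Step~2, 3, or~5'' occurs for at least one of its two variables; denote that event by $E_x$ (resp.~$E_y$).

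First I would introduce the weighted degree $w_x := \sum_{C \ni x} w_C$ for each variable $x$. Since the instance is binary and the weights are normalized so that $\sum_{C \in \calC} w_C = 1$, every constraint contributes its weight to at most two such sums, one for each scope variable. This gives the book-keeping inequality $\sum_{x \in V} w_x \le 2$.

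Next I would swap the order of summation and apply a union bound per constraint. The total expected weight removed in Steps~2, 3, and~5 is bounded by
\[
  \Exp\!\left[\sum_{C=((x,y),R) \in \calC} w_C\,\mathbf{1}[E_x \cup E_y]\right]
  \;\le\; \sum_{C} w_C\bigl(\Pr[E_x] + \Pr[E_y]\bigr)
  \;=\; \sum_{x \in V} w_x\,\Pr[E_x].
\]
By the hypothesis $\Pr[E_x] \le c\alpha^\kappa$ uniformly in $x$, the right-hand side is at most $c\alpha^\kappa \sum_x w_x \le 2c\alpha^\kappa$, which is exactly the claimed bound.

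There is no real obstacle: the argument is a direct double-counting together with a union bound. The only points to double-check are that Steps~2, 3, and~5 really do share the ``remove every constraint containing $x$'' structure (a reading of Section~\ref{sec:algorithm} confirms this in each case), and that degenerate scopes in which both variables of a single constraint coincide are harmless, since they only decrease the contribution to $\sum_x w_x$ and so the bound $\sum_x w_x \le 2$ remains valid.
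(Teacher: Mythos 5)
Your proof is correct and follows essentially the same route as the paper: define the weighted degree $w_x$, apply a per-variable union bound using the hypothesis $\Pr[E_x]\le c\alpha^\kappa$, and conclude via $\sum_x w_x \le 2$. The paper's version is slightly terser (it simply writes the final sum $\sum_x w_x\, c\alpha^\kappa$), but the underlying double-counting argument is identical, and your extra remark about degenerate scopes is a harmless refinement.
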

        \begin{proof}
          Let $w_x$ denote the total weight of the constraints in which $x$
          participates. The expected weight of constraints removed is at most
          \[
            \sum_{x\in V} w_x c \alpha^{\kappa}
              = (\sum_{x\in V} w_x) c\alpha^{\kappa} = 2 c\alpha^{\kappa} 
          \]
          and the lemma is proved.
        \end{proof}

        \begin{lemma}
          The expected weight of the constraints removed in step 2 is
          $O(\alpha^\kappa)$.
        \end{lemma}

        \begin{proof}
          Let $x$ be a~variable. According to Lemma~\ref{lem:removing-variable} it is 
          enough to prove that the probability that we
          remove all constraints involving $x$ in step 2 is at most
          $c\alpha^\kappa$ for some constant $c>0$.  
          Suppose that $A\subseteq B$ are such that $\| \vec x_B \|^2 - \| \vec
          x_A \|^2 = \|\vec x_B - \vec x_A \|^2
          \leq (2n -3) \alpha^{(6\ell+4)\kappa}$. Then the probability that
          one of the bounds of the form $r_\ell + (s_\ell +jm_0)\alpha^{(6\ell
          +4)\kappa}$ separates $\| \vec x_B \|^2$ and $\| \vec x_A \|^2$ is at
          most
          \[
            {(2n-3)}/{m_0} \leq 
            (2n-3)/ (\alpha ^{-2\kappa} - 1) 
          \]
          which is at most $c\alpha^\kappa$ for some constant $c>0$ whenever
          $\alpha^\kappa < 1/2$. The latter can be ensured by adjusting constant $c_0$ from
          Section~\ref{sec:preproc}.
          Taking the union bound for all sets $A,B$ and all values of $\ell$ we
          obtain the desired bound.
          \end{proof}

        \begin{lemma} \label{le:cut}
        There exist constants $c,d>0$ depending only on $|D|$ such that for
        every pair of variables $x$ and $y$ and every $A,B\subseteq D$, the
        probability, $p$, that a unit vector $\bu$ chosen uniformly at random
        cuts $\bx_A$ and $\by_B$ satisfies
        \[
          c\cdot \| \by_B-\bx_A \| \leq p\leq d\cdot \|\by_B-\bx_A \|.
        \]
        \end{lemma}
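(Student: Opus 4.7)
The plan is to reduce the computation of $p$ to the standard Goemans--Williamson hyperplane rounding identity. Define the auxiliary vectors $\bv_1 = \bx_A - \bx_{D\setminus A}$ and $\bv_2 = \by_B - \by_{D\setminus B}$; by the definition of ``cut'' in Step~3 of the algorithm, $\bx_A$ and $\by_B$ are cut by $\bu$ exactly when $\bu\cdot\bv_1$ and $\bu\cdot\bv_2$ have opposite signs.

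The first key step is to verify that $\bv_1$ and $\bv_2$ are unit vectors. Using $\bx_A + \bx_{D\setminus A} = \bv_0$ from~(\ref{sdp3}) we rewrite $\bv_1 = 2\bx_A - \bv_0$. Expanding $\|\bv_1\|^2 = 4\|\bx_A\|^2 - 4\bx_A\cdot\bv_0 + 1$ and applying (\ref{sdp2})--(\ref{sdp3}) to obtain $\bx_A\cdot\bv_0 = \sum_{a\in A}\|\bx_a\|^2 = \|\bx_A\|^2$ yields $\|\bv_1\| = 1$, and symmetrically $\|\bv_2\| = 1$. This is the only nonobvious computation in the whole proof, and it is exactly why the definition of ``cut'' uses $\bx_A - \bx_{D\setminus A}$ rather than $\bx_A$ itself.

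Since $\bv_1$ and $\bv_2$ are unit vectors, the standard hyperplane rounding identity gives $p = \theta/\pi$, where $\theta \in [0,\pi]$ is the angle between $\bv_1$ and $\bv_2$. Next I would observe that
\[
\bv_1 - \bv_2 = (2\bx_A - \bv_0) - (2\by_B - \bv_0) = 2(\bx_A - \by_B),
\]
so comparing the two expressions for $\|\bv_1 - \bv_2\|^2$ -- one as $4\|\by_B - \bx_A\|^2$ and the other, since $\bv_1,\bv_2$ are unit vectors, as $2 - 2\cos\theta = 4\sin^2(\theta/2)$ -- gives the clean identity $\|\by_B - \bx_A\| = \sin(\theta/2)$.

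Finally, the elementary trigonometric bounds
\[
\tfrac{2}{\pi}\sin(\theta/2) \leq \theta/\pi \leq \sin(\theta/2) \quad\text{for } \theta\in[0,\pi]
\]
(the upper bound being Jordan's inequality $\sin x \geq (2/\pi)x$ on $[0,\pi/2]$ applied to $x=\theta/2$, and the lower bound being $\sin x \leq x$) translate immediately into
\[
\tfrac{2}{\pi}\|\by_B - \bx_A\| \leq p \leq \|\by_B - \bx_A\|,
\]
so one can take the (universal) constants $c = 2/\pi$ and $d = 1$, which in particular depend only on $|D|$. There is no real obstacle; the work is front-loaded into the verification that $\bv_1,\bv_2$ have unit norm, after which Goemans--Williamson applies cleanly.
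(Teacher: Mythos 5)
Your proof is correct and takes essentially the same approach as the paper's: relate $p=\theta/\pi$ (Goemans--Williamson) to $\|\by_B-\bx_A\|$ via the angle $\theta$ between the unit vectors $\bx_A-\bx_{D\setminus A}$ and $\by_B-\by_{D\setminus B}$, and then apply elementary trigonometric bounds. Your shortcut $\bv_1-\bv_2=2(\bx_A-\by_B)$ and the clean identity $\|\by_B-\bx_A\|=\sin(\theta/2)$ are equivalent to the paper's $1-\cos\theta=2\|\by_B-\bx_A\|^2$, but yield the explicit constants $c=2/\pi$, $d=1$, which is a modest presentational improvement.
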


        \begin{proof}
          Let $0\leq x\leq 1$ and let $0\le \theta\le \pi$ be an angle such that
          $x=\cos(\theta)$. There exist constants $a,b>0$ such that  $$a\cdot
          \sqrt{1-x}\leq \theta\leq b\cdot \sqrt{1-x}.$$
          Now, if $\theta$ is the angle between $\bx_A-\bx_{D\setminus A}$ and
          $\by_B-\by_{D\setminus B}$ then
          \begin{multline*}
            1-\cos(\theta) =
              1-(\bx_A-\bx_{D\setminus A})(\by_B-\by_{D\setminus B}) = \\
              2(\bx_{D\setminus A}\by_B+\bx_A\by_{D\setminus B}) =
                2\|\by_B-\bx_A \|^2
          \end{multline*}
          Since $p=\theta/\pi$, the result follows.
        \end{proof}

        \begin{lemma} \label{le:maj-step3} \label{le:step4}
        The expected weight of the constraints removed in step 3 is
        $O(\alpha^{\kappa})$.
        \end{lemma}
        \begin{proof}
          According to Lemma \ref{lem:removing-variable}, it is enough to prove
          that the probability that we remove all constraints involving $x$ in
          step 3 is at most $c\alpha^{\kappa}$ for some constant $c$.
          Let $A$ and $B$ be such that $A\cap D_x^{\ell}\neq B\cap D_x^{\ell}$. Let
          $a$ be an element in  symmetric difference $(A\cap
          D_x^{\ell})\triangle (B\cap D_x^{\ell})$. Then we have $\|\bx_B-\bx_A
          \|=\sqrt{\bx_{D\setminus A}\bx_B+\bx_A\bx_{D\setminus B}}\geq
          \|\bx_a\|\geq \alpha^{3\ell \kappa}$, where the last inequality is by
          Lemma~\ref{le:defH}(1).  Then by Lemma \ref{le:cut} the probability
          that $\bx_A$ and $\bx_B$ are not $\ell$-cut is at most
          \[
            (1-c\alpha^{3\ell\kappa})^{m_{\ell}} \leq
              \frac{1}{\exp(c\alpha^{3\ell\kappa}m_{\ell})} \leq 
              \frac{1}{\exp(c\alpha^{-\kappa})} \leq c\alpha^{\kappa}.
          \]
          where $c$ is the constant given in Lemma \ref{le:cut}.
       Taking the union bound for all sets $A,B$ and all values of $\ell$ we
          obtain the desired bound.
        \end{proof}

        \begin{lemma} \label{le:maj-step2} \label{le:step3}
          The expected weight of the constraints removed in step~4 is
          $O(\alpha^{\kappa})$.
        \end{lemma}
        \begin{proof}
          Let $((x,y),R)$ be a constraint not removed in steps $0$ and $1$. We shall
          prove that the probability that it is removed in step 4 is at most
          $c\alpha^{\kappa}$ for some constant $c>0$.

          Fix $\ell$ and $A,B$ such that $B=A+^{\ell}(x,R,y)$. Since the constraint has not been removed in step 1, we   have  $\by_B\preceq^{\ell} \bx_A$.  Since $B=A+^{\ell} p$ we have that
          $\bx_A \by_{D\setminus B}\leq c_1\alpha^{(6\ell+6)\kappa}$, as shown in
          the proof of Lemma~\ref{le:path-loss}. Since $\|\bx_A\|^2=\bx_A(\by_B+
          \by_{D\setminus B})$, it follows that  $\bx_A \by_B\geq
          \|\bx_A\|^2-c_1\alpha^{(6\ell+6)\kappa}$.

          Also, we have $\|\by_B\|^2=(\bx_A\by_B+\bx_{D\setminus A}\by_B)$ is at
          most $\|\bx_A\|^2+\alpha^{(6\ell+4)\kappa}$ because
          $\by_B\preceq^{\ell} \bx_A$.  Using the bound on  $\bx_A \by_B$
          obtained above, it follows that $\bx_{D\setminus A}\by_B$ is at most
          $\alpha^{(6\ell+4)\kappa}+c_1\alpha^{(6\ell+6)\kappa}\leq
          (c_1+1)\alpha^{(6\ell+4)\kappa}$.

          Putting the bounds together, we have that
          \begin{multline*}
            \|\by_B-\bx_A \| = 
              \sqrt{\bx_{D\setminus A}\by_B + \bx_A\by_{D\setminus B}} \leq \\
              \sqrt{c_1\alpha^{(6\ell+6)\kappa} + (c_1+1)\alpha^{(6\ell+4)\kappa}}
              \leq c_2 \alpha^{(3\ell+2)\kappa}
          \end{multline*}
          for some constant $c_2>0$.

          Applying the union bound and Lemma \ref{le:cut} we have that the
          probability that $\bx_A$ and $\by_B$ are $\ell$-cut is at most
          $m_{\ell} dc_2\alpha^{(3\ell+2)\kappa}=O(\alpha^{\kappa})$. We obtain
          the same bound if we switch $x$ and $y$, and take $R^{-1}$ instead
          of~$R$.
          Taking the union bound for all sets $A,B$ and all values of $\ell$ we
          obtain the desired bound.
        \end{proof}

        \begin{lemma}
          The expected weight of the constraints removed in  step 5 is
          $O(\alpha^\kappa)$.
        \end{lemma}

        \begin{proof}
          Again, according to Lemma \ref{lem:removing-variable}, it is enough to
          prove that the probability that we remove all constraints involving
          $x$ in step~5 is at most $c_1\alpha^{\kappa}$ for some constant $c_1$.
          Suppose that $A$, $B$ are such that $\| \vec x_A - \vec x_B \|^2 \leq
          (2n-3)\alpha^{(6\ell+4)\kappa}$. Hence, by Lemma \ref{le:cut} and the union 
          bound the probability that $\vec x_A$ and $\vec x_B$ are $\ell$-cut is at
          most
          \[
            m_\ell d(2n-3)^{1/2}\alpha^{(3\ell+2)\kappa}
              \leq d(2n-3)^{1/2}\alpha^ \kappa
          \]
          where $d$ is the constant from Lemma \ref{le:cut}.
          Taking the union bound for all sets $A$, $B$ and all values of $\ell$,
          we obtain the desired bound.
        \end{proof}

  \subsection{Proof of Proposition \ref{prop:consistence}}

  All patterns appearing in this subsection are in $\calI'$.  The following
  notion will be used several times in our proofs: Let $t$ be a tree and let
  $y$ be one of its nodes. We say that a subtree $t'$ of $t$ is {\em separated
  by vertex $y$} if $t'$ is maximal among all the subtrees of $t$ that contain
  $y$ as a~leaf.

  In the first part of the proof (which consists of the following three
  lemmas), we prove that if we start with a~set $A \subseteq D_x$ and propagate it via
  a~path $p$, from $x$ to $y$, of $n$-tree patterns to obtain a set $B \subseteq D_y$, the value $\|\by_B \|$ cannot be much smaller than $\|\bx_A \|$. The first
  lemma proves that this is the case if we restrict to proprer path patterns.
  
  \begin{lemma} \label{le:notdecreasing} \label{lem:maj18}
  \label{lem:path-loss}
    Let $1\leq \ell\leq |D|$, let $p$ be a~path pattern from $x$ to $y$,  and
    let $A,B$ be such that $B=A+^{\ell} p$. Then $\bx_A\preceq^{\ell} \by_B$,
    and in particular,
    $\|\bx_A \| \leq \|\by_B \| + \alpha^{(6\ell+4)\kappa}$.
  \end{lemma}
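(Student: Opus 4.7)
The plan is to prove this by induction on the number of edges (constraints) in the path pattern $p$. In the base case $p$ is a single edge labelled by a constraint $((x,y),R)$, so $B = A +^{\ell}(x,R,y)$; since this constraint was not removed by Step~1 of the algorithm, the removal criterion defining that step yields $\bx_A \preceq^{\ell} \by_B$ directly (the case where the last edge of $p$ is traversed backwards is handled symmetrically by the $(y,R^{-1},x)$ clause of Step~1).

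For the inductive step, write $p = p' + e$ where $e$ is the final edge of $p$, going into $y$ from some variable $z$, and let $C = A +^{\ell} p'$. The straightforward composition identity $(A +^{\ell} p') +^{\ell} e = A +^{\ell} (p' + e)$ for $\ell$-realizations gives $B = C +^{\ell} e$. The inductive hypothesis yields $\bx_A \preceq^{\ell} \mathbf{z}_C$, the base case applied to $e$ yields $\mathbf{z}_C \preceq^{\ell} \by_B$, so it suffices to show that $\preceq^{\ell}$ is transitive.

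Unfolding the definition, $\bx_A \preceq^{\ell} \by_B$ is equivalent to the statement that the interval $(\|\by_B\|^2, \|\bx_A\|^2]$ is disjoint from the arithmetic progression
\[
  G_\ell \;=\; \{\, r_\ell + j \alpha^{(6\ell+4)\kappa} \mid j \in \mathbb{Z} \,\}
\]
(trivially true whenever $\|\bx_A\|^2 \le \|\by_B\|^2$). Writing $a = \|\bx_A\|^2$, $b = \|\by_B\|^2$, $c = \|\mathbf{z}_C\|^2$, transitivity reduces to a short case analysis on the position of $c$ relative to $a$ and~$b$: if $a \le b$ there is nothing to prove; if $c \le b < a$ then $(b,a] \subseteq (c,a]$ and the first hypothesis applies; if $b < a \le c$ then $(b,a] \subseteq (b,c]$ and the second hypothesis applies; and if $b < c < a$ then $(b,a] = (b,c] \cup (c,a]$ with both pieces missing $G_\ell$ by the two hypotheses.

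Finally, the ``in particular'' conclusion follows from $\bx_A \preceq^{\ell} \by_B$: since consecutive points of $G_\ell$ are at distance exactly $\alpha^{(6\ell+4)\kappa}$, the absence of any grid point in $(\|\by_B\|^2, \|\bx_A\|^2]$ forces $\|\bx_A\|^2 \le \|\by_B\|^2 + \alpha^{(6\ell+4)\kappa}$, and the stated norm bound follows by taking square roots (using $\sqrt{s+t} \le \sqrt{s} + \sqrt{t}$). I expect no substantive technical obstacle here: the polymorphism-theoretic and geometric work is entirely upstream, encoded in Step~1 and in the choice of the thresholds $r_{x,\ell}$; the only point that requires a moment's care is the transitivity case analysis, and even that is purely combinatorial bookkeeping about intervals on the real line.
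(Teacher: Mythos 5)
Your proof takes essentially the same approach as the paper's, which is just the one-liner ``$\preceq^\ell$ is transitive, so reduce to a single constraint, where the claim holds because Step~1 removed all violating constraints.'' Your expansion of this — base case from Step~1, inductive step via the composition identity $(A+^\ell p')+^\ell e = A+^\ell(p'+e)$ and transitivity, and transitivity verified by the interval case analysis — is correct and adds detail the paper leaves implicit.

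One small caveat about the ``in particular'' clause: applying $\sqrt{s+t}\le\sqrt{s}+\sqrt{t}$ to $\|\bx_A\|^2 \le \|\by_B\|^2 + \alpha^{(6\ell+4)\kappa}$ yields $\|\bx_A\| \le \|\by_B\| + \alpha^{(3\ell+2)\kappa}$, not the literal $\|\bx_A\| \le \|\by_B\| + \alpha^{(6\ell+4)\kappa}$ stated in the lemma (and since $\alpha<1$, the former is strictly weaker). So your square-root step does not deliver the bound as written. This appears to be a typo in the lemma statement itself — the squared-norm inequality $\|\bx_A\|^2 \le \|\by_B\|^2 + \alpha^{(6\ell+4)\kappa}$, which you do derive correctly from the grid spacing, is what the downstream arguments (e.g.\ the base case of Lemma~\ref{lem:lost-on-a-tree}) actually need, and the paper's own proof never addresses the ``in particular'' part at all.
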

  \begin{proof}
    Since the relation $\preceq^\ell$ is transitive, it is enough to
    prove the lemma for path patterns containing only one constraint.  But
    this is true, since all the constraints $((x,y),R)$ or $((y,x),R)$
    which would invalidate the lemma have been removed in step 1.
  \end{proof}

  The second lemma proves that the weight of sets that vanish after following a~tree pattern is small.

  \begin{lemma} \label{lem:lost-on-a-tree}
  If $p$ is a~tree pattern with at most $j+1$ leaves starting at $x$, and
  $A \subseteq D_x^{\ell+1}$ is such that $A +^\ell p = \emptyset$ then
  \( \| \vec x_A \|^2 \leq (2j-1) \alpha^{(6\ell  +4) \kappa} \).
  \end{lemma}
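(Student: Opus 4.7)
I would prove the lemma by strong induction on $j$. For the base case $j=1$, the tree pattern $p$ has at most two leaves, so it is either the trivial pattern consisting of the single vertex $x$~(in which case $A +^\ell p = A$ forces $A=\emptyset$ and the claim is trivial) or a~path pattern from $x$ to some endpoint $e$. In the latter case, Lemma~\ref{lem:path-loss} applied with $B = A+^\ell p = \emptyset$ gives $\vec x_A \preceq^\ell \vec e_\emptyset = \vec 0$, and unpacking the definition of $\preceq^\ell$ yields $\|\vec x_A\|^2 \le \alpha^{(6\ell+4)\kappa} = (2\cdot 1 - 1)\alpha^{(6\ell+4)\kappa}$.

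For the inductive step with $j\ge 2$, the tree $p$ has at least three leaves, so it must contain at least one vertex of degree at least three. I pick any leaf $z\ne x$ of $p$ and let $y$ be the branching vertex~(of degree $d=\deg_p(y)\ge 3$) closest to $x$ on the unique path in $p$ from $x$ to $z$. Splitting $p$ at $y$ gives a~path pattern $p_1$ from $x$ to $y$~(a~genuine path by the choice of $y$) together with subtrees $p_2,\dots,p_d$ obtained as the connected components of $p - y$ not containing~$x$, each viewed as a subtree of $p$ rooted at~$y$. Letting $\ell_i$ denote the number of leaves of~$p$ lying in~$p_i$, we have $\sum_{i=2}^d \ell_i = j$, each $\ell_i \ge 1$, and so (since $d \ge 3$) each $p_i$ has $\ell_i + 1 \le j$ leaves, enabling the induction hypothesis to apply.

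Set $B = A +^\ell p_1$. If $B = \emptyset$, the base case argument for~$p_1$ already gives $\|\vec x_A\|^2 \le \alpha^{(6\ell+4)\kappa} \le (2j-1)\alpha^{(6\ell+4)\kappa}$. Otherwise, an~$\ell$-realization of $p$ from some $a \in A$ is precisely a family of $\ell$-realizations of $p_1,\dots,p_d$ sharing a~common value $c \in B$ at~$y$; since $A +^\ell p = \emptyset$, every $c \in B$ must fail to extend across at least one $p_i$ with $i\ge 2$, so $B \subseteq \bigcup_{i=2}^d B_i^c$ where $B_i^c := \{c \in D_y^{\ell+1} : \{c\} +^\ell p_i = \emptyset\}$. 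Using the SDP orthogonality $\vec y_b \cdot \vec y_{b'} = 0$ for $b \ne b'$, one gets $\|\vec y_B\|^2 = \sum_{b \in B}\|\vec y_b\|^2 \le \sum_{i=2}^d \|\vec y_{B_i^c}\|^2$. Since each $p_i$ has at most $j$ leaves and $B_i^c +^\ell p_i = \emptyset$, the induction hypothesis applied with parameter~$\ell_i$ yields $\|\vec y_{B_i^c}\|^2 \le (2\ell_i - 1)\alpha^{(6\ell+4)\kappa}$, and therefore $\|\vec y_B\|^2 \le (2j - (d-1))\alpha^{(6\ell+4)\kappa}$. Finally, Lemma~\ref{lem:path-loss} applied to $p_1$ gives $\vec x_A \preceq^\ell \vec y_B$, which directly implies $\|\vec x_A\|^2 \le \|\vec y_B\|^2 + \alpha^{(6\ell+4)\kappa} \le (2j - d + 2)\alpha^{(6\ell+4)\kappa} \le (2j - 1)\alpha^{(6\ell+4)\kappa}$ because $d \ge 3$.

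The main subtlety is recovering the exact constant $(2j-1)$ rather than $2j$: the split vertex must be a~branching with $d \ge 3$, so that the leaves are strictly distributed across at least two distinct subtrees, and the final path step must use the squared form $\|\vec x_A\|^2 \le \|\vec y_B\|^2 + \alpha^{(6\ell+4)\kappa}$ extracted directly from the definition of $\preceq^\ell$ rather than squaring the weaker norm-version of Lemma~\ref{lem:path-loss}, which would cost an extra $O(\alpha^{(6\ell+4)\kappa})$ term and leave a~$2j$-type bound. The aggregation of branches leverages SDP orthogonality of the $\{\vec y_c\}_{c\in D}$ in an essential way~(since otherwise a~plain triangle inequality would compound losses across branches), and this is the crucial ingredient that lets the inductive bound sum cleanly over the subtrees hanging off $y$.
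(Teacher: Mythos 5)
Your proof is correct and follows essentially the same route as the paper's: induction on the number of leaves, splitting at the first branching vertex from $x$ into the path part $p_1$ and the remaining subtrees, bounding $\|\vec y_B\|^2$ via SDP orthogonality and the inductive hypothesis on each subtree, and closing with Lemma~\ref{lem:path-loss}. The only differences are cosmetic (indexing the subtrees off $y$ as $p_2,\dots,p_d$ with $d=\deg(y)$ instead of $t_1,\dots,t_h$; explicitly treating the degenerate cases $p$ a single vertex and $B=\emptyset$, which the paper leaves implicit; and the slightly overloaded notation $B_i^c$ where $c$ is both a superscript label and a bound variable).
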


  \begin{proof} We will prove the statement by induction on the number of
  leaves. For $j=1$ this follows from Lemma \ref{lem:maj18}.
  Suppose then that $p$ is a~tree pattern with $j+1>2$ leaves and the
  statement is true for any tree pattern with at most $j$ leaves. Choose $y$ to be the first branching
  vertex in the unique path in $p$ from $x$ to the end of $p$,
  and let $p_0,t_1,\dots,t_h$ be all subtrees of $p$ separated by $y$ where $p_0$ 
  is the subtree containing $x$. We turn $p_0$ into a pattern by choosing $x$ as 
  beginning 
  and $y$ as end. Similarly, we turn every $t_i$ into a pattern by choosing $y$ 
  as beginning and any other arbitrary leaf as end.
  Since $y$ is
  a~branching vertex, we have that $h \geq 2$, every $t_i$ has $j_i + 1 <
  j + 1$ leaves, and
  $\sum_{i=1}^h{j_i} = j$.
  Now, let $B_i$ denote the set $\{ a\in D_y^{\ell +1} : \{a\} +^\ell t_i =
  \emptyset \}$.
  Since $j_i < j$, we know that $\| \vec y_{B_i} \|^2 \leq (2j_i -1)
  \alpha^{(6\ell + 4)\kappa}$. Further, for $B = \bigcup_{i=1}^h B_i$, we have, using inductive assumption, that
  \begin{multline*}
    \|\vec y_B\|^2 \leq \sum_{i=1}^h \|\vec y_{B_i}\|^2 \leq
      \sum_{i=1}^h (2j_i -1) \alpha^{(6\ell + 4)\kappa} \\
      = (2j - h) \alpha^{(6\ell + 4)\kappa}
      \leq (2j - 2) \alpha^{(6\ell +4)\kappa}.
  \end{multline*}
  Finally, since $A +^\ell p = \emptyset$ then $A +^\ell p_0 \subseteq B$,
  and the claim follows from Lemma~\ref{lem:maj18}.
  \end{proof}

  The following lemma concludes the first part of the proof by proving that following a~path of $n$-trees pattern cannot decrease the weight of a~set too much.

  \begin{lemma} \label{lem:tree-loss} \label{lem:14}
    Let $1\leq \ell \leq |D|$, let $p$ be a pattern from $x$ to $y$ 
    which is a~path of $n$-trees. If $A,B\subseteq D$ are such
    that $A +^\ell p = B$, then \( \|\vec x_A\|^2 \leq \|\vec y_B\|^2 +
    \alpha^{(6\ell +2)\kappa} \).
  \end{lemma}

  \begin{proof}
    We claim that for any $n$-tree pattern $t$ and $A,B$ with $A+^\ell t = B$,
    we have $\vec x_A \preceq_w^\ell \vec y_B$. Since the relation
    $\preceq_w^\ell$ is  transitive, the lemma is then a~direct consequence.
    For a~contradiction, suppose that $t$ is a~smallest (by inclusion) $n$-tree
    that does not satisfy the claim. Observe that $t$ is not a~path, due to
    Lemma~\ref{lem:path-loss} and the fact that $\vec x_A\preceq^\ell \vec y_B$
    implies $\vec x_A\preceq_w^\ell \vec y_B$. Let $v_x$ and $v_y$ denote the
    beginning and the end vertex of $t$, respectively; and let $v_z$ be the
    last branching vertex that appears on the path connecting $v_x$ and $v_y$,
    and let it be labeled by $z$.  Let $t_1,t_2,p_1,\dots,p_j$ be all subtrees
    of $t$ separated by $v_z$, where $t_1$ and $t_2$ are the subtrees
    containing $v_x$ and $v_y$ respectively.  Let us turn $p_1,\dots,p_j$ into
    patterns by choosing $v_z$ as beginning and any other leaf as end. Note
    that the sum of numbers of the leaves of $p_1,\dots,p_j$ when excluding
    $v_z$ is less than $n-1$ since $t$ was a~path of $n$-trees.  Furthermore,
    choose $x$ and $z$ to be the beginning and end, respectively, of $t_1$ and
    $z$ and $y$ to be the beginning and end, respectively, of $t_2$. Note that
    $t_2$ is a~path.  Further, we know that for $C = A +^\ell t_1$ we have
    $\vec x_A \preceq_w^\ell \vec z_C$ by minimality of $t$. Now, let $C_i = \{
    a \in D_z^{\ell +1} : \{a\}+^\ell p_i = \emptyset \}$. Then by Lemma
    \ref{lem:lost-on-a-tree}, we get that $\|\vec z_{C_i} \|^2 \leq (2j_i
    -1)\alpha^{(6\ell +4)\kappa}$ where $j_i+1$ is the number of leaves of
    $p_i$, therefore for $C' = \bigcup C_i$ we have $\| \vec z_{C'} \|^2 \leq
    \sum\| \vec z_{C_i} \|^2\le (2n - 3) \alpha^{(6\ell +4)\kappa}$ (we used
    that $\sum j_i \leq n-1$).  This implies that $\| \vec z_{C \setminus C'}
    \|^2 \geq \| \vec z_C \|^2 -  (2n - 3) \alpha^{(6\ell +4)\kappa}$, and
    consequently $\vec z_C \preceq_w^\ell \vec z_{C \setminus C'}$ as otherwise
    all constraints containing $z$ would have been removed in step 2.  Finally,
    observe that $B = (C\setminus C') +^\ell t_2$, and therefore $\vec z_{C
    \setminus C'} \preceq^\ell \vec y_B$ and, hence, $\vec z_{C \setminus C'}
    \preceq_w^\ell \vec y_B$. Putting this together with all other derived
    $\preceq^\ell_w$-relations, we get the required claim.
  \end{proof}

  Next, we move to proving the condition $\IPQ{n}$. For that we will need the following technical statement. Intuitively, the statment says that, starting with a~set $A$, if we follow a~circular path of $n$-tree patterns and end up back in the set $A$, then all values from $A$ can be reached by this pattern.

  \begin{lemma} \label{lem:maj19}
    Let $1 \leq \ell \leq |D|$, let $p$ be a~pattern from $x$ to $x$ which
    is a~path of $n$-trees, and let $A,B$ be such that $A +^\ell p = B$.
    If $B \cap D_x^\ell \subseteq A \cap D_x^\ell$ then $A \cap D_x^\ell =
    B \cap D_x^\ell$.
  \end{lemma}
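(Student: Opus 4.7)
The plan is to adapt the strategy of Lemma~\ref{lem:tree-loss}, replacing the weak-precedence relation $\preceq^\ell_w$ by the ``not $\ell$-cut'' relation, and invoking Steps~4 and~5 of the algorithm in place of Steps~1 and~2. The central claim to establish is:

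\emph{Key claim:} for any path of $n$-trees $p$ from $y$ to $z$ in $\calI'$ and any $A,B\subseteq D$ with $B +^\ell p = A$, the vectors $\vec y_B$ and $\vec z_A$ are not $\ell$-cut.

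Given the claim, apply it with $y = z = x$ and our path-of-$n$-trees $p$ (if $p$ is empty then $A=B$ trivially, so assume $p$ is non-empty). We obtain that $\vec x_A$ and $\vec x_B$ are not $\ell$-cut. Since $p$ is a pattern in $\calI'$ passing through $x$, the variable $x$ still has constraints in $\calI'$, so Step~3 of the algorithm did not remove it. Consequently, for any subsets $S_1,S_2\subseteq D$ with $\vec x_{S_1}$ and $\vec x_{S_2}$ not $\ell$-cut, we must have $S_1\cap D_x^\ell = S_2\cap D_x^\ell$. In particular $A\cap D_x^\ell = B\cap D_x^\ell$, which is exactly what the lemma asserts (the hypothesis $A\cap D_x^\ell\subseteq B\cap D_x^\ell$ is actually not needed for the contradiction).

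To prove the claim, observe that ``not $\ell$-cut'' is almost surely an equivalence relation: it means $\sign(\vec u_i\cdot(\vec w_S - \vec w_{D\setminus S}))$ agrees between the two sides for every one of the $m_\ell$ random vectors $\vec u_i$, and equality of signs is transitive. Therefore by composition it suffices to handle a single $n$-tree $t$, by taking a minimal counterexample as in the proof of Lemma~\ref{lem:tree-loss}. If $t$ is a path, each single-edge propagation preserves ``not $\ell$-cut'' by Step~4, and transitivity handles the whole path. If $t$ has a branching vertex, pick the last branching vertex $v_z$ on the path from the beginning of~$t$ to its end, labeled by some variable~$z$, and split $t$ at $v_z$ into the subtree $t_1$ containing the beginning of~$t$, the path $t_2$ ending at the end of~$t$, and the side-subtrees $p_1,\dots,p_j$ (each with $v_z$ a leaf). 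Set $C = B +^\ell t_1$ and $C' = \bigcup_i\{a \in D_z^{\ell+1} : a +^\ell p_i = \emptyset\}$; exactly as in the proof of Lemma~\ref{lem:tree-loss}, Lemma~\ref{lem:lost-on-a-tree} yields $\|\vec z_{C'}\|^2 \leq (2n-3)\alpha^{(6\ell+4)\kappa}$.

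Three successive ``not $\ell$-cut'' relations then chain together by transitivity: (i) $\vec y_B$ with $\vec z_C$, from the inductive hypothesis applied to $t_1$; (ii) $\vec z_C$ with $\vec z_{C\setminus C'}$, from Step~5, whose hypothesis $C, C\setminus C' \subseteq D_z^{\ell+1}$ and $\|\vec z_C - \vec z_{C\setminus C'}\|^2 \leq (2n-3)\alpha^{(6\ell+4)\kappa}$ follows from the bound on $\|\vec z_{C'}\|^2$; and (iii) $\vec z_{C\setminus C'}$ with $\vec z_A$, from the inductive hypothesis applied to $t_2$ (a path, so immediate from Step~4) using $(C\setminus C') +^\ell t_2 = A$. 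Composing yields $\vec y_B$ and $\vec z_A$ not $\ell$-cut, completing the induction. The principal technical point, just as in Lemma~\ref{lem:tree-loss}, is the norm bound on $\vec z_{C'}$ which allows Step~5 to fire; once that is handled via Lemma~\ref{lem:lost-on-a-tree}, the argument is a direct parallel to the preceding lemma.
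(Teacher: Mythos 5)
Your proof is correct, but it takes a genuinely different route from the paper's. The paper proves Lemma~\ref{lem:maj19} by a norm comparison: if $a\in (B\cap D_x^\ell)\setminus A$, then by Lemma~\ref{le:defH}(3) and the hypothesis $A\cap D_x^\ell\subseteq B\cap D_x^\ell$ (which forces $A\setminus B\subseteq D\setminus D_x^\ell$), one gets $\|\vec x_A\|^2 \le \|\vec x_B\|^2 - \tfrac{1}{2}\alpha^{6\ell\kappa}$; this contradicts the lower bound $\|\vec x_A\|^2\ge\|\vec x_B\|^2-\alpha^{(6\ell+2)\kappa}$ from Lemma~\ref{lem:tree-loss}. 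You instead invoke the $\ell$-cut machinery: you prove that propagation along a path of $n$-trees in $\calI'$ preserves ``not $\ell$-cut'' and then appeal to Step~3. This key claim is in fact essentially the argument the paper spells out \emph{inside} the proof of Lemma~\ref{lem:consist} (the induction over the path with Steps 4 and 5, using Lemma~\ref{lem:lost-on-a-tree} at branching vertices), so there is no circularity, and the structure of that inner argument is exactly what you replicate. The interesting trade-off: the paper's route is shorter and stays entirely in the ``$\preceq^\ell$/norm'' world, reusing Lemma~\ref{lem:tree-loss} which is proved anyway; yours is longer but yields a genuine (if modest) strengthening --- the hypothesis $A\cap D_x^\ell\subseteq B\cap D_x^\ell$ is indeed not needed under your approach, whereas the paper's proof uses it in an essential way. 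Two small imprecisions worth fixing: when $p$ is the empty/trivial pattern the correct identity is $A=B\cap D_x^{\ell+1}$ rather than $A=B$ (the conclusion $A\cap D_x^\ell=B\cap D_x^\ell$ still holds since $D_x^\ell\subseteq D_x^{\ell+1}$), and in the key-claim argument you have a name collision, using $z$ both for the endpoint variable of the pattern and for the variable labeling the last branching vertex; these should be distinct names.
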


  \begin{proof}
    For a~contradiction, suppose that there is an~element $a \in (D_x^\ell
    \cap A) \setminus B$. From Lemma~\ref{lem:maj2} we get that
    \(
      \| \vec x_{A\setminus B} \|^2 \geq \|\vec x_a \|^2
      \geq 2 \|\vec x_{D\setminus D_x^\ell} \|^2
      \geq 2 \|\vec x_{B\setminus A} \|^2.
    \)
    Therefore, we have
    \begin{multline*}
      \|\vec x_B\|^2
      = \|\vec x_A\|^2 - \|\vec x_{A\setminus B}\|^2
        + \|\vec x_{B\setminus A}\|^2 \leq
        \|\vec x_A \|^2 - \| \vec x_a \|^2 + (1/2)\| \vec x_a \|^2 \\
        = \|\vec x_A\|^2-(1/2)\|\vec x_a\|^2
      \leq \|\vec x_A\|^2-(1/2)\alpha^{6\ell \kappa}.
    \end{multline*}
    On the other hand, since $p$ is a~path of $n$-trees, we get from the
    previous lemma that \( \|\vec x_B\|^2 \geq \|\vec x_A\|^2  -
    \alpha^{(6\ell +2)\kappa} \). If we adjust constant $c_0$ from
    Section~\ref{sec:preproc} so that $1/2>  \alpha^{2\kappa}$, the above
    inequalities give a~contradiction.
  \end{proof}

  The final lemma of this section proves a~slight generalization of the condition $\IPQ{n}$.

  \begin{lemma}\label{lem:consist}
  Let $x$ be a~variable, let $p$ and $q$ be two patterns from $x$ to $x$
  which are paths of $n$-trees, let $1 \leq \ell \leq |D|$, and let
  $A\subseteq D_x^\ell$. Then there exists some $j$ such that $A \subseteq
  A +^\ell (j(p + q) + p)$.
  \end{lemma}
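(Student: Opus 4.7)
My approach would be to analyse the orbit of $A$ under the iterated map $\phi\colon S\mapsto S +^\ell (p+q)$. Let $A_k = \phi^k(A)$ and $B_j = A_j +^\ell p = A +^\ell (j(p+q)+p)$, so that the statement becomes: some $B_j$ contains $A$. Because $A\subseteq D_x^\ell$ and $B_j\subseteq D_x^{\ell+1}$, this is equivalent to $A\subseteq \tilde B_j := B_j\cap D_x^\ell$. By finiteness of $2^D$ the sequence $(A_j,B_j)$ is eventually periodic; let $d$ denote its period.

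The natural object to track is the interleaved sequence $T_{2k}=A_k\cap D_x^\ell$ and $T_{2k+1}=B_k\cap D_x^\ell$. Each consecutive pair is linked by propagation through $p$ or $q$, both paths of $n$-trees from $x$ to $x$; the composition of any number of these transitions is again a path of $n$-trees from $x$ to $x$. Applying Lemma~\ref{lem:maj19} iteratively then shows that for every $k<k'$ we have $T_{k'}\not\subsetneq T_k$; in particular, no $T_k$ is a strict subset of $T_0=A$, so each $T_k$ either equals $A$ or contains an element outside of $A$. On the eventual cycle, reversing one step is equivalent to going forward through the remaining ones, so the bound holds in both directions and the cycle values of $T$ are pairwise incomparable or equal.

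It remains to argue that $A$ itself occurs at some odd cycle index $2j+1$, which would give $A=T_{2j+1}\subseteq B_j$. Here I would combine two quantitative inputs. First, Lemma~\ref{lem:tree-loss} iterated around the cycle says that $\|\vec x_\cdot\|^2$ oscillates by at most $O(d\alpha^{(6\ell+2)\kappa})$, and $d$ is bounded by a constant depending only on $|D|$; by choosing the constant $c_0$ of Section~\ref{sec:preproc} small enough this is strictly smaller than $\alpha^{6\ell\kappa}$, the weight lower bound on a single element of $D_x^\ell$ given by Lemma~\ref{le:defH}. Second, the ``not $\ell$-cut'' property established in Step~4 of the algorithm is transitive in each of the $m_\ell$ random vectors and hence propagates along any composition of single-edge propagations; combined with Step~3, any pair of sets at the same variable that is not $\ell$-cut must have the same intersection with $D_x^\ell$, and with Step~5 the close-in-SDP-norm sets within $D_x^{\ell+1}$ have the same intersection with $D_x^\ell$. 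Feeding these bounds into the cycle analysis of the previous paragraph forces $T_{2j+1}=A$ at some (and in fact every large) cycle index, completing the proof. The main obstacle is the passage from single-edge guarantees (Steps 3 and 4) to $n$-tree patterns: tree propagation is a subset of the linearised path propagation, so a careful argument is needed to relate the cut behaviour of $\vec x_A$ and $\vec x_{\tilde B_j}$ using the weight-based consistency of Step~5 together with the per-step losses from Lemma~\ref{lem:tree-loss}.
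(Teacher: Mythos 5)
Your set-up matches the paper's: you track the orbit $A_j$, $B_j$ and their traces $T_u$ in $D_x^\ell$, use finiteness to get eventual periodicity, and invoke Lemma~\ref{lem:maj19} to show that $T_{k'}$ is never a strict subset of $T_k$ for $k<k'$. You also correctly identify the role of Steps~3--5 (not-$\ell$-cut plus Step~3 forces equal intersections with $D_x^\ell$), which is essentially the paper's Claim that $T_u$ is \emph{eventually constant}, not merely periodic. But the concluding step is a genuine gap.

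The problem is your sentence ``Feeding these bounds into the cycle analysis $\ldots$ forces $T_{2j+1}=A$.'' Nothing you have established yields this. From $T_{k'}\not\subsetneq T_k$ you only get that the eventual constant value $T_\infty$ is not a \emph{strict subset} of $T_0=A$; it is entirely possible that $T_\infty$ is incomparable to $A$ (missing some element of $A$ while containing something outside $A$). Likewise, eventual constancy says nothing about whether $A$ ever recurs on the cycle; in general the transient part of the orbit need not meet the cycle, and monotonicity of $\phi$ does not rescue this since we do not know $A\subseteq\phi(A)$ a priori. So you have not shown $A\subseteq T_{2j+1}$ (which, note, is all that is needed -- equality is a stronger and also unproven claim).

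The paper closes this gap with a short but essential ``union trick'': take $u=2j+1$ large enough that $T_v=T_\infty$ for all $v\ge u$, and feed the set $A\cup A_{u+1}$ into the map $(\cdot)+^\ell(j(p+q)+p)$. Its image is $A_u\cup A_{2u+1}$, whose trace in $D_x^\ell$ is $T_\infty$, while the source's trace is $A\cup T_\infty$ (using $A\subseteq D_x^\ell$). Now apply Lemma~\ref{lem:maj19} to the composite path-of-$n$-trees pattern: since $T_\infty\subseteq A\cup T_\infty$, the lemma forces $T_\infty=A\cup T_\infty$, i.e.\ $A\subseteq T_\infty\subseteq A_u=A+^\ell(j(p+q)+p)$. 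This one extra application of Lemma~\ref{lem:maj19} to a cleverly chosen superset is the missing idea in your proof.

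Minor remarks: your appeal to SDP weights ``oscillating by at most $O(d\alpha^{(6\ell+2)\kappa})$'' is essentially re-deriving the inner content of Lemma~\ref{lem:maj19}, which you may as well invoke directly; and the passage from single edges to $n$-trees that you flag as a worry is handled in the paper inside the Claim, by tracking the path through the trees' branching vertices with the help of Lemma~\ref{lem:lost-on-a-tree} and Steps~4 and~5, not by a separate argument.
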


  \begin{proof} For every $A$, define $A_0,A_1,\dots$ in the following way.
  If $i = 2j$ is even then $A_i = A +^\ell (j(p+q))$. Otherwise, if $i =
  2j+1$ is odd then $A_i = A +^\ell (j(p+q) + p)$.

  We claim that for every sufficiently large $u$, we have $A_u \cap D_x^\ell =
  A_{u+1} \cap D_x^\ell$.  From the finiteness of $D$, we get that for
  every sufficiently large $u$ there is $u' > u$ such that $A_u = A_{u'}$.
  It follows that there exists some path of $n$-trees pattern $p'$ starting
  and ending in $x$ such that $A_u = A_{u+1} +^\ell p'$.  To prove the claim
  we will show that $\vec x_{A_u}$ and $\vec x_{A_{u+1}}$ are not
  $\ell$-cut. Then the claim follows as otherwise we would have removed all
  constraints involving $x$ in step 3.

  Consider the path $x_1,\dots,x_k$ in $p'$ which connects the beginning
  and end vertices. Further, let $R_i = R$ if the $i$-th edge of the path
  is labeled by $((x_i,x_{i+1}),R)$, and let $R_i = R^{-1}$ if the $i$-th
  edge is labeled by $((x_{i+1},x_i),R)$.  Now define a~sequence
  $B_1,B_2',B_2,\dots,B_m$ inductively by setting $B_1 = A_{u+1}$,
  $B'_{i+1} = B_i +^\ell (x_i,R_i,x_{i+1})$. Further, if $x_{i+1}$ is not
  a~branching vertex, put $B_{i+1} = B_{i+1}'$.  If $x_{i+1}$ is
  a~branching vertex, then let $\Phi_i$ be the set of all subtrees
  separated by $x_{i+1}$ in $p'$, excluding the two such subtrees
  containing the beginning and the end of $p'$. Then, turn each subtree
  in $\Phi_i$ into a pattern by choosing $x_{i+1}$ as beginning and any
  other leaf as end, and define
  \(
    B_{i+1} = \{ b\in B_{i+1}' : \{b\} +^\ell t \neq \emptyset
      \mbox{ for all } t \in \Phi_i \}.
  \)
  As in Lemma~\ref{lem:14}, we know that the sum of the numbers of leaves
  of the trees from $\Phi_i$ that are also leaves of $p'$ is less than
  $n-1$.  Finally, if $\vec x_{A_u}$ are $\vec x_{A_{u+1}}$ are
  $\ell$-cut then, for some $i$, vectors ${\vec x_i}_{B_i}$ and ${\vec
  x_{i+1}}_{B'_{i+1}}$ are $\ell$-cut, or vectors ${\vec x_i}_{B_i}$ and
  ${\vec x_i}_{B_i'}$ are $\ell$-cut. The former case is impossible since
  $B'_{i+1} = B_i +^\ell (x_i,R_i,x_{i+1})$, and hence if $\vec
  x_{B'_{i+1}}$ and $\vec x_{B_i}$ are $\ell$-cut, then either of the
  constraints $((x_i,x_{i+1}), R_i)$ or $((x_{i+1},x_i),R^{-1})$ would
  have been removed in step 4. We now show that the latter case is
  impossible either. Clearly, in this case $x_i$ is a branching vertex.
  For $t\in \Phi_i$, let $C_t =\{ b \in B_i' : \{b\} +^\ell t = \emptyset
  \}$ and let $j_t$ be the number of leaves of $t$. By
  Lemma~\ref{lem:lost-on-a-tree} we get \( \| {\vec x_i}_{C_t} \|^2 \leq
  (2j_t -1) \alpha^{(6\ell + 4)\kappa} \) for any $t \in \Phi_i$, and
  consequently,
  \[
    \| {\vec x_i}_{B_i'} - {\vec x_i}_{B_i} \|^2 \leq
    \sum_{t\in \Phi_i} \| {\vec x_i}_{C_t} \|^2 \leq 
    \sum_{t\in \Phi_i} (2j_t - 1) \alpha^{(6\ell +4)\kappa} \leq
    (2n - 3) \alpha^{(6\ell +4)\kappa}.
  \]
  Therefore, if ${\vec x_i}_{B_i}$ and ${\vec x_i}_{B_i'}$ were
  $\ell$-cut, then all constraints that include $x_i$ would have been
  removed in step 5.  We conclude that indeed we have $A_u \cap D_x^\ell =
  A_{u+1} \cap D_x^\ell$ for all sufficiently large $u$.

  Now, take $u = 2j+1$ large enough. We have that
  \(
    (A \cup A_{u+1}) +^\ell (j(p+q) + p) = A_u \cup A_{2u+1}.
  \)
  And also $(A_u \cup A_{2u+1}) \cap D_x^\ell = A_{u+1} \cap D_x^\ell
  \subseteq (A \cup A_{u+1})\cap D_x^\ell$, hence by Lemma \ref{lem:maj19}
  we get that
  \( (A \cup A_{u+1}) \cap D_x^\ell = A_{u+1} \cap D_x^\ell \).
  Since $A\subseteq D_x^\ell$ by assumption of the lemma, we have
  \(
    A \subseteq A_{u+1} \cap D_x^\ell \subseteq A_{u} = A +^\ell (j(p+q) + p)
  \).
  \end{proof}

Finally, setting $A=\{a\}$ in Lemma~\ref{lem:consist} gives Proposition~\ref{prop:consistence}.

\section{Full proof of Theorem~\ref{the:main}(2)}\label{sec:thm2}
In this section, we prove Theorem~\ref{the:main}(2). A brief outline of the proof is given in Section~\ref{sec:overview-thm2}. 
Throughout this section, $\calI=(V,\calC)$ is a $(1-\eps)$-satisfiable instance of $\CSP\Gamma$ where $\Gamma$ consists of implicational constraints.

\subsection{SDP Relaxation}\label{sec:thm2:SDP}
We use SDP relaxation (\ref{sdpobj})--(\ref{sdp4}) from Section~\ref{sec:SDP}.
For convenience, we write the SDP objective function as follows. 
\begin{multline}
\sum_{C \in \calC\text{ equals } (x = a) \vee (y = b)}
  w_C \vprod{(\mathbf{v}_0 - \mathbf{x}_{a})}{(\mathbf{v}_0 - \mathbf{y}_{b})} \\
  +\frac{1}{2}\sum_{C \in \calC\text{ equals } x = \pi(y)}
    \,\sum_{a\in D} w_C \| \mathbf{x}_{\pi(a)} - \mathbf{y}_{a}\|^2 \\
  {}+ \sum_{C \in \calC\text{ equals }x \in P} w_C \left(\sum_{a\in D\setminus P} \|\mathbf{x}_{a}\|^2 \right).
\label{SDP} 
\end{multline}
This expression equals (\ref{sdpobj}) because of SDP constraint~(\ref{sdp3}).
\iffalse
Minimize
\begin{multline}
\sum_{C \in \calC\text{ equals } (x = a) \vee (y = b)}
  w_C \vprod{(\mathbf{v}_0 - \mathbf{x}_{a})}{(\mathbf{v}_0 - \mathbf{y}_{b})} \\
  {}+\frac{1}{2}\sum_{C \in \calC\text{ equals } x = \pi(y)}
    \,\sum_{a\in D} w_C \| \mathbf{x}_{\pi(a)} - \mathbf{y}_{a}\|^2 \\
  {}+ \sum_{C \in \calC\text{ equals }x \in P} w_C \left(\sum_{a\in D\setminus P} \|\mathbf{x}_{a}\|^2 \right)
\label{SDP} 
\end{multline}
subject to
\begin{align}
&\vprod{ \mathbf{x}_{a}}{ \mathbf{y}_{b}} \geq 0
   & x,y\in V,\ a,b\in D \label{sdp:pos} \\
&\vprod{ \mathbf{x}_{a}}{ \mathbf{x}_{b}} = 0
   & x\in V,\  a,b\in D,\  a\neq b \\
&\sum_{a\in D} \mathbf{x}_{a} = \mathbf{v}_0
   & x\in V \label{sdp:sum-to-one}\\
&\|\mathbf{x}_{a} - \mathbf{z}_{c}\|^2 \leq \|\mathbf{x}_{a} -
\mathbf{y}_{b}\|^2 + \|\mathbf{y}_{b} - \mathbf{z}_{c}\|^2 \hskip-8em\label{sdp:triangle-ineq}\\
   && x,y,z\in V,\ a,b,c\in D\notag\\
&\|\mathbf{v}_0\|^2 = 1.\label{sdp:unitnorm}
\end{align}

This SDP and the SDP we presented in Section~\ref{sec:SDP} are almost identical. Their objective functions are equal, because of constraints~(\ref{sdp3}) and~(\ref{sdp:sum-to-one}). For convenience, we write the objective function differently in SDP (\ref{SDP})--(\ref{sdp:unitnorm}). 
The only difference between the SDPs is the presence of the ``triangle inequalities'' (\ref{sdp:triangle-ineq}). We introduce them, because
we use the algorithm from~\cite{Charikar06:near} for Unique Games,
which assumes that the SDP has triangle inequalities.
\fi

As discussed before (Lemma \ref{le:prep1}) we can assume
that $\eps\geq 1/m^2$ where $m$ is the number of constraints. We solve SDP
with error $\delta=1/m^2$ obtaining a solution, denoted $\mathsf{SDP}$, with objective value $O(\epsilon)$.
Note that every feasible SDP solution satisfies the following conditions.
\begin{align}
  &\| \mathbf{x}_{a} \|^2 = \vproddot{ \mathbf{x}_{a}}{\bigl(\mathbf{v}_0 - \sum_{b\neq a} \mathbf{x}_{b}\bigr)} = \vproddot{ \mathbf{x}_{a}}{\mathbf{v}_0}-\sum_{b\neq a}\vproddot{\mathbf{x}_a}{\mathbf{x}_b}
    = \vprod{ \mathbf{x}_{a}}{ \mathbf{v}_0 },\label{eq:length}\\[1mm]
  &\vprod{ \mathbf{x}_{a} }{ \mathbf{y}_{b}}
    = \vproddot{ \mathbf{x}_{a}}{(\mathbf{v}_0 - \sum_{b'\neq b}\mathbf{y}_{b'})} \label{eq:2SATrequirement} = \|\mathbf{x}_{a}\|^2 - \sum_{b'\neq b}
      \vprod{ \mathbf{x}_{a}}{ \mathbf{y}_{b'}} 
    \leq
      \|\mathbf{x}_{a}\|^2, \\[1mm]
&  \|\mathbf{x}_{a}\|^2 - \|\mathbf{y}_{b}\|^2
    = \|\mathbf{x}_{a} - \mathbf{y}_{b}\|^2 +
      2(\vprod{\mathbf{x}_{a}}{\mathbf{y}_{b}} - \|\mathbf{y}_{b}\|^2)
       \leq \|\mathbf{x}_{a} - \mathbf{y}_{b}\|^2, \label{eq:triangle}\\[1mm]
&  \vprod{(\mathbf{v}_0 - \mathbf{x}_{a})}{( \mathbf{v}_0 - \mathbf{y}_{b})}
    = \vprod{ \sum_{a'\neq a} \mathbf{x}_{a'}}
      {\sum_{b'\neq b} \mathbf{y}_{b'}} \geq 0.\label{eq:positivity}
\end{align}
\subsection{Variable Partitioning Step}\label{sec:2pre}
In this section, we describe the first step of our algorithm. In this step, we assign values to some variables, partition all variables into three groups
${\cal V}_0$, ${\cal V}_1$ and ${\cal V}_2$,
and then split the instance into two sub-instances ${\cal I}_1$ and ${\cal I}_2$.

\paragraph{Vertex Partitioning Procedure.}

\noindent Choose a number $r \in (0, 1/6)$ uniformly at random. Do the following for every variable $x$.
\begin{enumerate}
\item Let $D_x = \{a: 1/2 -r < \vprod{ \mathbf{x}_{a}}{ \mathbf{v}_0}\}$.
\item Depending on the size of $D_x$ do the following:
\begin{enumerate}
\item If $|D_x| = 1$, add $x$ to ${\cal V}_0$ and  assign $x = a$, where $a$ is the single element of $D_x$.
\item If $|D_x| > 1$, add $x$ to ${\cal V}_1$ and restrict $x$ to $D_x$ (see below for details).
\item If $D_x = \varnothing$, add $x$ to ${\cal V}_2$.
\end{enumerate}
\end{enumerate}

Note that each variable in ${\cal V}_0$ is assigned a value; each variable $x$ in ${\cal V}_1$ is restricted to a set $D_x$;
each variable in ${\cal V}_2$ is not restricted.

\begin{lemma}
(i) If $\vprod{ \mathbf{x}_{a}}{ \mathbf{v}_0} > \frac{1}{2} + r$ then $x\in {\cal V}_0$.
(ii) For every $x\in {\cal V}_1$, $|D_x| = 2$.
\end{lemma}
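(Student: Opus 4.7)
The plan is to use two ingredients: the identity $\vprod{\mathbf{x}_a}{\mathbf{v}_0}=\|\mathbf{x}_a\|^2$ from~(\ref{eq:length}), and the fact that $\sum_{a\in D}\|\mathbf{x}_a\|^2=1$, which follows from SDP constraints~(\ref{sdp2})--(\ref{sdp4}) since $1=\|\mathbf{v}_0\|^2=\|\sum_a\mathbf{x}_a\|^2=\sum_a\|\mathbf{x}_a\|^2$. In terms of these, $D_x=\{a:\|\mathbf{x}_a\|^2>1/2-r\}$ is a set of ``heavy'' values whose squared norms form a partition of unity. Both parts of the lemma then reduce to arithmetic with the bounds on $r$.

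For part (i), suppose $\vprod{\mathbf{x}_a}{\mathbf{v}_0}>1/2+r$, i.e.\ $\|\mathbf{x}_a\|^2>1/2+r$. For any $b\ne a$, we have $\|\mathbf{x}_b\|^2\le 1-\|\mathbf{x}_a\|^2<1/2-r$, so $b\notin D_x$. Hence $D_x=\{a\}$, which puts $x$ into $\mathcal{V}_0$ by the partitioning procedure.

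For part (ii), let $x\in\mathcal{V}_1$; by definition $|D_x|>1$, so I only need to rule out $|D_x|\ge 3$. If there were three distinct elements $a,b,c\in D_x$, each would satisfy $\|\mathbf{x}_\cdot\|^2>1/2-r$, giving
\[
1=\sum_{d\in D}\|\mathbf{x}_d\|^2\ge\|\mathbf{x}_a\|^2+\|\mathbf{x}_b\|^2+\|\mathbf{x}_c\|^2>3\bigl(1/2-r\bigr)=3/2-3r.
\]
Since $r<1/6$, the right-hand side exceeds $1$, a contradiction. Hence $|D_x|=2$.

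There is no real obstacle here; the only subtlety is to remember to invoke~(\ref{eq:length}) (which rewrites the threshold on $\vprod{\mathbf{x}_a}{\mathbf{v}_0}$ as a threshold on $\|\mathbf{x}_a\|^2$) and the fact that the squared norms sum to~$1$. The choice of the range $(0,1/6)$ for $r$ is precisely what makes $3(1/2-r)>1$, showing that the parameter range in the partitioning procedure is exactly tuned so that $|D_x|\le 2$ always.
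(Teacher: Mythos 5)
Your proof is correct and follows essentially the same route as the paper: part (i) via the observation that any two of the quantities $\vprod{\mathbf{x}_a}{\mathbf{v}_0}=\|\mathbf{x}_a\|^2$ sum to at most $1$, and part (ii) via the observation that three values each exceeding $1/2-r$ would sum to more than $1$ once $r<1/6$. The paper phrases everything in terms of $\vprod{\mathbf{x}_a}{\mathbf{v}_0}$ while you rewrite via~(\ref{eq:length}) into squared norms, but this is a cosmetic difference.
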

\begin{proof}
(i) Note that for every $b\neq a$, we have $\vprod{ \mathbf{x}_{a}}{ \mathbf{v}_0}  + \vprod{ \mathbf{x}_{b}}{ \mathbf{v}_0} \leq 1$ and,
therefore, $\vprod{\mathbf{x}_{b}}{\mathbf{v}_0} < 1/2 -r$. Hence, $b\notin D_x$. We conclude that $D_x = \{a\}$ and $x\in {\cal V}_0$.

\smallskip
\noindent(ii) Now consider $x \in {\cal V}_1$. We have,
$$
  |D_x| <  3(1/2 -r) |D_x| = 
    3\sum_{a\in D_x} (1/2 -r)
    \leq 3\sum_{a\in D_x} \vprod{ \mathbf{x}_{a}}{ \mathbf{v}_0}  \leq 3.
$$
Therefore,  $|D_x| \leq 2$. Since $x \in {\cal V}_1$, $|D_x| > 1$. Hence $|D_x| = 2$.
\end{proof}

We say that an assignment is admissible if it assigns a value in $D_x$ to every $x \in {\cal V}_1$ and it is consistent with the partial assignment to variables in ${\cal V}_0$.
From now on we restrict our attention only to admissible assignments. We remove those constraints
that are satisfied by every admissible assignment (our algorithm will satisfy all of them). Specifically,
we remove the following constraints:
\begin{enumerate}
\item UG constraints $x=\pi(y)$ with $x, y \in {\cal V}_0$ that are satisfied by the partial assignment;
\item disjunction constraints $(x = a) \vee (y = b)$ such that either $x \in {\cal V}_0$ and $x$ is assigned value $a$, or $y \in {\cal V}_0$ and
$y$ is assigned value $b$;
\item unary constraints $x \in P$ such that either $x \in {\cal V}_0$ and the value assigned to $x$ is in $P$, or $x \in {\cal V}_1$ and $D_x \subseteq P$.
\end{enumerate}
We denote the set of satisfied constraints by ${\cal C}_s$.
Let ${\cal C}'={\cal C}\setminus {\cal C}_s$ be the set of remaining constraints. We now define a set of \textit{violated} constraints --- those constraints that we conservatively assume will not be
satisfied by our algorithm (even though some of them might be satisfied by the algorithm).
We say that a constraint $C\in {\cal C}'$ is violated if at least one of the following conditions holds:
\begin{enumerate}
\item $C$ is a unary constraint on a variable $x \in {\cal V}_0 \cup {\cal V}_1$.
\item $C$ is a disjunction constraint $(x = a) \vee (y = b)$ and either $x \notin {\cal V}_1$, or $y \notin {\cal V}_1$ (or both).
\item $C$ is a disjunction constraint $(x = a) \vee (y = b)$, and $x, y \in {\cal V}_1$, and either $a\notin D_x$, or $b\notin D_y$ (or both).
\item $C$ is a UG constraint $x = \pi(y)$, and at least one of the variables $x$, $y$ is in ${\cal V}_0$.
\item $C$ is a UG constraint $x = \pi(y)$, and one of the variables $x$, $y$ is in ${\cal V}_1$ and the other is in ${\cal V}_2$.
\item $C$ is a UG constraint $x = \pi(y)$, $x, y\in {\cal V}_1$ but $D_x \neq \pi(D_y)$.
\end{enumerate}
We denote the set of violated constraints by ${\cal C}_v$ and let ${\cal C}'' = {\cal C}'\setminus {\cal C}_v$.
\begin{lemma}\label{lem:preproc-violated}
$\Exp[w(\calC_v)] = O(\eps)$.
\end{lemma}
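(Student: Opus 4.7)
The plan is to bound $\Exp[w(\calC_v)]=\sum_C w_C\Pr[C\in\calC_v]$ by showing, for each individual constraint $C$, that $w_C\Pr[C\in\calC_v]$ is at most a constant multiple of the SDP cost of $C$. Summing over $C$ and using that the total SDP cost is $O(\eps)$ then gives the claim. I would handle the three families of constraints (unary, disjunction, UG) separately, corresponding to the six conditions defining $\calC_v$.

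For unary $C=(x\in P)\in \calC_v$ (Type 1) the bound is deterministic: since $C\notin\calC_s$, there exists $a\in D_x\setminus P$, and $a\in D_x$ gives $\|\bx_a\|^2>1/2-r>1/3$, hence $w_C\le 3\cdot(\text{SDP cost of }C)$. For disjunction $C=(x=a)\vee(y=b)\in\calC_v$ (Types 2, 3), the key is to combine two observations about which $r$'s are compatible with $C\in\calC_v$. Since $C\notin\calC_s$, neither $x$ is $\VV{0}$-assigned $a$ nor $y$ is $\VV{0}$-assigned $b$; a straightforward argument (using $\|\bx_c\|^2\le 1-\|\bx_a\|^2$ for $c\ne a$) shows this forces $\|\bx_a\|^2\le 1/2+r$ and $\|\by_b\|^2\le 1/2+r$. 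And Type 2 or 3 forces $\|\bx_a\|^2\le 1/2-r$ or $\|\by_b\|^2\le 1/2-r$. Setting $\alpha=\|\bx_a\|^2$, $\beta=\|\by_b\|^2$ and WLOG $\alpha\le\beta$, the conjunction restricts $r$ to the interval $[\max(0,\beta-1/2),\min(1/6,1/2-\alpha)]$, whose length is at most $(1-\alpha-\beta)^+$. Since $r$ is uniform on $(0,1/6)$, $\Pr[C\in\calC_v]\le 6(1-\alpha-\beta)^+$. The SDP cost per weight equals $1-\alpha-\beta+\vprod{\bx_a}{\by_b}\ge (1-\alpha-\beta)^+$ by (\ref{eq:positivity}), so $w_C\Pr[C\in\calC_v]\le 6\cdot(\text{SDP cost of }C)$.

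For UG $C=(x=\pi(y))$ (Types 4, 5, 6), I would establish the following clean structural observation: $C\notin\calC_v$ iff for every $a\in D$ one has $a\in D_x \Leftrightarrow \pi^{-1}(a)\in D_y$. The three ``good'' cases — both in $\VV{2}$, both in $\VV{0}$ with $C$ satisfied, both in $\VV{1}$ with $D_x=\pi(D_y)$ — each correspond exactly to this equivalence, and conversely the equivalence forces $|D_x|=|D_y|\in\{0,1,2\}$ and lands in one of the good cases. By the union bound,
\[
  \Pr[C\in\calC_v]\le\sum_{a\in D}\Pr\big[a\in D_x\text{ XOR }\pi^{-1}(a)\in D_y\big].
\]
Each XOR event requires the random threshold $1/2-r$ to separate $\|\bx_a\|^2$ and $\|\by_{\pi^{-1}(a)}\|^2$, so (since $r$ is uniform on a length-$1/6$ interval) has probability at most $6\,\big|\|\bx_a\|^2-\|\by_{\pi^{-1}(a)}\|^2\big|$, which by the triangle inequality (\ref{eq:triangle}) is at most $6\|\bx_a-\by_{\pi^{-1}(a)}\|^2$. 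Reindexing $b=\pi^{-1}(a)$ and summing gives $w_C\Pr[C\in\calC_v]\le 6w_C\sum_b\|\bx_{\pi(b)}-\by_b\|^2=12\cdot(\text{SDP cost of }C)$. Adding the three bounds finishes the proof.

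The main obstacle is the case analysis pinning down exactly when $C\in\calC_v$ given the already-removed $\calC_s$ constraints. For unary and disjunction this is a short but delicate enumeration of $\VV{0}/\VV{1}/\VV{2}$ possibilities; the serious creative step is the XOR characterization for UG, which transforms a combinatorial condition involving three different ``good'' partitions into a single per-value threshold event, perfectly matched to the SDP cost via (\ref{eq:triangle}).
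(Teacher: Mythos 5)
Your proof is correct and, for the unary and disjunction cases, follows essentially the same argument as the paper (the disjunction case analysis is a rephrasing of the paper's ``WLOG $\bx_a\cdot\bv_0\ge \by_b\cdot\bv_0$'' case split, and arrives at the same interval of ``bad'' thresholds of length at most $(1-\|\bx_a\|^2-\|\by_b\|^2)^+$). The UG case is where you genuinely deviate: the paper shows that $C\in\calC_v$ forces $D_x\neq\pi(D_y)$, then bounds $\Pr[\pi(D_y)\setminus D_x\neq\varnothing]$ and $\Pr[D_x\setminus\pi(D_y)\neq\varnothing]$ by two separate union bounds over $D$, adding them to get $12\sum_b\|\bx_{\pi(b)}-\by_b\|^2$. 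Your reformulation --- $C\notin\calC_v$ iff for every $a$, $a\in D_x\Leftrightarrow\pi^{-1}(a)\in D_y$ --- collapses the three good cases (both in $\VV{2}$, both in $\VV{1}$ with $D_x=\pi(D_y)$, and $C\in\calC_s$) into one statement about $D_x$ versus $\pi(D_y)$, and then a single union bound over the per-value XOR events gives $6\sum_b\|\bx_{\pi(b)}-\by_b\|^2$. This is a tidier characterization and improves the constant from $24$ to $12$ (irrelevant for the $O(\eps)$ conclusion, but cleaner), and it correctly absorbs the $\calC_s$ case since $x,y\in\VV{0}$ with $a_x=\pi(a_y)$ is exactly $D_x=\pi(D_y)$ with singletons. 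Both approaches match the SDP contribution via (\ref{eq:triangle}) in the same way.
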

\begin{proof}
We analyze separately constraints of each type in $\calC_v$.

\subsubsection*{Unary constraints}
A unary constraint $x \in P$ in $\cal C$ is violated if and only if $x \in {\cal V}_0 \cup {\cal V}_1$ and $D_x \not\subseteq P$
(if $D_x \subseteq P$ then $C\in {\cal C}_s$ and thus $C$ is not violated).
Thus the SDP contribution of each violated constraint $C$ of the form $x \in P$ is at least
$$
  w_C\sum_{a\in D\setminus P} \|\mathbf{x}_{a}\|^2
  \geq w_C\sum_{a\in D_x\setminus P}\|\mathbf{x}_{a}\|^2 
  = w_C\sum_{a\in D_x\setminus P} \vproddot{ \mathbf{x}_{a}}{ \mathbf{v}_0}
  \geq w_C\Bigl(\frac{1}{2} - r\Bigr)\geq \frac{w_C}{3}.
$$
The last two inequalities hold because the set $D_x\setminus P$ is nonempty;
$\mathbf{x}_{a} \mathbf{v}_0 \geq 1/2-r$ for all $a\in D_x$ by the construction; and $r\leq 1/6$.
Therefore, the expected total weight of violated unary constraints is at most $3\,\mathsf{SDP}= O(\eps)$.

\subsubsection*{Disjunction constraints}
Consider a disjunction constraint $(x = a) \vee (y = b)$.  Denote it by $C$.
Assume without loss of generality that $\vprod{ \mathbf{x}_{a}}{ \mathbf{v}_0 } \geq  \vprod{ \mathbf{y}_{b}}{ \mathbf{v}_0}$.
Consider several cases. If $\vprod{ \mathbf{x}_{a}}{ \mathbf{v}_0 } > 1/2 +r$ then $x \in {\cal V}_0$ and $x$ is assigned value $a$. Thus, $C$ is satisfied.
If $\vprod{ \mathbf{x}_{a}}{ \mathbf{v}_0 } \leq 1/2 +r$ and $\vprod{ \mathbf{y}_{b}}{ \mathbf{v}_0} > 1/2 -r$ then we
also have $\vprod{ \mathbf{x}_{a}}{ \mathbf{v}_0} > 1/2 - r$ and hence $x, y\in {\cal V}_0 \cup {\cal V}_1$ and $a\in D_x$, $b\in D_y$.
Thus, $C$ is not violated (if at least one of the variables $x$ and $y$ is in ${\cal V}_0$, then $C\in {\cal C}_s$; otherwise, $C \in {\cal C}'$). Therefore, $C$ is violated only if
$$\vprod{ \mathbf{x}_{a}}{ \mathbf{v}_0 } \leq 1/2 +r \text{ and }  \vprod{ \mathbf{y}_{b}}{ \mathbf{v}_0} \leq 1/2 -r ,$$
or equivalently,
\begin{equation}
   \vprod{ \mathbf{x}_{a}}{ \mathbf{v}_0 } - 1/2 \leq r \leq 1/2 - \vprod{ \mathbf{y}_{b}}{ \mathbf{v}_0}.\label{eq:bad-event}
\end{equation}
Since we choose $r$ uniformly at random in $(0,1/6)$, the probability density of the random variable $r$ is 6 on $(0,1/6)$. Thus the probability of event (\ref{eq:bad-event}) is at most
\begin{multline*}
  6 \max\Bigl(\bigl(( 1/2 - \vprod{ \mathbf{y}_{b}}{ \mathbf{v}_0}\bigr)-
    \bigl(\vprod{ \mathbf{x}_{a}}{ \mathbf{v}_0 } - 1/2)\bigr),0\Bigr) \\
  = 6\max\Bigl( \vprod{(\mathbf{v}_0 - \mathbf{x}_{a})}{(\mathbf{v}_0 - \mathbf{y}_{b})}
    -\vprod{\mathbf{x}_{a}}{\mathbf{y}_{b}},0\Bigr) \\
  {}\stackrel{\text{by (\ref{sdp1}) and (\ref{eq:positivity})}}{\leq}
    6\vprod{(\mathbf{v}_0 - \mathbf{x}_{a})}{(\mathbf{v}_0 - \mathbf{y}_{b})}.
\end{multline*}
The expected weight of violated constraints is at most,
$$ \sum_{\substack{C\in\calC  \text{ equals }\\(x = a) \vee (y = b)}}
 6w_C\vprod{(\mathbf{v}_0 - \mathbf{x}_{a})}{(\mathbf{v}_0 - \mathbf{y}_{b})}
\leq 6\,\mathsf{SDP} = O(\eps). $$

\subsubsection*{UG constraints} Consider a UG constraint $x = \pi(y)$. Assume that it is violated.
Then $D_x \neq \pi(D_y)$ (note that if $x$ and $y$ do not lie in the same set ${\cal V}_t$ then $|D_x| \neq |D_y|$ and
necessarily $D_x \neq \pi(D_y)$).
Thus, at least one of the sets $\pi(D_y) \setminus D_x$ or $D_x \setminus \pi(D_y)$ is not empty.
If  $\pi(D_y) \setminus D_x\neq \varnothing$, there exists $c \in \pi(D_y) \setminus D_x$. We have,
\begin{align*}
\Prob{c \in \pi(D_y) \setminus D_x} 
&\leq \Prob{\|\mathbf{y}_{\pi^{-1}(c)}\|^2 > 1/2 - r \text{ and } \|\mathbf{x}_{c}\|^2 \leq 1/2 - r} \\
&= \Prob{1/2 - \|\mathbf{y}_{\pi^{-1}(c)}\|^2 <  r \leq 1/2 -\|\mathbf{x}_{c}\|^2}
\\
&\leq 6\max(\|\mathbf{y}_{\pi^{-1}(c)}\|^2 - \|\mathbf{x}_{c}\|^2, 0) 
\stackrel{\text{by (\ref{eq:triangle})}}{\leq} 6\|\mathbf{y}_{\pi^{-1}(c)} - \mathbf{x}_{c}\|^2.
\end{align*}
By the union bound, the probability that there is $c \in \pi(D_y) \setminus D_x$ is at most
\[
  6\sum_{c\in D}\|\mathbf{y}_{\pi^{-1}(c)} - \mathbf{x}_{c}\|^2=6\sum_{b\in D}\|\mathbf{y}_{b} - \mathbf{x}_{\pi(b)}\|^2
.\]
Similarly, the probability that there is $b \in D_x \setminus \pi(D_y)$ is at most $6\sum_{b\in D}\|\mathbf{y}_{b} - \mathbf{x}_{\pi(b)}\|^2$. 
Therefore, the probability that the constraint $x=\pi(y)$ is violated is 
upper bounded by $12\sum_{b\in D}\|\mathbf{y}_{b} - \mathbf{x}_{\pi(b)}\|^2$. Consequently, the total expected weight of all violated UG constraints is at most
\begin{multline*}\sum_{C \in \calC\text{ equals } x = \pi(y)} w_C \left(12 
    \,\sum_{b\in D} \| \mathbf{x}_{\pi(b)} - \mathbf{y}_{b}\|^2\right) \\
    =
24 \times \left(\frac12 \sum_{C \in \calC\text{ equals } x = \pi(y)} w_C  
    \,\sum_{b\in D} \| \mathbf{x}_{\pi(b)} - \mathbf{y}_{b}\|^2\right)
    \leq 24\,\mathsf{SDP} = O(\eps),
\end{multline*}    
here we bound the value of the SDP by the second term of the objective 
function~(\ref{SDP}).
\end{proof}

\noindent We restrict our attention to the set ${\cal C}''$. There are four types of constraints in ${\cal C}''$.
\begin{enumerate}
  \item disjunction constraints $(x = a) \vee (y = b)$ with $x, y \in {\cal V}_1$ and $a\in D_x$, $b\in D_y$;
  \item UG constraints $x = \pi (y)$ with $x, y \in {\cal V}_1$ and $D_x = \pi(D_y)$;
  \item UG constraints $x = \pi (y)$ with $x, y \in {\cal V}_2$;
  \item unary constraints $x \in P$ with $x\in{\cal V}_2$.
\end{enumerate}
Denote the set of type 1 and 2 constraints by ${\cal C}_1$, and type 3 and 4 constraints by ${\cal C}_2$.
Let ${\cal I}_1$ be the sub-instance of ${\cal I}$ on variables ${\cal V}_1$ with constraints ${\cal C}_1$ in which every variable
$x$ is restricted to $D_x$, and ${\cal I}_2$ be the sub-instance of ${\cal I}$ on variables ${\cal V}_2$ with constraints ${\cal C}_2$.

In Sections~\ref{solve-i1} and~\ref{solve-i2}, we  show how to solve ${\cal I}_1$ and ${\cal I}_2$, respectively. The total weight of constraints violated by our solution
for ${\cal I}_1$  will be at most $O(\sqrt{\eps})$;
 The total weight of constraints violated by our solution for ${\cal I}_2$  will be at most
$O(\sqrt{\eps\log{|D|} })$.
Thus the combined solution will satisfy a subset of the constraints of weight at least $1 - O(\sqrt{\eps \log{|D|}})$.

\subsection{Solving Instance \texorpdfstring{${\cal I}_1$}{I1}}\label{solve-i1}

In this section, we present an algorithm that solves instance ${\cal I}_1$. The algorithm assigns
values to variables in ${\cal V}_1$ so that the total weight of violated constraints is at most $O(\sqrt{\eps})$.

\begin{lemma}\label{lem:consistentUG}
There is a randomized algorithm that, given instance ${\cal I}_1$ and the SDP solution $\{\bx_a\}$ for ${\cal I}$, finds a set of UG constraints ${\cal C}_{\text{bad}}\subseteq {\cal C}_1$  and values $\alpha_x,\beta_x \in D_x$ for every $x\in {\cal V}_1$
such that the following conditions hold.
\begin{itemize}
\item $D_x = \{\alpha_x, \beta_x\}$.
\item for each UG constraint $x=\pi(y)$ in ${\cal C}_1\setminus {\cal C}_{\text{bad}}$, we have $\alpha_x =\pi(\alpha_y)$ and $\beta_x =\pi(\beta_y)$.
\item 
The expected weight of ${\cal C}_{\text{bad}}$ is $O(\sqrt{\eps}).$
\end{itemize}
\end{lemma}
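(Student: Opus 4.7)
I would define the required labeling and bad set by a hyperplane rounding applied to the differences of the two surviving SDP vectors at each $\VV{1}$-variable. Fix an arbitrary total order on $D$; for every $x\in\VV{1}$ write $D_x=\{a_x,b_x\}$ with $a_x<b_x$. Since $\bx_{a_x}\perp \bx_{b_x}$ by~(\ref{sdp2}) and $\|\bx_{a_x}\|^2,\|\bx_{b_x}\|^2>1/2-r\ge 1/3$ by the construction of $\VV{1}$ in Section~\ref{sec:2pre}, we have $\|\bx_{a_x}-\bx_{b_x}\|^2=\|\bx_{a_x}\|^2+\|\bx_{b_x}\|^2\ge 2/3$, so the unit vector $\mathbf{w}_x:=(\bx_{a_x}-\bx_{b_x})/\|\bx_{a_x}-\bx_{b_x}\|$ is well defined. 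Sample a standard Gaussian vector $\bu$; set $(\alpha_x,\beta_x):=(a_x,b_x)$ if $\bu\cdot\mathbf{w}_x>0$ and $(b_x,a_x)$ otherwise, and take $H=\{\bx_{\alpha_x}:x\in\VV{1}\}$, $\bar H=\{\bx_{\beta_x}:x\in\VV{1}\}$. Place a UG constraint $C:x=\pi(y)$ of $\calC_1$ into $\calC_{bad}$ exactly when $\alpha_x\neq\pi(\alpha_y)$. The second bullet of the lemma then holds for $C\notin\calC_{bad}$: since $\pi$ is a permutation of $D_y$ onto $D_x$ (recall $D_x=\pi(D_y)$ for $C\in\calC_1$), the equality $\alpha_x=\pi(\alpha_y)$ forces $\beta_x=\pi(\beta_y)$.

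For the expectation bound, fix a UG constraint $C:x=\pi(y)$ in $\calC_1$ and let $\sigma_C\in\{+1,-1\}$ be $+1$ if $\pi$ carries $(a_y,b_y)$ to $(a_x,b_x)$ and $-1$ otherwise. Then $C\in\calC_{bad}$ is exactly the cut event of hyperplane rounding applied to the unit vectors $\mathbf{w}_x$ and $\sigma_C\mathbf{w}_y$, so $\Prob{C\in\calC_{bad}}=\theta_C/\pi$ where $\theta_C$ is the angle between them; the elementary inequality $\theta\le(\pi/\sqrt{2})\sqrt{1-\cos\theta}$ on $[0,\pi]$ further bounds this by $(1/\sqrt{2})\sqrt{1-\cos\theta_C}$. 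Writing $c_C=\tfrac{1}{2}\sum_{b\in D}\|\bx_{\pi(b)}-\by_b\|^2$ for the SDP contribution of $C$ from~(\ref{SDP}), the whole game is to show $1-\cos\theta_C=O(c_C)$. I treat the case $\sigma_C=+1$ (the other is symmetric after swapping $a_y\leftrightarrow b_y$) and set $L_1=\|\bx_{a_x}-\by_{a_y}\|$, $L_2=\|\bx_{b_x}-\by_{b_y}\|$, so that $L_1^2+L_2^2\le 2c_C$. Expanding each $L_i^2$ gives the identities $\bx_{a_x}\by_{a_y}=\tfrac{1}{2}(\|\bx_{a_x}\|^2+\|\by_{a_y}\|^2-L_1^2)$ and $\bx_{b_x}\by_{b_y}=\tfrac{1}{2}(\|\bx_{b_x}\|^2+\|\by_{b_y}\|^2-L_2^2)$.

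The main obstacle is controlling the off-diagonal sum $\bx_{a_x}\by_{b_y}+\bx_{b_x}\by_{a_y}$ quadratically in $L_1,L_2$; a naive $|\bx_a\cdot\by_b|\le\|\bx_a\|\|\by_b\|$ only gives linear decay and would produce a suboptimal $\eps^{1/4}$ loss. The required estimate follows by applying~(\ref{sdp3}) twice together with positivity~(\ref{sdp1}): dotting $\sum_{c\in D}\by_c=\bv_0$ with $\bx_{a_x}$ gives $\bx_{a_x}\by_{a_y}+\bx_{a_x}\by_{b_y}\le\|\bx_{a_x}\|^2$, while dotting $\sum_{c\in D}\bx_c=\bv_0$ with $\by_{a_y}$ gives $\bx_{a_x}\by_{a_y}+\bx_{b_x}\by_{a_y}\le\|\by_{a_y}\|^2$; adding these and substituting the identity for $\bx_{a_x}\by_{a_y}$ yields $\bx_{a_x}\by_{b_y}+\bx_{b_x}\by_{a_y}\le L_1^2$. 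Combining the estimates gives
\[
(\bx_{a_x}-\bx_{b_x})\cdot(\by_{a_y}-\by_{b_y})\ge \tfrac{1}{2}\bigl(\|\bx_{a_x}-\bx_{b_x}\|^2+\|\by_{a_y}-\by_{b_y}\|^2\bigr)-(L_1^2+L_2^2),
\]
which together with AM--GM and the uniform lower bound $\|\bx_{a_x}-\bx_{b_x}\|,\|\by_{a_y}-\by_{b_y}\|\ge\sqrt{2/3}$ yields $1-\cos\theta_C=O(L_1^2+L_2^2)=O(c_C)$. Hence $\Prob{C\in\calC_{bad}}=O(\sqrt{c_C})$, and by the Cauchy--Schwarz inequality
\[
\E{w(\calC_{bad})}\le\sum_{C\in\calC_1}w_C\cdot O(\sqrt{c_C})\le O\!\brc{\sqrt{\textstyle\sum_C w_C}\cdot\sqrt{\textstyle\sum_C w_C c_C}}=O(\sqrt{\mathsf{SDP}})=O(\sqrt{\eps}),
\]
as claimed.
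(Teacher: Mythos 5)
Your proof is correct and follows essentially the same route as the paper: hyperplane rounding on the normalized differences $\mathbf{w}_x=(\bx_{a_x}-\bx_{b_x})/\|\bx_{a_x}-\bx_{b_x}\|$, declaring a UG constraint bad when the two rounded signs disagree (the Min Uncut/Goemans--Williamson picture), and then a Goemans--Williamson-type estimate converting the $O(\eps)$ SDP cost into an $O(\sqrt{\eps})$ bad weight via Cauchy--Schwarz. The only substantive difference is how the key quadratic estimate $1-\cos\theta_C = O(\|\bx_{\pi(\alpha_y)}-\by_{\alpha_y}\|^2+\|\bx_{\pi(\beta_y)}-\by_{\beta_y}\|^2)$ is obtained: the paper's derivation is purely geometric (it assumes WLOG $\|\mathbf A\|\ge\|\mathbf B\|$ and uses only Cauchy--Schwarz to show $\|\mathbf A/\|\mathbf A\|-\mathbf B/\|\mathbf B\|\|^2\le\tfrac32\|\mathbf A-\mathbf B\|^2$, which holds for any two vectors of length $\ge\sqrt{2/3}$), whereas you exploit the SDP simplex and positivity constraints (\ref{sdp1}), (\ref{sdp3}) to bound the cross-terms $\bx_{a_x}\by_{b_y}+\bx_{b_x}\by_{a_y}$ quadratically. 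Both derivations are about the same length; the paper's is slightly more robust since it does not rely on the positivity structure of the SDP, but yours is a perfectly valid alternative. (Small cosmetic note: from the single bound $\bx_{a_x}\by_{b_y}+\bx_{b_x}\by_{a_y}\le L_1^2$ you actually get the constant $\tfrac32(L_1^2+L_2^2)$ rather than $L_1^2+L_2^2$ in your displayed inequality; the cleaner constant requires also deriving the symmetric bound $\le L_2^2$. Either way the $O(\cdot)$ conclusion is unaffected.)
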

\begin{proof}
We use the algorithm of Goemans and Williamson for Min Uncut~\cite{Goemans95:improved} to find values~$\alpha_x$, $\beta_x$.
Recall that in the Min Uncut problem (also known as Min 2CNF$\equiv$ deletion)  we are given a set of Boolean variables and a set of constraints of the form
$(x = a) \leftrightarrow (y = b)$. Our goal is to find an assignment that minimizes the weight of unsatisfied
constraints.

Consider the set of UG constraints in ${\cal C}_1$. Since $|D_x| = 2$ for every variable $x \in {\cal V}_1$,
each constraint $x = \pi(y)$ is equivalent to the Min Uncut constraint $(x =\pi(a)) \leftrightarrow (y =a)$
where $a$ is an element of $D_y$ (it does not matter which of the two elements of $D_y$ we choose). We define
an SDP solution for the Goemans---Williamson relaxation of Min Uncut as follows.
Consider $x \in {\cal V}_1$. Denote the elements of $D_x$ by $a$ and $b$ (in any order). Let
$$\mathbf{x}^*_{a} = \frac{\mathbf{x}_{a}-\mathbf{x}_{b}}{\|\mathbf{x}_{a}-\mathbf{x}_{b}\|} \quad\text{and}\quad \mathbf{x}^*_{b} = - \mathbf{x}^*_{a} = \frac{\mathbf{x}_{b}-\mathbf{x}_{a}}{\|\mathbf{x}_{a}-\mathbf{x}_{b}\|}.$$
Note that the vectors $\mathbf{x}_a$ and $\mathbf{x}_b$ are nonzero orthogonal vectors, and, thus,
$\|\mathbf{x}_a - \mathbf{x}_b\|$ is nonzero. The vectors $\mathbf{x}^*_{a}$ and $\mathbf{x}^*_{b}$ are unit vectors. Now we apply the random hyperplane rounding scheme of
Goemans and Williamson:
We choose a random hyperplane and let $H$ be one of the half-spaces the hyperplane divides the space into. Note that
for every $x$ exactly one of the two antipodal vectors in $\{\mathbf{x}^*_{a}:a\in D_x\}$ lies in $H$ (almost surely). Define $\alpha_x$ and $\beta_x$
so that $\mathbf{x}^*_{\alpha_x}\in H$ and $\mathbf{x}^*_{\beta_x}\notin H$. Let ${\cal C}_{\text{bad}}$ be the set of UG constraints such that
$\alpha_x \neq \pi(\alpha_y)$, or equivalently $\mathbf{x}_{\pi(\alpha_y)}^* \notin H$.

Values $\alpha_x$ and $\beta_x$ satisfy the first condition. If a UG constraint
$x=\pi(y)$ is in ${\cal C}_1\setminus {\cal C}_{\text{bad}}$, then $\alpha_x =
\pi(\alpha_y)$; also since $D_x = \pi(D_y)$, $\beta_x = \pi(\beta_y)$. So the
second condition holds.  Finally, we verify the last condition. 
Consider a
constraint  $x = \pi(y)$. Let $\mathbf{A} = \mathbf{x}_{\pi(\alpha_y)} -
\mathbf{x}_{\pi(\beta_y)}$ and $\mathbf{B} = \mathbf{y}_{\alpha_y} -
\mathbf{y}_{\beta_y}$. Since $x\in {\cal V}_1$, we have
$\|\mathbf{x}_{\pi(\alpha_y)}\|^2 > 1/2 -r > 1/3$ and
$\strut\|\mathbf{x}_{\pi(\beta_y)}\|^2 > 1/3$.  Hence $\|\mathbf{A}\|^2 =
\|\mathbf{x}_{\pi(\alpha_y)}\|^2 + \|\mathbf{x}_{\pi(\beta_y)}\|^2  > 2/3$.
Similarly, $\|\mathbf{B}\|^2> 2/3$. Assume first that $\|\mathbf{A}\| \geq \|\mathbf{B}|$.
Then,
\begin{multline*}
\|\mathbf{x}^*_{\pi(\alpha_y)} - \mathbf{y}^*_{\alpha_y}\|^2 =
\left\|\frac{\mathbf{A}}{\|\mathbf{A}\|} -
\frac{\mathbf{B}}{\|\mathbf{B}\|}\right\|^2  \\
= 2 - \frac{2\vprod{\mathbf{A}}{\mathbf{B}}}{\|\mathbf{A}\| \|\mathbf{B}\|}
=\frac{2}{\|\mathbf{B}\|^2}
\times \left(
 \|\mathbf{B}\|^2 - 
  \frac{\|\mathbf{B}\|}{\|\mathbf{A}\|}
  \, \vprod{\mathbf{A}}{\mathbf{B}}\right).
\end{multline*}
We have $2\Bigl(\|\mathbf{B}\|^2 - \frac{\|\mathbf{B}\|}{\|\mathbf{A}\|}\,\vprod{\mathbf{A}}{\mathbf{B}} \Bigr) \leq \|\mathbf{A} -\mathbf{B}\|^2$, since
$$
\|\mathbf{A} -\mathbf{B}\|^2 - 2\Bigl(\|\mathbf{B}\|^2 - \frac{\|\mathbf{B}\|}{\|\mathbf{A}\|}\,\vprod{\mathbf{A}}{\mathbf{B}} \Bigr)
=
\Bigl(\|\mathbf{A}\|-\|\mathbf{B}\|\Bigr)\Bigl(\|\mathbf{A}\|+\|\mathbf{B}\| - \frac{2 \vprod{\mathbf{A}}{\mathbf{B}}}{\|\mathbf{A}\|}\Bigr) \geq 0,
$$
because $\|\mathbf{A}\| \geq \vprod{\mathbf{A}}{\mathbf{B}}/\|\mathbf{A}\|$ and $\|\mathbf{B}\| \geq \vprod{\mathbf{A}}{\mathbf{B}}/\|\mathbf{A}\|$.
We conclude that
\begin{multline*}
\|\mathbf{x}^*_{\pi(\alpha_y)} - \mathbf{y}^*_{\alpha_y}\|^2 
  \leq \frac{\|\mathbf{A}-\mathbf{B}\|^2}{\|\mathbf{B}\|^2} \leq \frac{3}{2}\|\mathbf{A}-\mathbf{B}\|^2 \\
= \frac{3}{2}\, \|(\mathbf{x}_{\pi(\alpha_y)} -\mathbf{y}_{\alpha_y}) -
  (\mathbf{x}_{\pi(\beta_y)} - \mathbf{y}_{\beta_y}) \|^2 \\
\leq 3\, \|\mathbf{x}_{\pi(\alpha_y)} -\mathbf{y}_{\alpha_y}\|^2 + 3\,
  \|\mathbf{x}_{\pi(\beta_y)} - \mathbf{y}_{\beta_y}\|^2.
\end{multline*}
If $\|\mathbf{A}\|\leq \|\mathbf{B}\|$, we get the same bound on 
$\|\mathbf{x}^*_{\pi(\alpha_y)} - \mathbf{y}^*_{\alpha_y}\|^2$
by swapping $\mathbf{A}$ and $\mathbf{B}$ in the formulas above.
Therefore,
\[
  \sum_{\substack{C\in{\cal C}_1 \\ \text{ is of the form } \\ x = \pi(y)}}
  w_C\|\mathbf{x}^*_{\pi(\alpha_y)} - \mathbf{y}^*_{\alpha_y}\|^2\leq 3\,
  \mathsf{SDP} = O(\eps).
\]
The analysis by Goemans and Williamson shows that the expected total weight of the constraints of the form  $x = \pi(y)$ such that
$$\mathbf{x}^*_{\pi(\alpha_y)}\notin H \text{ and } \mathbf{y}^*_{\alpha_y}\in H$$
is at most $O(\sqrt{\eps})$, see Section 3 in~\cite{Goemans95:improved} for the original analysis or Section 2 in survey~\cite{MM17:cspsurvey} for presentation more closely aligned with our notation.
Therefore, the expected total weight of ${\cal C}_{\text{bad}}$ is  $O(\sqrt{\eps})$.
\end{proof}

We remove all constraints ${\cal C}_{\text{bad}}$ from ${\cal I}_1$ and obtain an instance ${\cal I}_1'$ (with the domain for each variable $x$ now restricted to $D_x$).
We  construct an SDP solution $\{\mathbf{\tilde x}_{a}\}$ for ${\cal I}_1'$. We let
$$\mathbf{\tilde x}_{\alpha_x} = \mathbf{x}_{\alpha_x} \quad\text{and}\quad \mathbf{\tilde x}_{\beta_x} = \mathbf{v}_0 - \mathbf{x}_{\alpha_x}.$$
We define $S_{x\alpha_x} = \{\alpha_x\}$ and $S_{x\beta_x} = D\setminus S_{x\alpha_x}$. Since $\mathbf{\mathbf{\tilde x}}_{\beta_x} = \mathbf{v}_0 - \mathbf{x}_{\alpha_x} = 
\mathbf{x}_{S_{x\beta_x}}$,
we have,
\begin{equation}
\mathbf{\tilde x}_{a} = \mathbf{x}_{S_{xa}} 
\quad \text{for every } a\in D_x.\label{eq:redef-x-a}
\end{equation}
 Note that $a\in S_{xa}$ for every $a\in D_x$.

\begin{lemma}\label{lem:new-SDP-feasible}
The solution $\{\mathbf{\tilde x}_{a}\}$ is a feasible solution for SDP relaxation (\ref{sdpobj})--(\ref{sdp4}) for ${\cal I}'_1$. Its cost is $O(\eps)$.
\end{lemma}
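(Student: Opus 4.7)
The plan is to verify the four SDP axioms directly and then compare the new objective to the old one constraint-by-constraint, showing that the total cost is bounded by a constant multiple of $\mathsf{SDP} = O(\eps)$. The key identity throughout will be $\mathbf{\tilde x}_a = \mathbf{x}_{S_{xa}}$ from~(\ref{eq:redef-x-a}), which lets me reduce every new inner product to a sum of old inner products.

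For feasibility, condition~(\ref{sdp4}) is unchanged and condition~(\ref{sdp3}) is immediate since $\mathbf{\tilde x}_{\alpha_x} + \mathbf{\tilde x}_{\beta_x} = \mathbf{x}_{\alpha_x} + (\mathbf{v}_0 - \mathbf{x}_{\alpha_x}) = \mathbf{v}_0$. Orthogonality~(\ref{sdp2}) follows from $\mathbf{\tilde x}_{\alpha_x}\mathbf{\tilde x}_{\beta_x} = \mathbf{x}_{\alpha_x}(\mathbf{v}_0 - \mathbf{x}_{\alpha_x}) = \|\mathbf{x}_{\alpha_x}\|^2 - \|\mathbf{x}_{\alpha_x}\|^2 = 0$, using~(\ref{eq:length}). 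For non-negativity~(\ref{sdp1}), I would expand $\mathbf{\tilde x}_a\mathbf{\tilde y}_b = \sum_{a'\in S_{xa}, b'\in S_{yb}} \mathbf{x}_{a'}\mathbf{y}_{b'}$ and apply the original~(\ref{sdp1}) to each summand.

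For the cost bound I would handle the two constraint types separately. For a disjunction constraint $(x=a)\vee(y=b)$ in $\calC_1$ (so $a\in D_x$, $b\in D_y$), I would use that $a \in S_{xa}$ gives $D\setminus S_{xa} \subseteq D\setminus\{a\}$, so by non-negativity of $\mathbf{x}_{a'}\mathbf{y}_{b'}$ one has
\[
(\mathbf{v}_0-\mathbf{\tilde x}_a)(\mathbf{v}_0-\mathbf{\tilde y}_b)
= \mathbf{x}_{D\setminus S_{xa}}\mathbf{y}_{D\setminus S_{yb}}
\leq \mathbf{x}_{D\setminus\{a\}}\mathbf{y}_{D\setminus\{b\}}
= (\mathbf{v}_0-\mathbf{x}_a)(\mathbf{v}_0-\mathbf{y}_b),
\]
so the disjunction contribution only decreases. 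For a UG constraint $x=\pi(y)$ in $\calC_1\setminus\calC_{bad}$, Lemma~\ref{lem:consistentUG} guarantees $\alpha_x=\pi(\alpha_y)$ and $\beta_x=\pi(\beta_y)$, and one computes $\mathbf{\tilde x}_{\pi(\alpha_y)} - \mathbf{\tilde y}_{\alpha_y} = \mathbf{x}_{\alpha_x}-\mathbf{y}_{\alpha_y}$ and $\mathbf{\tilde x}_{\pi(\beta_y)} - \mathbf{\tilde y}_{\beta_y} = -(\mathbf{x}_{\alpha_x}-\mathbf{y}_{\alpha_y})$. Hence the new cost contribution of this UG constraint equals $w_C\,\|\mathbf{x}_{\pi(\alpha_y)}-\mathbf{y}_{\alpha_y}\|^2$, which is at most twice the single summand $\frac{w_C}{2}\|\mathbf{x}_{\pi(\alpha_y)}-\mathbf{y}_{\alpha_y}\|^2$ appearing in the original objective~(\ref{SDP}) for this constraint.

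Summing over all constraints of $\calI_1'$ then bounds the new objective by at most $2\,\mathsf{SDP} = O(\eps)$. I do not anticipate any real difficulty: the two careful bookkeeping points are (a) remembering that the containment $D\setminus S_{xa}\subseteq D\setminus\{a\}$ (rather than equality) is what lets the disjunction cost drop, and (b) using precisely the consistency property of $\alpha_x,\beta_x$ produced by Lemma~\ref{lem:consistentUG} so that only two of the $|D|$ terms in the original UG sum are needed to dominate the new cost.
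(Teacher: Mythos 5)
Your proof is correct and follows essentially the same route as the paper's: verify the SDP axioms by direct computation using $\mathbf{\tilde x}_a = \mathbf{x}_{S_{xa}}$, bound disjunction contributions by the containment $D\setminus S_{xa}\subseteq D\setminus\{a\}$ together with $\bx_{a'}\by_{b'}\geq 0$, and bound UG contributions by showing the two surviving terms collapse to $2\|\mathbf{x}_{\pi(\alpha_y)}-\mathbf{y}_{\alpha_y}\|^2$, which is at most twice the corresponding summand in the original objective. No gaps; the argument matches the paper's proof step for step.
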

\begin{proof}
We verify that the SDP solution is feasible.
First, we have $\sum_{a\in D_x} \mathbf{\tilde x}_{a} = \mathbf{v}_0$ and
$$\vprod{ \mathbf{\tilde x}_{\alpha_x}}{ \mathbf{\tilde x}_{\beta_x}} =
 \vproddot{ \mathbf{x}_{\alpha_x}}{ (\mathbf{v}_0 - \mathbf{x}_{\alpha_x})} = \vprod{ \mathbf{x}_{\alpha_x}}{ \mathbf{v}_0 } - \|\mathbf{x}_{\alpha_x}\|^2 = 0.$$
Then for $a\in D_x$ and $b\in D_y$, we have
$\vprod{ \mathbf{\tilde x}_{a}}{ \mathbf{\tilde y}_{b}} = \sum_{a'\in S_{xa}, b'\in S_{yb}} \vprod{ \mathbf{x}_{a'}}{ \mathbf{y}_{b'}} \geq 0$.
We now show that the SDP cost is $O(\eps)$.

First, we consider disjunction constraints.
We prove that the contribution of each constraint $(x = a) \vee (y =b)$ to the SDP for ${\cal I}_1'$ is at most its contribution to the SDP for $\cal I$.
That is,
\begin{equation}
\vprod{(\mathbf{v}_0 -  \mathbf{\tilde x}_{a})}{(\mathbf{v}_0 - \mathbf{\tilde y}_{b})}
\leq \vprod{ (\mathbf{v}_0 - \mathbf{x}_{a})}{(\mathbf{v}_0 - \mathbf{y}_{b})}. \label{ineq:SDP}
\end{equation}

Observe that
$(\mathbf{v}_0 -  \mathbf{\tilde x}_{a}) =
\mathbf{x}_{D\setminus S_{xa}}$, 
$(\mathbf{v}_0 -  \mathbf{\tilde y}_{b}) =
\mathbf{y}_{D\setminus S_{yb}}$, 
$(\mathbf{v}_0 -  \mathbf{x}_{a}) =
\mathbf{x}_{D\setminus \{a\}}$, and 
$(\mathbf{v}_0 -  \mathbf{y}_{b}) =
\mathbf{y}_{D\setminus \{b\}}$. 
Then, $D\setminus S_{xa}\subseteq D\setminus\{a\}$ and
$D\setminus S_{yb}\subseteq D\setminus\{b\}$. Therefore,
by (\ref{sdp1}),
\begin{multline*}
\vprod{(\mathbf{v}_0 -  \mathbf{\tilde x}_{a})}{(\mathbf{v}_0 - \mathbf{\tilde y}_{b})} 
= 
\sum_{(a',b')\in (D\setminus S_{xa})\times (D\setminus S_{yb})} 
\vprod{\mathbf{x}_{a'}}{\mathbf{y}_{b'}}
\leq \\ \leq 
\sum_{(a',b')\in (D\setminus \{a\})\times (D\setminus 
\{b\})}\vprod{\mathbf{x}_{a'}}{\mathbf{y}_{b'}}
=
\vprod{ (\mathbf{v}_0 - \mathbf{x}_{a})}{(\mathbf{v}_0 - \mathbf{y}_{b})}. 
\end{multline*}

\iffalse
Observe that
\begin{multline*}
\vprod{(\mathbf{v}_0 - \mathbf{x}_{a})}{(\mathbf{v}_0 - \mathbf{y}_{b})}
- \vprod{(\mathbf{v}_0 -  \mathbf{\tilde x}_{a})}{(\mathbf{v}_0 - \mathbf{\tilde y}_{b})}
= \\
\vprod{(\mathbf{v}_0 - \mathbf{\tilde x}_{a})}{(\mathbf{\tilde y}_{b} - \mathbf{y}_{b})}
+
\vprod{{(\mathbf{\tilde x}_{a}- \mathbf{x}_{a})}(\mathbf{v}_0 - \mathbf{\tilde
y}_{b})} \\
{}+
\vprod{(\mathbf{\tilde x}_a - \mathbf{x}_{a})}{(\mathbf{\tilde y}_b - \mathbf{y}_{b})}
.
\end{multline*}
We prove that all terms on the right hand side are nonnegative, and thus inequality (\ref{ineq:SDP}) holds.
Using the identities (\ref{eq:redef-x-a}) and $\sum_{a'\in D} \mathbf{x}_{a'} = \mathbf{v}_{0}$ as well as
the inequality
$\mathbf{x}_{a'}\mathbf{y}_{b'}\geq 0$ (for all $a',b'\in D$), we get
$$
\vprod{(\mathbf{v}_0 - \mathbf{\tilde x}_{a})}{(\mathbf{\tilde y}_{b} - \mathbf{y}_{b})} =
\sum_{\substack{a'\in D\setminus S_{xa}\\b'\in S_{yb}\setminus \{b\}}}  \mathbf{x}_{a'}\mathbf{y}_{b'} \geq 0.
$$
Similarly, $\vprod{{(\mathbf{\tilde x}_{a}- \mathbf{x}_{a})}(\mathbf{v}_0 - \mathbf{\tilde y}_{b})}\geq 0$, and
$$
\vprod{(\mathbf{\tilde x}_a - \mathbf{x}_{a})}{(\mathbf{\tilde y}_b - \mathbf{y}_{b})}=
\sum_{\substack{a'\in S_{xa}\setminus \{a\}\\b'\in S_{yb}\setminus \{b\}}}  \mathbf{x}_{a'}\mathbf{y}_{b'} \geq 0.
$$
\fi
Now we consider UG constraints. The contribution of a UG constraint $x = \pi(y)$ in ${\cal C}_1 \setminus {\cal C}_{\text{bad}}$ to
the SDP for ${\cal I}_1'$ equals the weight of the constraint times the following expression.
\begin{multline*}
\|\mathbf{\tilde x}_{\pi(\alpha_y)} - \mathbf{\tilde y}_{\alpha_y}\|^2 +
\|\mathbf{\tilde x}_{\pi(\beta_y)} - \mathbf{\tilde y}_{\beta_y}\|^2 = 
\|\mathbf{\tilde x}_{\alpha_x} - \mathbf{\tilde y}_{\alpha_y}\|^2 +
\|\mathbf{\tilde x}_{\beta_x} - \mathbf{\tilde y}_{\beta_y}\|^2 = \\
\|\mathbf{x}_{\alpha_x} - \mathbf{y}_{\alpha_y}\|^2 +
\|(\mathbf{v}_0 - \mathbf{x}_{\alpha_x}) -
(\mathbf{v}_0-\mathbf{y}_{\alpha_y})\|^2 = \\
2 \|\mathbf{x}_{\alpha_x} - \mathbf{y}_{\alpha_y}\|^2 = 2 \|\mathbf{x}_{\pi(\alpha_y)} - \mathbf{y}_{\alpha_y}\|^2.
\end{multline*}
Thus, by the choice of $\alpha_x$ and $\alpha_y$ (Lemma \ref{lem:consistentUG}) the contribution is at most twice the contribution of the constraint to the SDP for ${\cal I}$.
We conclude that the SDP contribution of all the constraints in ${\cal C}_1 \setminus {\cal C}_{\text{bad}}$ is at most $2\,\mathsf{SDP} = O(\eps)$.
\end{proof}

Finally, we note that ${\cal I}_1'$ is a Boolean 2-CSP instance. We round solution $\{\mathbf{\tilde x}_{a}\}$ using the rounding procedure by Charikar et al. for Boolean 2-CSP~\cite{Charikar09:near} (when $|D| = 2$, the SDP relaxation used in~\cite{Charikar09:near} is equivalent to SDP (\ref{sdpobj})--(\ref{sdp4})).
We get an assignment of variables in ${\cal V}_1$. The weight of constraints in  ${\cal C}_{1}\setminus {\cal C}_{\text{bad}}$ violated by this assignment is  at most $O(\sqrt{\eps})$.
Since $w({\cal C}_{\text{bad}}) = O(\sqrt{\eps})$, the weight of constraints in ${\cal C}_{1}$ violated by the assignment is at most $O(\sqrt{\eps})$.

\subsection{Solving Instance \texorpdfstring{${\cal I}_2$}{I2}}\label{solve-i2}

Instance ${\cal I}_2$ is a unique games instance with additional unary constraints.
We restrict the SDP solution for $\cal I$ to variables $x \in {\cal V}_2$ and get a solution for the
unique game instance ${\cal I}_2$.  Note that since we do not restrict the domain
of variables $x\in \VV{2}$ to $D_x$, the SDP solution we obtain is feasible.
The SDP cost of this solution is at most $\mathsf{SDP}$.
We round this SDP solution using a variant of the algorithm by Charikar et al.~\cite{Charikar06:near} that is 
presented in Section 3 of the survey~\cite{MM17:cspsurvey};
this variant of the algorithm does not need
$\ell_2^2$-triangle-inequality SDP constraints.
Given a $(1-\eps)$-satisfiable instance of Unique Games, the algorithm finds a solution with the weight of violated constraints at
most $O(\sqrt{\eps \log{|D|}})$.
We remark that paper~\cite{Charikar06:near} considers only unique game instances. However, in~\cite{Charikar06:near},
we can restrict the domain of any variable $x$ to a set $S_x$ by setting $\mathbf{x}_a = 0$ for
$a\in D\setminus S_x$. Hence, we can model unary constraints as follows. For every unary constraint
$x\in P$, we introduce a dummy variable $z_{x,P}$ and restrict its domain to the set $P$. Then we replace
each constraint $x\in P$ with the equivalent constraint $x = z_{x,P}$.
The weight of the constraints violated by the obtained solution is at most  $O(\sqrt{\eps \log{|D|}})$.

Finally, we combine results  proved in Sections~\ref{sec:2pre}, \ref{solve-i1}, and~\ref{solve-i1} and obtain Theorem~\ref{the:main}(2).

\section{Conclusion}
We have proved that every CSP with an NU polymorphism admits a robust algorithm
with polynomial loss. Thus a small gap remains in our understanding of such
algorithms -- between the sufficient condition of having an NU polymorphism and
a necessary condition $\mathrm{SD}(\vee)$. We remark that closing this gap is
likely to require a structural result, similar to our
Theorem~\ref{thm:nicelevels}, which would resolve the conjecture of Larose and
Tesson~\cite{Larose09:universal} and characterise CSPs solvable by linear
propagation. Such a result would immediately imply a characterisation of CSPs
in the complexity class NL~\cite{Dalmau05:linear,Larose09:universal} (and hence
also L~\cite{Kazda18:nperm}), modulo complexity-theoretic assumptions.

  \bibliographystyle{alpha}

\end{document}